\theoremstyle{definition}
\newtheorem{theorem}{Theorem}[section]
\newtheorem{definition}[theorem]{Definition}
\newtheorem{lemma}[theorem]{Lemma}
\newtheorem{proposition}[theorem]{Proposition}
\newtheorem{claim}[theorem]{Claim}
\newtheorem{example}[theorem]{Example}
\newcommand{\figlabel}[1]{\label{fig:#1}}
\newcommand{\figref}[1]{Fig.~\ref{fig:#1}}
\newcommand{\seclabel}[1]{\label{sec:#1}}
\newcommand{\secref}[1]{Section~\ref{sec:#1}}
\newcommand{\thmlabel}[1]{\label{thm:#1}}
\newcommand{\thmref}[1]{Theorem~\ref{thm:#1}}
\newcommand{\proplabel}[1]{\label{prop:#1}}
\newcommand{\propref}[1]{Proposition~\ref{prop:#1}}
\newcommand{\lemlabel}[1]{\label{lem:#1}}
\newcommand{\lemref}[1]{Lemma~\ref{lem:#1}}
\newcommand{\algolabel}[1]{\label{algo:#1}}
\newcommand{\algoref}[1]{Algorithm~\ref{algo:#1}}
\newcommand{\myBox}[2]{{\setlength{\fboxsep}{0pt}\colorbox{#1}{$\strut#2$}}}
\newcommand{\ucomment}[1]{}
\newcommand{\calQ}{\mathcal{Q}}
\newcommand{\calA}{\mathcal{A}}
\newcommand{\calT}{\mathcal{T}}
\newcommand{\calH}{\mathcal{H}} 
\renewcommand{\emptyset}{\varnothing}
\newcommand{\set}[1]{\{#1\}}
\newcommand{\setpred}[2]{\{#1 \,|\, #2\}}
\newcommand{\nats}{\mathbb{N}}
\newcommand{\natsp}{\nats_{>0}}
\newcommand{\estimateAndSample}{\ensuremath{\mathsf{estimateAndSample}}}
\newcommand{\reduce}{\ensuremath{\mathsf{reduce}}}
\newcommand{\round}{\ensuremath{\mathsf{roundUp}}}
\newcommand{\union}{\ensuremath{\mathsf{union}}}
\newcommand{\median}{\mathsf{median}}
\newcommand{\mom}{\mathsf{median\text{-}of\text{-}means}}
\newcommand{\preds}{\mathsf{Pred}}
\newcommand{\ancestors}{\mathsf{Ancest}}
\newcommand{\run}{\mathit{run}}
\newcommand{\can}{\mathit{can}}
\newcommand{\depclass}{\mathit{dep}}
\newcommand{\word}{\mathit{word}}
\newcommand{\vY}[4]{\mathfrak{S}^{#1}_{#2,#3}(#4)}
\newcommand{\vZ}[3]{\hat{\mathfrak{S}}^{#1}_{#2}(#3)}
\newcommand{\vW}[3]{\bar{\mathfrak{S}}^{#1}_{#2}(#3)}
\newcommand{\vM}[3]{\mathfrak{M}_{#1,#2}(#3)}
\newcommand{\Ex}{\mathsf{E}}
\newcommand{\Var}{\mathsf{Var}}
\newcommand{\ns}{\beta}
\newcommand{\nt}{\gamma}
\newcommand{\nv}{\xi}
\newcommand{\nsnt}{\alpha}
\newcommand{\threshold}{\theta}
\newcommand{\treq}{\sim_\indep}
\newcommand{\eqcl}[1]{[#1]_{\indep}}
\newcommand{\quot}[2]{{#2}/{#1}}
\newcommand{\nf}[1]{\mathsf{NF}_{#1}}
\newcommand{\memp}[1]{\operatorname{mem}_{#1}}
\newcommand{\alphabet}{\Sigma}
\newcommand{\indep}{\mathbb{I}}
\newcommand{\dep}{\mathbb{D}}
\newcommand{\cwdth}{\omega}
\newcommand{\aut}{\calA}
\newcommand{\unroll}{\textsf{u}}
\newcommand{\autunroll}{\aut^\unroll}
\newcommand{\states}{\calQ}
\newcommand{\statesunroll}{\states^\unroll}
\newcommand{\trans}{\calT}
\newcommand{\transunroll}{\trans^\unroll}
\newcommand{\qI}{q_{\mathsf{I}}}
\newcommand{\qIunroll}{\qI^\unroll}
\newcommand{\accept}{F}
\newcommand{\acceptunroll}{\accept^\unroll}
\newcommand{\langT}{L}
\newcommand{\lang}[1]{\langT(#1)}
\newcommand{\rej}{{\sf rej}}
\newcommand{\poly}[1]{\textsf{poly}(#1)}
\newcommand{\emptyword}{\lambda}  
\newcommand{\TraceMC}{\textsc{TraceMC}}
\newcommand{\TraceMCCore}{\textsc{TraceMCcore}}
\newcommand{\solutions}[1]{\textsf{solutions}(#1)}
\newcommand{\trpo}[1]{\preceq^{#1}_{\indep}}             
\newcommand{\lexord}{\prec_{\sf lex}}  
\newcommand{\extn}{\textsf{extn}}
\newcommand{\upsets}[1]{\mathcal{U}(#1)}
\newcommand{\trm}{{\sf term}}
\newcommand{\asgn}{{\sf asgn}}
\newcommand{\dmrct}{\$}
\newcommand{\asgnind}{\sf{ASGNIND}}
\newcommand{\asgns}{\sf{ASGN}}
\newcommand{\dtv}{d_{\textsf{TV}}}
\begin{document}

\title{Counting and Sampling Traces in Regular Languages}

\titlenote{All authors contributed equally to this research.
The symbol \textcircled{r} denotes random author order. The publicly
verifiable record of the randomization is available at
\url{https://www.aeaweb.org/journals/policies/random-author-order/search?RandomAuthorsSearch[search]=DGBMk80WL8yf}
}

\author{Alexis de Colnet}
\orcid{0000-0002-7517-6735}
\affiliation{%
  \institution{TU Wien}
  \city{Vienna}
  \country{Austria}
}
\email{decolnet@ac.tuwien.ac.at}

\author{Kuldeep S. Meel}
\orcid{0000-0001-9423-5270}
\affiliation{%
  \institution{University of Toronto}
  \city{Toronto}
  \country{Canada}
}
\email{meel@cs.toronto.edu}

\author{Umang Mathur}
\orcid{0000-0002-7610-0660}
\affiliation{%
  \institution{National University of Singapore}
  \city{Singapore}
  \country{Singapore}
}
\email{umathur@comp.nus.edu.sg}


\begin{abstract}
In this work, we study the fundamental problems of counting and sampling traces that a regular language touches. Formally, one fixes the alphabet $\Sigma$ and an independence relation $\mathbb{I} \subseteq \Sigma \times \Sigma$. The computational problems we address take as input a regular language $L$ over $\Sigma$, presented as a finite automaton with $m$ states, together with a natural number $n$ (presented in unary). For the counting problem, the output is the number of Mazurkiewicz \emph{traces} (induced by $\mathbb{I}$) that intersect the $n^\text{th}$ slice $L_n = L \cap \Sigma^n$ of $L$, i.e., traces that have at least one linearization in $L_n$. For the sampling problem, the output is a trace drawn from a distribution that is approximately uniform over all such traces. These problems are motivated by applications such as bounded model checking based on partial-order reduction, where an \emph{a priori} estimate of the size of the state space can significantly improve usability, as well as testing approaches for concurrent programs that use partial-order-aware random sampling, where uniform exploration is desirable for effective bug detection.

We first show that the counting problem is \#P-hard even when the automaton accepting the language $L$ is deterministic, which is in sharp contrast to the corresponding problem for counting the words of a DFA, which is solvable in polynomial time. We then show that the counting problem remains in the class \#P for both NFAs and DFAs, independent of whether $L$ is trace-closed. Finally, our main contributions are a \emph{fully polynomial-time randomized approximation scheme} (FPRAS) that, with high probability, estimates the desired count within a specified accuracy parameter, and a \emph{fully polynomial-time almost uniform sampler} (FPAUS) that generates traces while ensuring that the distribution induced on them is approximately uniform with high probability.
\end{abstract}

\begin{CCSXML}
<ccs2012>
   <concept>
       <concept_id>10003752.10003766.10003776</concept_id>
       <concept_desc>Theory of computation~Regular languages</concept_desc>
       <concept_significance>300</concept_significance>
       </concept>
   <concept>
       <concept_id>10002950.10003648.10003671</concept_id>
       <concept_desc>Mathematics of computing~Probabilistic algorithms</concept_desc>
       <concept_significance>500</concept_significance>
       </concept>
   <concept>
       <concept_id>10003752.10003753.10003761</concept_id>
       <concept_desc>Theory of computation~Concurrency</concept_desc>
       <concept_significance>500</concept_significance>
       </concept>
   <concept>
       <concept_id>10011007.10011074.10011099</concept_id>
       <concept_desc>Software and its engineering~Software verification and validation</concept_desc>
       <concept_significance>100</concept_significance>
       </concept>
 </ccs2012>
\end{CCSXML}

\ccsdesc[300]{Theory of computation~Regular languages}
\ccsdesc[500]{Mathematics of computing~Probabilistic algorithms}
\ccsdesc[500]{Theory of computation~Concurrency}
\ccsdesc[100]{Software and its engineering~Software verification and validation}

\keywords{Mazurkiewicz traces, regular language, counting, sampling, approximate counting, FPRAS, complexity}

\maketitle


\section{Introduction}
\label{sec:intro}

The motivation behind this work stems from problems in algorithmic 
analysis of programs using techniques such as 
model checking that attempt to exhaustively explore the behaviors 
of these programs in order to isolate bugs.
Inherently, model checking is an expensive and resource-intensive task. 
Software reliability and quality assurance teams 
employing such exhaustive exploration techniques repeatedly suffer 
from the dilemma— \emph{how much resources must be allocated to a model checking task?} 
The usability of model checkers can thus be enhanced 
if we can quickly get a good estimate of the number of 
behaviors that the model checker will explore~\cite{usableGencMC2024}. 
The recent trend of an increased adoption of such otherwise expensive model checking
techniques in model software development practices~\cite{lightweightFM2021,vsync2021} 
make this question even more important and timely.

Another seemingly different but closely related analysis technique 
that has inspired the central question of this paper is randomized testing~\cite{RFF2024,PCT2010,SURW2025,sen2007effective},
where one draws executions from the program under test 
randomly and checks for the presence of bugs in the execution. 
For randomized testing to be effective, it is desirable that the 
distribution of behaviors induced by the testing technique be uniform, 
since otherwise the exploration may get biased. 
\emph{Can we come up with an ideal sampler?}

\subsection{The Ubiquity of Automata-Based Counting}

The answer to both challenges lies in efficient algorithms 
that estimate the number of behaviors (or execution paths) of a program. 
A solution to this counting problem also naturally lends itself 
to a solution to the uniform sampling problem: one can pick the next event 
to execute with probability proportional to the number of behaviors 
induced by each possible choice. 
In turn, both questions can elegantly be viewed through a 
common computational lens—\emph{counting words accepted by finite automata}. 
Program behaviors can be naturally represented as 
strings over an alphabet of actions, while control and data dependencies 
can be captured by transitions of an automaton. 
This automata-theoretic perspective provides a unifying framework 
for developing algorithmic solutions to several estimation and sampling 
problems that arise in program analysis, two of which we discuss next.

\begin{figure}[t]
	\centering
	\begin{subfigure}[b]{0.4\textwidth}
		\begin{center}
			\begin{lstlisting}[language=C,numbers=left,breaklines=true,xleftmargin=30pt,basicstyle=\ttfamily\footnotesize,frame=none]
x = 1;
if(x > 100) {
 while(x > 90){
  x = x-1;
 }
} else {
 y = y + 10;
}
x = y + 1;
			\end{lstlisting}
		\end{center}
		\caption{Simple imperative program}
		\figlabel{prog}
	\end{subfigure}
	\begin{subfigure}[b]{0.5\textwidth}
		\centering
		\begin{tikzpicture}[->,>={Stealth[round]},shorten >=1pt,%
			auto,node distance=2cm,on grid,semithick,
			inner sep=2pt,bend angle=40, minimum size=1mm, initial text=]
			\scalebox{0.65}{          
				\node[state, initial]   (q_0)                       {$\ell_\texttt{1}$};
				\node[state, accepting] (q_1) [right=of q_0]        {$\ell_\texttt{2}$};
				\node[state, accepting] (q_2) [below right=of q_1]  {$\ell_\texttt{3}$};
				\node[state, accepting] (q_3) [below=of q_2]        {$\ell_\texttt{4}$};
				\node[state, accepting] (q_4) [above right=of q_1]  {$\ell_\texttt{7}$};
				\node[state, accepting] (q_5) [below right=of q_4]  {$\ell_\texttt{9}$};
				\node[state, accepting] (q_6) at (7.5,0)        {$\ell_\texttt{exit}$};
				\path[->] (q_0) edge node {\texttt{x := 1}} (q_1)
				(q_1) edge node {\texttt{x <= 100}} (q_4)
				(q_1) edge node[left] {\texttt{x > 100}} (q_2)
				(q_2) edge [bend right] node[left] {\texttt{x > 90}} (q_3)
				(q_3) edge [bend right] node[right] {\texttt{x := x-1}} (q_2)
				(q_4) edge node {\texttt{y := y+10}} (q_5)
				(q_2) edge node {\texttt{x <= 90}} (q_5)
				(q_5) edge node {\texttt{x := y+1}} (q_6);
			}
		\end{tikzpicture}
		\vspace{-0.5in}
		\caption{Control Flow Automaton}
		\figlabel{cfg}
	\end{subfigure}
	\vspace{-0.1in}
	\caption{Modeling a program using its control flow automaton}
	\figlabel{prog-cfg}
	\vspace{-0.2in}
\end{figure}
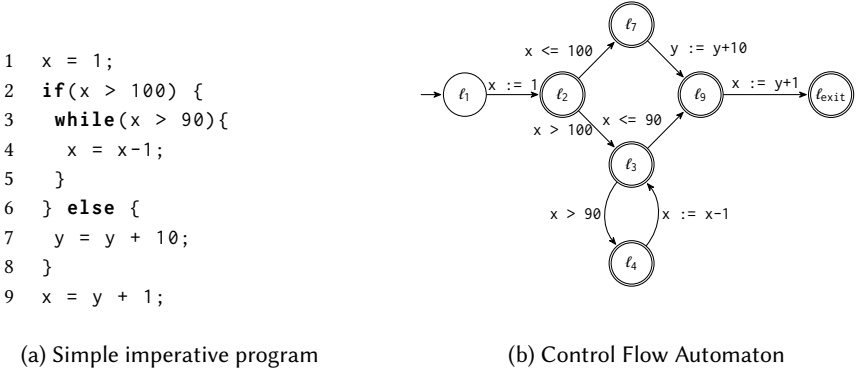

\myparagraph{Randomized testing and fuzzing}
Many modern testing frameworks generate executions of a program 
by sampling from a space of possible behaviors subject to syntactic 
or semantic constraints. 
For instance, techniques such as \emph{coverage-guided fuzzing} 
or \emph{search-based testing} can be modeled as a process that explores 
 the \emph{control flow automaton} of the automaton. 
Consider, for example, the simple imperative program and its corresponding control 
flow automaton shown in~\figref{prog-cfg}.
Here, states such as $\ell_1$, $\ell_7$, and $\ell_{\mathsf{exit}}$ 
correspond to program locations, while transitions 
are labeled with program statements (assignments or conditions). 
Each path through the automaton corresponds to a distinct test execution, 
and the ability to uniformly sample such paths 
directly influences the quality and coverage of the generated tests.
Further, recent work on statistical estimation and coverage prediction 
in testing and analysis~\cite{statisticalReachabilityLee2023,pReach2022} has shown that even coarse-grained estimates of the exploration 
space can meaningfully improve the usability and scalability of 
automated analysis by guiding resource allocation and stopping decisions.

\myparagraph{Model checking}
Likewise, in model checking, the program under analysis and the property to verify 
are jointly represented as a finite-state system whose reachable behaviors 
form a regular language over program actions. 
Each execution path through this automaton corresponds to a 
distinct program behavior. 
Estimating the number of such paths provides valuable information 
for resource planning and coverage estimation. 
Recent tools have shown that even coarse-grained estimates 
of the exploration space can substantially improve the usability 
and scalability of model checking by helping allocate computational budgets~\cite{usableGencMC2024}.


\subsection{Discounting for Equivalence}

In the context of model checking or other exploration-based testing
of concurrent programs, redundancy poses a new challenge for counting.
Several executions are essentially \emph{equivalent} to each other 
in that they exercise the same 
relative order of conflicting events 
and 
follow identical control-flow paths. 
To cater for this, modern model checkers employ 
\emph{partial order reduction}~\cite{Flanagan2005}
to prune redundant executions and aim to explore only one representative per equivalence class.
This is especially the case for recent tools
that target \emph{optimality}~\cite{Kokologiannakis2022,Abdulla2019,genmc2021,Abdulla2014},
where each equivalence class of executions is explored exactly once.

Estimating the number of distinct equivalence classes that 
intersect with the behaviors of a program is therefore a fundamental question.  
Such estimates can help anticipate the size of the exploration space, 
guide randomized schedulers toward unbiased coverage, 
and provide progress metrics for partial-order-based exploration.  
However, the current landscape of solutions remains unsatisfactory:
existing approaches either rely on Monte Carlo estimators
with high variance~\cite{usableGencMC2024},
restrict attention to specialized language families~\cite{tapct},
or employ heuristic counting schemes~\cite{sen2007effective,SURW2025}
that provide no guarantees on accuracy or bias.

The primary conceptual contribution of this work is a
systematic study of the problems of counting and sampling 
\emph{modulo} Mazurkiewicz trace equivalence~\cite{Mazurkiewicz87}—arguably
the most common notion of equivalence exploited in model checking and concurrency testing.
Two words over an alphabet $\alphabet$ are said to be equivalent 
if one can be transformed into the other by repeatedly commuting 
neighboring independent actions, as determined by a fixed 
\emph{independence relation} $\indep \subseteq \alphabet \times \alphabet$.
Each equivalence class under this relation is called a \emph{trace}.
Equipped with this natural notion of equivalence, 
the fundamental algorithmic questions we study can be phrased as follows.

\begin{description}
	\item[\quad Counting Modulo Trace Equivalence]
	Fix an alphabet $\alphabet$ and an independence relation $\indep \subseteq \alphabet \times \alphabet$.
	Given an automaton $\aut$ with $m$ states and a length $n$ (presented in unary),
	determine the number of distinct traces (under $\treq$)
	that intersect with the set of words of length $n$ accepted by $\aut$.
\end{description}

The corresponding \emph{uniform sampling} question asks for a procedure that samples 
executions of an automaton so that every equivalence class 
is chosen with equal probability mass. 
This requirement is fundamental to unbiased randomized testing.

\begin{description}
	\item[\quad Uniform Sampling Modulo Trace Equivalence]
	Fix an alphabet $\alphabet$ and an independence relation $\indep \subseteq \alphabet \times \alphabet$.
	Given an automaton $\aut$ with $m$ states and a length $n$ (presented in unary),
	sample strings of length $n$ accepted by $\aut$ so that for any
	two distinct traces $t_1, t_2$ intersecting $L_n(\aut)$, $\Pr[\text{a string from } t_1 \text{ is sampled}]
	= \Pr[\text{a string from } t_2 \text{ is sampled}]$,
	and all other traces receive probability mass~0.
\end{description}

\subsection{From \#P-hardness to efficient approximation}

\myparagraph{Complexity of exact counting}
Observe that the trace-counting problem generalizes the well-studied
\#NFA (resp.\ \#DFA) problem of counting the number of words of a fixed length~$n$ 
accepted by a given non-deterministic (resp.\ deterministic) finite automaton.
In particular, when the automaton represents a \emph{trace-closed} language—
that is, when $\lang{\aut} = \eqcl{\lang{\aut}}$ where 
$\eqcl{L} = \setpred{w \in \alphabet^*}{\exists w' \in L,\, w \treq w'}$—
the two problems coincide;
this can happen when, say, $\indep = \emptyset$.
In such cases, the counting problem modulo trace equivalence
can be parsimoniously, in polynomial time, reduced to the corresponding
\#NFA or \#DFA problem.
Indeed, one can intersect the language of~$\aut$ with the regular language
$L_{\nf{}}$ of (lexicographical) normal forms of~$\alphabet^*$,
which has a compact representation and contains exactly one word from each
trace class that intersects $\lang{\aut}$~\cite{diekert1995book}.

While programs in common multi-threaded imperative languages such as C/C++ are written
in a manner that their semantics induce trace-closed languages,
the setting of non trace-closed languages
nevertheless naturally arises in practical exploration based testing,
for instance, when executions are restricted by preemption bounds
or by syntactic depth constraints,
where only a subset of the representatives of each trace class is included.
\emph{Preemption-bounded model checking} restricts executions to those
containing at most~$k$ context switches when exploring behaviors of a concurrent program.
Recent algorithms~\cite{Marmanis2023Reconciling,contextbounding2007} on combining preemption bounding with partial
order reduction effectively attempt to explore one representative 
per equivalence class under some preemption bound $k$.
The resulting set of schedules~$B_k$ induced by this restriction is not trace-closed.
For example, when $k=0$, the schedule $a_1{\cdot}a_2{\cdot}b$ may 
satisfy the bound while the equivalent schedule $a_1{\cdot}b{\cdot}a_2$ does not
(here the actions $a_1, a_2$ are performed by some thread that is different from the thread of $b$).
Similarly, in \emph{randomized testing} frameworks such as PCT~\cite{PCT2010},
executions are sampled under bounds on the number or depth of ordering constraints
(and thus induce languages that are regular, but not trace-closed);
recent algorithms have attempted to make them trace-aware~\cite{tapct}.
Finally, specifications in the style of \emph{choreographic programming}~\cite{Honda2016MST,montesi2023introduction,Li2023Complete},
often only specify a totally ordered sequence of message exchanges among multiple
participants (e.g., $\mathtt{A}\!\to\!\mathtt{B} {:} \mathtt{ping};\ \mathtt{C}\!\to\!\mathtt{D}{:}\mathtt{tok}$),
while the projected system executes asynchronously.
Such specifications tend to correspond to regular languages which are not
trace-closed, and testing or analysis techniques for a model specified in the form of a choreography
must therefore work with a non–trace-closed language.

\myparagraph{Complexity of exact counting}
How hard is trace counting in general (i.e., when the language is not trace-closed)?
Is it strictly harder than counting words?
We answer this question in the affirmative.
We show that trace counting is \#P-hard even when the automaton is deterministic
(\thmref{DFA-hard}),
in sharp contrast to the classical \#DFA problem, which is solvable in polynomial time.
We further show that it is also in \#P even when the automaton is non-determinisitic
(\thmref{NFA-sharp-P-membership}).

\myparagraph{Approximately counting and sampling traces}
Given that exact counting is hard, it is natural to 
ask whether we can instead approximately solve the 
trace-counting problem efficiently.
Recent breakthroughs have shown that the classical \#NFA problem 
admits a \emph{fully polynomial-time randomized approximation scheme} (FPRAS)%
~\cite{MCMNFA2024,ACJR21,ACJR19,MeeldCPODS2025}.
It is tempting to ask whether the approximate trace-counting problem 
can be similarly reduced to \#NFA.
Unfortunately, this is not the case.
Unlike the trace-closed setting, an arbitrary regular language
may fail to contain a normal-form representative for every class,
so a naive intersection with $L_{\nf{}}$ can yield a language
that covers only a subset of the original trace classes.
Closing the automaton under trace equivalence (so as to safely intersect with $L_{\nf{}}$)
is also not a viable approach:
the trace closure of a regular language need not be regular
—and may not even be context-free.
For instance, the trace closure of $(abc)^*$ under full independence 
(i.e., $\indep = \setpred{(\sigma_1,\sigma_2)}{\sigma_1, \sigma_2\in \alphabet, \sigma_1 \neq \sigma_2}$)
is the set of all words with equal numbers of $a$'s, $b$'s, and $c$'s.

From a technical standpoint, the core difficulty of approximate counting
stems from the fact that trace classes can vary drastically in size.
For example, the language $L = \set{abcabc, aabbcc, aaaaaa}$ 
contains only two trace classes, but the first class
$\eqcl{abcabc}$, contributes two words while the second,
$\eqcl{aaaaaa}$, contributes only one.
Consequently, counting techniques based on uniform sampling of individual words,
as employed in recent algorithms for \#NFA,
become ineffective in our setting.
For the same reason, the approximate variant of uniform sampling—%
that is, \emph{almost-uniform generation of traces}—%
is also technically challenging to achieve efficiently.

\myparagraph{Main Contributions}
Despite these challenges, we show that approximate counting and sampling
modulo trace equivalence are tractable problems.
The primary contribution of our approach is a fully polynomial-time
randomized approximation scheme (FPRAS) for counting trace 
equivalence classes in regular languages (\thmref{main_result}). 
Our algorithm provides an $(\varepsilon,\delta)$-approximation in time polynomial in 
$m$, $k$, $n^{\cwdth}$, $1/\varepsilon$, and $\log(1/\delta)$, where 
$m$ is the number of states in the automaton, $n$ is the desired length,
$k$ is the alphabet size, 
and $\cwdth$ is a constant determined by the independence relation~$\indep$
and is bounded above by $k$.
This result extends recent advances in approximate counting for
\#NFA~\cite{MCMNFA2024,ACJR21,ACJR19,MeeldCPODS2025}
to the more intricate setting of counting modulo equivalence.

Our algorithmic approach builds upon the dependence-analysis framework
introduced by de~Colnet and Meel for \#nFBDD~\cite{MdC24a} and
\#NFA~\cite{MeeldCPODS2025} counting.
Their key insight was to track dependencies among partial solutions through
structured derivations, enabling efficient probabilistic estimation.
However, a direct adaptation of their method fails in our setting:
the independence relation $\indep$ introduces \emph{non-local effects},
where commuting symbols at one position can enable or disable reorderings at distant
positions in the word.
Moreover, the sizes of equivalence classes vary widely, meaning that
uniform sampling over words does not correspond to uniform sampling over traces.
Overcoming these challenges requires a new probabilistic analysis and
a refined notion of canonical derivation runs, which form the core
technical contributions of this paper.

Finally, we leveraging inter-reducibility of approximate counting
and sampling~\cite{JVV1986}, and also obtain a \emph{fully polynomial-time almost-uniform sampler} (FPAUS)
for generating traces from regular languages modulo trace equivalence
(\thmref{main-reduction}).

\myparagraph{Organization}
The rest of the paper is organized as follows.
\secref{prelim} 
establishes notation and fundamental concepts including trace 
equivalence and predictive membership. 
In \secref{exact}, we study the exact counting problem and establish upper
and lower bounds for the trace counting problem for both
NFAs and DFAs.
Before we present our FPRAS, we first present, in~\secref{sampling}, 
our algorithm for approximately uniformly sampling traces from an NFA,
by reducing it to our FPRAS.
We then move on to presenting our FPRAS in~\secref{fpras},
and present the notion of canonical runs in \secref{canrun}, 
which plays a central role in our technical analysis (\secref{analysis}). 
Finally, we conclude in~\secref{conclusion}.

\section{Preliminaries}
\seclabel{prelim}

In the following, we fix a finite alphabet $\alphabet$.
The set of all strings over $\alphabet$ is denoted using 
the regular expression $\alphabet^*$. 
Throughout this work, we will assume that $|\alphabet|\in O(1)$.
We use $\emptyword$ to denote the empty string.
Let $\aut = (\states,\alphabet,\trans,\qI,\accept)$ 
be a non-deterministic finite automaton (NFA)
working over alphabet $\alphabet$, with states $\states$, 
a transition relation $\trans \subseteq \states \times \alphabet \times \states$,
an initial state $\qI \in \states$ and a set of final states $F \subseteq \states$. 
For $q \in \states$, we denote by $L(q)$ the set of words that reach $q$ in $\aut$. 
We say that a state $q'$ is a predecessor of state $q$
if  there is a transition $(q',\alpha,q) \in \trans$ for some $\alpha$. 
We will use $\preds(q)$ to denote the set of predecessors of state $q$.
Finally, $\aut$ is deterministic, if for each $q\in \states, a\in\alphabet$,
there is at most one $q'$ such that $(q, a, q') \in \trans$.

\myparagraph{Unrolled automaton}{
Many of our algorithms work on the unrolled version of the input NFA.
Formally, let $\aut = (\states,\alphabet,\trans, \qI,\accept)$ 
be an NFA and let $n$ be a positive integer. 
Then, its \emph{unrolled automaton} $\autunroll_n$ 
accepts words of length $n$, and contains
states of the form  $q^i$ , where  $q \in \states$ and $i \in \set{0, 1, \ldots, n}$,
while ensuring that $L(q^i) = L(q) \cap \Sigma^i$ 
for every $q \in \states$ and every $0 \leq i \leq n$.
More formally, let $\states^i = \setpred{q^i}{q \in \states,\, L(q) \cap \alphabet^i \neq \emptyset}$.
and let $\statesunroll_n = \states^0 \cup \states^1 \cup \states^2 \cup \dots \cup \states^n$;
we will assume w.l.o.g that $\aut^n$ is not empty. 
Then, the unrolled automaton 
is the tuple $\autunroll_n = (\statesunroll_n,\alphabet,\transunroll_n, \qIunroll,\acceptunroll_n)$
whose initial state is $\qIunroll = \qI^0$, final states
are $\acceptunroll_n = \setpred{q^n}{q\in F}$, 
and transition relation is 
$\transunroll_n = \setpred{(q^i,\alpha,(q')^{i+1})}{(q,\alpha,q') \in \trans, q^i \in \states^i}$. 
In most cases, the subscript $n$ will be clear from context
and we will drop it, writing $\autunroll = (\statesunroll, \alphabet, \transunroll, \qIunroll, \acceptunroll)$.
We refer to the set $\states^i$ as the $i^{\text{th}}$ 
\emph{level of the automaton} and use the notation
$\states^{\geq i} = \bigcup_{i \leq j \leq n} \states^j$;
the notations for $\states^{\leq i}$, $\states^{< i}$ and $\states^{> i}$ are
defined analogously. 
For simplicity of notation, we also often omit the superscript $i$ 
in the name of states (so we write $q \in \statesunroll$ instead of the more
explicit $q^i \in \statesunroll_i$; 
when we need to emphasize that $q$ is in the 
$i^\text{th}$ level, we will write $q \in \calQ^i$). 
Unrolled automata can be represented graphically 
as directed acyclic graphs, see for instance \figref{example} 
for an unrolled automaton over $\alphabet = \set{a,b,c}$ 
and $n = 4$.
}

\begin{proposition}
\proplabel{unrolling}
Given an NFA $\aut = (\states,\alphabet,\trans,\qI,\accept)$ 
and a positive integer $n$, the unrolled automaton 
$\autunroll = (\statesunroll,\alphabet,\transunroll,\qIunroll,\acceptunroll)$ 
can be constructed in time $O(n|\trans|)$. 
\end{proposition}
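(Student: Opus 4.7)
The plan is to build $\autunroll$ by a straightforward level-by-level forward sweep that mirrors the inductive definition of $\calQ^i$. I would initialize $\calQ^0 \leftarrow \set{\qI^0}$ and $\transunroll \leftarrow \emptyset$, then for each $i$ from $0$ up to $n-1$ iterate over every state $q^i \in \calQ^i$ and every transition $(q,\alpha,q') \in \trans$ outgoing from $q$; for each such pair, add $(q')^{i+1}$ to $\calQ^{i+1}$ (if not already present) and add the edge $(q^i,\alpha,(q')^{i+1})$ to $\transunroll$. After the sweep, set $\acceptunroll = \setpred{q^n}{q \in F} \cap \calQ^n$.

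Correctness is by induction on $i$: I would argue that after processing level $i$, the set $\calQ^{i+1}$ equals $\setpred{(q')^{i+1}}{L(q') \cap \alphabet^{i+1} \neq \emptyset}$, because $q'$ is reachable in exactly $i+1$ steps in $\aut$ iff there is some predecessor $q \in \preds(q')$ that is reachable in exactly $i$ steps, which is captured by the inductive hypothesis on $\calQ^i$ together with the enumeration of all outgoing transitions of $q$. The matching of $\transunroll$ with its definition $\setpred{(q^i,\alpha,(q')^{i+1})}{(q,\alpha,q') \in \trans,\, q^i \in \calQ^i}$ is immediate from the construction, and $\qIunroll = \qI^0$ and $\acceptunroll$ are set as required.

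For the running time, the key observation is that each transition $(q,\alpha,q') \in \trans$ is examined at most once per level at which $q^i$ appears in $\calQ^i$, hence at most $n$ times in total, yielding $O(n|\trans|)$ work for the transition enumeration. To make the membership tests ``is $(q')^{i+1}$ already in $\calQ^{i+1}$?'' constant time, I would maintain, for each level $i$, a bit array of size $|\states|$ indexed by states of $\aut$. The only mild bookkeeping issue—really the only place where one has to be careful—is this choice of data structure so that both insertions into $\calQ^{i+1}$ and the implicit deduplication of transitions take $O(1)$ time; with that in place, total time and space are both $O(n|\trans|)$, assuming (w.l.o.g., after a linear-time prune) that every state of $\aut$ has at least one outgoing or incoming transition so $|\states| = O(|\trans|)$.
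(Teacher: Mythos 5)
Your construction is correct and is exactly the straightforward level-by-level forward sweep that the paper's definition of $\autunroll$ implicitly relies on (the paper states \propref{unrolling} without giving an explicit proof). Your level sets match the definition $\states^i = \setpred{q^i}{L(q)\cap\alphabet^i \neq \emptyset}$ since the automaton has no $\varepsilon$-transitions, and your accounting of the $O(n|\trans|)$ bound, including the remark about pruning transition-less states so that $|\states| = O(|\trans|)$, is sound.
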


\myparagraph{Trace equivalence}{
Trace theory, popularized by Antoni Mazurkiewicz~\cite{Mazurkiewicz87}, 
has emerged as a popular framework for formalizing behaviors of
concurrent programs as sequences or strings
of actions, together with an equivalence obtained by
repeated commutations of neighboring independent actions.
Formally, an \emph{independence relation} $\indep \subseteq \alphabet\times \alphabet$
is a symmetric and irreflexive binary relation on $\alphabet$;
its complement is often referred to as \emph{dependence relation}
$\dep = \alphabet\times \alphabet \setminus \indep$.
Together, $(\alphabet, \indep)$ is referred to as a \emph{concurrent alphabet}.
The trace equivalence induced by $(\alphabet, \indep)$, denoted $\treq$
is the smallest equivalence on $\alphabet^*$ such that
for every $w_1, w_2 \in \alphabet^*$ and for every $(a_1, a_2) \in \indep$,
we have $w_1 \cdot a_1 \cdot a_2 \cdot w_2 \treq w_1 \cdot a_1 \cdot a_1 \cdot w_2$.
Each class of this equivalence relation is often called a \emph{trace}.
We will use $\eqcl{w}$ to denote the equivalence class (or trace)
of a string $w \in \alphabet^*$, i.e.,
$\eqcl{w} = \setpred{w' \in \alphabet^*}{w \treq w'}$.
We will overload the notation and denote by $\eqcl{L}$ the trace closure of the language $L$,
i.e., $\eqcl{L} = \bigcup\limits_{w \in L} \eqcl{w}$.
The quotient $\quot{\treq}{L} = \setpred{\eqcl{w}}{w \in L}$ 
of a language $L$ with respect to the
trace equivalence $\treq$ is the set of classes contained in $L$.
An important parameter in our work is
the \emph{width of the concurrent alphabet} $(\alphabet, \indep)$, denoted by $\cwdth$, 
 and is the size of the largest set $\alphabet' \subseteq \alphabet$
such that for all $a \neq a' \in \alphabet'$, we have $(a, a') \in \indep$.
For two traces $t_1, t_2$, we use $t_1 \cdot t_2$ to denote the 
unique class $\setpred{w}{\exists w_1 \in t_1, w_2 \in t_2, w_1 \cdot w_2 \treq w}$.
Please refer to~\cite{diekert1995book} for a more thorough survey of trace theory.
}

\myparagraph{Counting modulo equivalence}{
The key question we ask in this work is the problem of counting
the number of traces that the $n^\text{th}$ slice of a given word language touches.
Formally, for a language $L \subseteq \alphabet^*$ and
$n \in \nats$, we use $L_n = L \cap \alphabet^n$ to denote its $n^\text{th}$ slice,
i.e., the set of those words in $L$ whose size is $n$.
The problem of counting the traces (induced by $\indep$) of a given language $L$ modulo $\treq$
with respect to a given slice $n$ asks to return the size of the set
$\quot{\treq}{L_n}$ of equivalence classes of $\treq$ that intersect with $L_n$.
}

\myparagraph{Approximate counting and sampling}{
In general, a counting problem $P$ for a binary relation
$R \subseteq \alphabet^* \times \alphabet^*$, 
 asks to determine, on input $x$,
the size $N(x)$ of the set $\solutions{x} = \setpred{y \in \alphabet^*}{(x, y) \in R}$;
in our setting, the relation $R$ is the set of pairs 
$((\aut, \indep, n), t)$ where $\aut$ is an NFA over $\alphabet$, $\indep$
is an independence relation over $\alphabet$, $n$ is a number in unary,
and $t$ is a trace (i.e., equivalence class of $\treq$) such that $t \cap \alphabet^n \cap L(\aut) \neq \emptyset$.
A \emph{fully polynomial-time randomized approximation scheme} (FPRAS) for
the problem $P$ is a randomized algorithm that, given $x$ and two real parameters 
$\varepsilon > 0$ and $\delta > 0$, runs in time polynomial in 
$|x|$ (the size of $x$), $\varepsilon^{-1}$ and $\log(\delta^{-1})$ 
and produces a random variable $\tilde{N}$ with the guarantee that
$$
\Pr[(1 - \varepsilon)N(x) \leq \tilde{N} \leq (1 + \varepsilon)N(x)] \geq 1 - \delta
$$
The uniform sampling problem for a binary relation $R$ asks to build a generator
that, on input $x$ (with $\solutions{x} \neq \emptyset$), outputs an element
of $\solutions{x}$ uniformly at random.
A \emph{fully polynomial-time almost uniform sampler} (FPAUS) for relation
$R$ is an algorithm that takes an input $x$ (with $\solutions{x} \neq \emptyset$)
and a number $\delta \in (0, 1)$
as input, runs in time polynomial in $|x|$ and $\log(\delta^{-1})$ 
and either returns a $y \in \solutions{x}$ or $\bot$ (meaning that the algorithm failed), 
and is such that the total variation distance between the uniform distribution $U_x$
on $\solutions{x}$ and the distribution $\Pi_x$ induced by the sampler
is  at most $\delta$. 
Formally, let $\Pi_x$ be the distribution induced by the sampler, and
let $U_x$ be the uniform distribution, i.e., $U_x(y) = 1/N(x)$ for every $y \in \solutions{x}$ and $U_x(\bot) = 0$.
Then the FPAUS satisfies the following guarantee (here $\dtv$ denotes total variation distance):
$$
\dtv(\Pi_x,U_x) = \frac{1}{2}\sum_{y \in \solutions{x} \cup \set{\bot}} \left|\,\Pi_x(y) - U_x(y)\, \right| \leq \delta
$$

\myparagraph{Median of Means} 
We will often use standard statistical estimation techniques
such as the median-of-means technique.
Let $\beta,\gamma \in \nats$ and let $\alpha = \beta\cdot \gamma$.
Let $X_1,\dots,X_{\alpha}$ be real-valued independent random variables 
with the same (unknown) mean $\mu = \Ex[X_i]$ (for every $i$) 
and variance $\sigma^2 = \Var[X_i]$ (for every $i$). 
The medians-of-means estimator of $\mu$ is a well-known statistical estimator
that tightens as the variance decreases. 
The estimator, parametrized by $\beta$ and $\gamma$, 
groups the random variables into $\gamma$ disjoint batches of size $\beta$, 
computes the average value of each batch, and finally returns the median of the averages
as an estimate to the actual mean $\mu$. 
Formally, for $1 \leq i \leq \gamma$ let $Y_i = \frac{1}{\beta}\sum_{j = 1}^\beta X_{j+(i-1)\cdot \gamma}$, 
then the median-of-means estimator for $\mu$ is the random variable $Z = \median(Y_1,\dots,Y_\gamma)$.

\begin{restatable}{lemma}{MoM}
\lemlabel{medians_of_means}
Let $Z$ be the median-of-means estimator as defined above,
and let $\epsilon > 0$.
We have, $\Pr\left[\left|Z - \mu\right| > \epsilon\right] \leq \exp\left(-\gamma\left(1 - \frac{2\sigma^2}{\epsilon^2\beta}\right)\right)$.
\end{restatable}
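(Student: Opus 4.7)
My plan is to follow the classical two-step proof of median-of-means concentration: first apply Chebyshev's inequality per batch to control individual deviations, then reduce the event that the median itself deviates to a binomial tail event, and finally close with a Chernoff-style bound. Each $Y_i$ is the mean of $\beta$ i.i.d.\ copies of $X$, so $\Ex[Y_i] = \mu$ and $\Var[Y_i] = \sigma^2/\beta$; Chebyshev then gives $\Pr[|Y_i - \mu| > \epsilon] \leq \sigma^2/(\epsilon^2 \beta)$, and I set $p := \sigma^2/(\epsilon^2 \beta)$. Let $B_i$ denote the indicator of the event $\{|Y_i - \mu| > \epsilon\}$; because the batches are disjoint, the $B_i$ are mutually independent Bernoulli variables with $\Ex[B_i] \leq p$.

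The second step is the elementary observation that the median of $\gamma$ real numbers lies outside $(\mu - \epsilon,\, \mu + \epsilon)$ only when strictly more than half of those numbers do. Writing $B := \sum_{i=1}^\gamma B_i$, I therefore have the inclusion of events $\{|Z - \mu| > \epsilon\} \subseteq \{B \geq \gamma/2\}$, so the task reduces to bounding the binomial tail $\Pr[B \geq \gamma/2]$.

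For the third step, I would invoke Chernoff via the moment-generating function bound $\Ex[e^{t B_i}] = 1 + \Ex[B_i](e^t - 1) \leq e^{p(e^t - 1)}$; together with independence of the $B_i$, this gives $\Pr[B \geq \gamma/2] \leq \exp\bigl(\gamma p(e^t - 1) - t\gamma/2\bigr)$ for every $t > 0$. The goal is then to choose the MGF parameter $t$ so that the two contributions to the exponent align with the target form $-\gamma(1 - 2p)$\textemdash concretely, matching $e^t - 1$ against the factor $2$ multiplying $p$ and $t/2$ against the constant $1$, while absorbing any slack through the elementary inequality $1 + x \leq e^x$ applied directly to the MGF product. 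The main obstacle I expect is precisely this final alignment, since the exponent $-\gamma(1 - 2p)$ is not the one delivered by the standard Chernoff--Hoeffding optimum: one has to commit to a carefully chosen, non-optimal value of $t$ and verify that the elementary inequalities invoked are tight enough to land exactly on the shape claimed in the lemma.
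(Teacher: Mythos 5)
Your first two steps are exactly the paper's: Chebyshev gives $\Pr[|Y_i-\mu|>\epsilon]\le p:=\sigma^2/(\epsilon^2\beta)$ for each batch mean, and the median can deviate only if at least $\gamma/2$ of the independent bad-batch indicators fire, so everything reduces to bounding $\Pr[B\ge\gamma/2]$. The genuine gap is your third step, which you leave open and which, as described, cannot be closed. From your bound $\Pr[B\ge\gamma/2]\le\exp\bigl(\gamma p(e^t-1)-t\gamma/2\bigr)$, reaching the exponent $-\gamma(1-2p)$ requires $t/2-p(e^t-1)\ge 1-2p$ for the relevant range of $p$; letting $p\to 0$ forces $t\ge 2$, while keeping the coefficient of $p$ at most $2$ forces $e^t-1\le 2$, i.e.\ $t\le\ln 3$. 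The two matchings you propose are therefore mutually exclusive, and for any $t\ge 2$ the required inequality fails for every $p>0$. Even optimizing $t$ as a function of $p$ (the choice $e^t=1/(2p)$) only yields the exponent $\tfrac12\ln\tfrac{1}{2p}-\tfrac12+p$, which is below $1-2p$ already for moderate $p$ (e.g.\ $p=0.1$), so the Poisson-type MGF route cannot land on the lemma's stated constant no matter how $t$ is chosen.

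The paper closes the argument by a different device: it applies Hoeffding's inequality to the sum $W$ of the $\gamma$ bounded indicators at deviation $\gamma/2-\Ex[W]\ge\gamma(\tfrac12-p)$ and reads the stated exponent off that single application, rather than optimizing an exponential moment. Your unease about the final constant is, however, well founded: the textbook additive Hoeffding bound at that deviation gives $\exp(-2\gamma(\tfrac12-p)^2)=\exp(-\gamma(1-2p)^2/2)$, which is weaker than $\exp(-\gamma(1-2p))$ for $p<1/2$, so the exact shape claimed in the lemma is more generous than what either your Chernoff computation or the standard form of Hoeffding delivers. If you want a self-contained and airtight write-up along your lines, prove the bound with exponent $\gamma(1-2p)^2/2$ (or state it under an explicit assumption such as $p\le 1/4$ with a correspondingly adjusted constant); that is what the reduction to $\Pr[B\ge\gamma/2]$ genuinely supports, and it suffices for the way the lemma is invoked later, since there $\beta$ is chosen large enough that $p$ is bounded well below $1/2$.
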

\noindent For completeness, the proof of~\lemref{medians_of_means} appears in \ucomment{our companion technical report}.

\myparagraph{Normal Forms}{
Our algorithms routinely leverage succinct representations
of equivalence classes, of which many have emerged.
A common invariant~\cite{blass1984equivalence} often studied
is the \emph{trace partial order}:
for a string $w$ over the concurrent alphabet $(\alphabet, \indep)$, 
the trace partial order $\trpo{w}$ of $w$ is the smallest partial order
over the set of labeled indices 
${\sf LABIND}_w = \setpred{(a, i) \in \alphabet \times \natsp}{1 \leq i \leq \#_{a}(w)}$,
such that $(a, i) \trpo{w} (b, j)$ iff $(a, b) \not\in \indep$
and the $i^\text{th}$ occurrence of $a$ in $w$ appears before the $j^\text{th}$ occurrence of $b$ in $w$. 
Here, $\#_{a}(w)$ represents the number of occurrences of the letter $a$ in the string $w$.
We remark that $\trpo{w} = \trpo{w'}$ iff $w \treq w'$, and further,
the set of linearizations of $\trpo{w}$ is exactly the set $\eqcl{w}$.
Alternatively, one can represent traces using normal forms~\cite{blass1984equivalence} 
or representative elements from them.
Let us fix $\lexord \subseteq \alphabet \times \alphabet$
to be a strict total order on $\alphabet$, and 
let us abuse the same notation to denote the induced lexicographic ordering on the set of all strings.
The lexicographic normal form of a trace $t$, denoted $\nf{t}$
is the smallest string (according to $\lexord$) 
in the class $t$.
We will denote the set of all lexicographic normal
forms with $L_{\nf{}} = \setpred{\nf{\eqcl{w}}}{w \in \alphabet^*}$.
Other normal forms such as the Foata normal form exist, but we will skip discussing them here.
We will also often abuse notation and 
use $\nf{w}$ to denote $\nf{\eqcl{w}}$.
It is known that the set of all lexicographic normal forms with respect to a given concurrent alphabet 
form a regular language:
\begin{proposition}[\cite{diekert1995book}]
\proplabel{lexnormalform-regular}
For a concurrent alphabet $(\alphabet, \indep)$ and a lexicographic ordering $\lexord$
on $\alphabet$, the induced set $L_{\nf{}}$ of normal forms is a regular language.
Further, $L_{\nf{}}$ is  prefix-closed,
i.e., for every $u, v, w \in \alphabet^*$ if $w = u \cdot v$
and if $w \in L_{\nf{}}$, then $u \in L_{\nf{}}$.
\ucomment{State size of automaton.}
\end{proposition}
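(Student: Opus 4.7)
The plan is to establish both properties via a combinatorial characterization of normal forms. The key observation is that a word $w$ fails to be in lexicographic normal form precisely when it contains a \emph{forbidden pattern}: a pair of positions $i < j$ such that $w[j] \lexord w[i]$ and $w[j]$ is independent of every letter $w[k]$ for $i \leq k < j$. Indeed, such a pattern lets us commute $w[j]$ leftward past positions $i, i+1, \ldots, j-1$ to produce a strictly smaller equivalent word, so $w \notin L_{\nf{}}$; conversely, if no such pattern exists, then no single ``bubble'' reduction decreases $w$ lexicographically, and by a standard argument (iterated bubble-style commutations suffice to reach the normal form, see \cite{diekert1995book}) this already implies $w = \nf{\eqcl{w}}$.

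To obtain regularity, I would read the input left to right while maintaining, for each letter $b \in \alphabet$, a single bit that records whether $b$ is ``currently dangerous,'' meaning that some previously read letter $a \lexord^{-1} b$ (i.e., $b \lexord a$) has appeared with every intervening letter being independent of $b$. Formally, the DFA state after reading prefix $u$ is the set
\[
S(u) = \setpred{b \in \alphabet}{\exists i \leq |u|,\ b \lexord u[i] \text{ and } (u[k], b) \in \indep \text{ for all } i \leq k \leq |u|}.
\]
The update rule on reading a letter $c$ is straightforward: if $c \in S(u)$, we reject (a forbidden pattern has been completed); otherwise, $S(u \cdot c)$ is obtained from $S(u)$ by removing every $b$ with $(c,b) \in \dep$ and inserting every $b$ with $b \lexord c$ and $(c,b) \in \indep$. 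The state space has size at most $2^{|\alphabet|}$, which since $|\alphabet| \in O(1)$ gives a DFA of constant size recognizing $L_{\nf{}}$.

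For prefix-closure, suppose $w = u \cdot v \in L_{\nf{}}$ but $u \notin L_{\nf{}}$. Then there exists $u' \treq u$ with $u' \lexord u$, and since trace equivalence is a congruence, $u' \cdot v \treq u \cdot v = w$. Because $|u'| = |u|$, lexicographic comparison on the prefixes propagates to the full words, so $u' \cdot v \lexord w$, contradicting the minimality of $w$ in its class. Hence $u \in L_{\nf{}}$.

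The main obstacle is the ``converse'' direction of the forbidden-pattern characterization: showing that absence of the local pattern guarantees global minimality. A clean way to handle this is to argue that if $w' \treq w$ with $w' \lexord w$, then some sequence of elementary swaps transforms $w$ into $w'$, and the first swap that strictly decreases the word (which must exist) exposes exactly the forbidden pattern described above. Once this is in place, regularity follows from the finite-state update rule and prefix-closure follows from the congruence argument, as the two proof ingredients are essentially orthogonal.
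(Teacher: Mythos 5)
The paper itself does not prove this proposition; it is imported from Diekert and Rozenberg's book, and your reconstruction follows exactly the standard textbook route. Your three ingredients are sound: (i) the characterization that $w \notin L_{\nf{}}$ iff there are positions $i<j$ with $w[j] \lexord w[i]$ and $w[j]$ independent of $w[i],\dots,w[j-1]$; (ii) the finite automaton tracking the set $S(u)$ of letters that would complete such a pattern, whose update rule you state correctly and which has at most $2^{|\alphabet|}+1$ states (constant, since $|\alphabet| = O(1)$); and (iii) prefix-closure via the congruence property of $\treq$ together with the fact that equivalent words have equal length, so a lexicographic decrease in the prefix propagates to the whole word. (Prefix-closure also follows immediately from (i), since a forbidden pattern in $u$ is a forbidden pattern in $u \cdot v$; either argument is fine.)

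The one step that does not work as written is your justification of the converse of (i). You argue that if $w' \treq w$ and $w' \lexord w$, then in a sequence of elementary swaps turning $w$ into $w'$ the first strictly decreasing swap "exposes" the forbidden pattern; but such a swap sequence need not be lexicographically monotone, and the decreasing swap is applied to an intermediate word, so the adjacent pattern it exhibits lives in that intermediate word, not in $w$. The standard repair argues on $w$ directly: let $i$ be the first position where $w$ and $w'$ differ, so $a := w'[i] \lexord w[i]$; since $w$ and $w'$ agree on the first $i-1$ letters and have the same multiset of letters, the occurrence of $a$ sitting at position $i$ of $w'$ occurs in $w$ at some position $j > i$. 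Because $\trpo{w} = \trpo{w'}$, no letter $w[k]$ with $i \leq k < j$ can be dependent on $a$: otherwise that labeled occurrence would precede the occurrence of $a$ in the trace order, hence would have to appear before position $i$ in $w'$, i.e., inside the common prefix of length $i-1$, contradicting that it sits at position $k \geq i$ of $w$. This exhibits the pattern $(i,j)$ in $w$ itself, with $w[j] = a \lexord w[i]$. With this fix, your proof is complete and coincides with the classical one the paper cites.
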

}

\myparagraph{Checking equivalence and membership}{
Another key ingredient in our proposed algorithms will 
be an algorithm for checking equivalence of strings (`is $w_1 \treq w_2$?').
This can be done by constructing the transitive reduction of the trace partial orders
of the two strings and checked for equality, giving us the following:
\begin{proposition}
Given a concurrent alphabet $(\alphabet, \indep)$, and strings $w_1, w_2 \in \alphabet^*$
as input, the problem of checking if $w_1 \treq w_2$ can be solved
in time $O(|\alphabet|\cdot |\indep| \cdot \max \set{|w_1|, |w_2|})$.
\end{proposition}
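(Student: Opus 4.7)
The approach rests on the classical characterization (see, e.g.,~\cite{Mazurkiewicz87,diekert1995book}) that $w_1 \treq w_2$ if and only if the two trace partial orders coincide, i.e., $\trpo{w_1} = \trpo{w_2}$ as labeled partial orders on the shared set of labeled indices ${\sf LABIND}_{w_1} = {\sf LABIND}_{w_2}$. Since any finite partial order is determined by its transitive reduction (Hasse diagram), the plan is to compute the Hasse diagrams of $\trpo{w_1}$ and $\trpo{w_2}$ and compare them as labeled directed acyclic graphs.

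First, I would perform a preliminary pass computing $\#_a(w_1)$ and $\#_a(w_2)$ for each $a \in \alphabet$; whenever any of these differ, the labeled-index sets already differ and the two strings cannot be equivalent. This preliminary check costs $O(|\alphabet| + \max\{|w_1|,|w_2|\})$.

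Second, I would construct the Hasse diagram of $\trpo{w}$ for a string $w$ of length $n$ via a single left-to-right sweep. The bookkeeping maintains, for every letter $b \in \alphabet$, a record $\mathrm{last}[b]$ of the most recent position that carries label $b$ and is still maximal in the partial order constructed so far. When a new position $i$ with $w[i] = a$ is processed, for each letter $b$ with $(a,b) \in \dep$ one emits a cover edge from $\mathrm{last}[b]$ (if defined) into the labeled index of position $i$; afterwards, any $\mathrm{last}[b]$ that becomes dominated by $i$ is invalidated, and $\mathrm{last}[a]$ is updated to $i$. Correctness is by induction on $i$: what is emitted in step $i$ are exactly the cover edges of $\trpo{w[1..i]}$ that terminate at position $i$. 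The dominant cost is the per-position work of iterating over all letters and performing $\indep$-membership checks, which accounts for the stated $O(|\alphabet| \cdot |\indep| \cdot \max\{|w_1|,|w_2|\})$ bound.

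Finally, with the two Hasse diagrams in hand over the same labeled vertex set, equality is tested by sorting the edges lexicographically and comparing them directly; this costs $O(|\alphabet| \cdot n \log n)$ and is absorbed into the overall bound. The main obstacle is justifying step two: one must prove that the sweep produces exactly the cover relation of $\trpo{w}$ rather than any transitive edges, and that the invalidation rule for $\mathrm{last}[b]$ correctly tracks dominance as positions are added. This demands a careful case analysis based on how a newly added dependent action simultaneously covers and shadows earlier occurrences of various letters, and is the only non-routine part of the argument.
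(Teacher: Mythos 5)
There is a genuine gap in step two: the invalidation rule does not produce the Hasse diagram, and worse, the graph it produces is not even a trace invariant, so the final comparison can give wrong answers. Concretely, take $\alphabet=\{a,b,d\}$ with $\indep=\{(a,d),(d,a)\}$, so $(a,b),(b,d)\in\dep$ while $a$ and $d$ commute. Consider $w_1=b\,a\,d$ and $w_2=b\,d\,a$; these are equivalent (swap the adjacent independent letters $a,d$), and the true Hasse diagram of $\trpo{w_1}=\trpo{w_2}$ has the two cover edges $(b,1)\to(a,1)$ and $(b,1)\to(d,1)$. Running your sweep on $w_1$: after processing $a$ at position 2 you emit $(b,1)\to(a,1)$ and then invalidate $\mathrm{last}[b]$ because $(b,1)$ is now dominated by the new element; when $d$ arrives at position 3, $\mathrm{last}[b]$ is gone and the genuine cover edge $(b,1)\to(d,1)$ is never emitted. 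On $w_2$ the symmetric thing happens and you emit only $(b,1)\to(d,1)$. So two equivalent words yield different edge sets, and your test would wrongly declare $w_1\not\treq w_2$. The underlying error is the assumption that an element dominated by the newly added element cannot cover any later element: it can, whenever the later element is independent of the dominating one.

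The fix is easy, and in fact simpler than what you attempted. Either drop the invalidation entirely: the graph in which each position labeled $a$ receives one edge from the most recent earlier occurrence of every letter $b$ with $(a,b)\in\dep$ is determined by $\trpo{w}$ (each in-neighbor is the maximal occurrence of $b$ below the vertex, and since occurrences of a single letter form a chain, its transitive closure is exactly $\trpo{w}$), so comparing these graphs is a sound and complete test even though they are generally not transitive reductions. Alternatively, use the classical projection characterization: $w_1\treq w_2$ iff they have the same Parikh image and $\pi_{\{a,b\}}(w_1)=\pi_{\{a,b\}}(w_2)$ for every dependent pair $(a,b)$, which yields the stated $O(|\alphabet|\cdot|\indep|\cdot\max\{|w_1|,|w_2|\})$ bound directly. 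The paper itself only gestures at building the transitive reductions and comparing them, so your high-level route matches the paper; but as written, your construction of the reduction is incorrect, and if you insist on the true Hasse diagram you would additionally have to prune non-maximal candidates among $\{\mathrm{last}[b]:(a,b)\in\dep\}$ at each step, which requires order information your bookkeeping does not maintain.
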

Yet another, and a perhaps more important ingredient we use is an algorithm for the 
\emph{predictive membership problem} or \emph{membership against a trace language},
which asks if, for a given word $w$, we have $w \in \eqcl{L}$. 
This was shown to be solvable in polynomial time~\cite{Bertoni1989} 
when the regular language is considered to be fixed 
(i.e., not considered part of the input), however it is easy to extrapolate the results
to show that in fact, the problem remains polynomial-time solvable
as long as the alphabet is fixed, while the NFA or DFA representation of the input
is counted as part of the input:
\begin{theorem}[\cite{Bertoni1989}]
\thmlabel{predictive-membership-complexity}
Given as input a concurrent alphabet $(\alphabet, \indep)$ of width $\cwdth$, an NFA $\aut$ with $m$ states
and string $w \in \alphabet^*$ of length $n$, 
the problem of checking if $w \in \eqcl{\lang{\aut}}$ can be solved
in time $O(\cwdth \cdot m \cdot n^{\cwdth})$.
\end{theorem}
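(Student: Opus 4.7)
The plan is to reduce predictive membership to graph reachability in a \emph{product construction} between the downsets of the trace partial order $\trpo{w}$ and the states of $\aut$. The starting observation is that $w \in \eqcl{\lang{\aut}}$ holds iff some linearization $w'$ of $\trpo{w}$ lies in $\lang{\aut}$, and every linearization corresponds bijectively to a chain $\emptyset = D_0 \subsetneq D_1 \subsetneq \cdots \subsetneq D_n = \set{1,\ldots,n}$ of downsets, where $D_{i+1}$ is obtained from $D_i$ by adjoining a minimal element of $\trpo{w} \setminus D_i$. I would therefore build a directed graph $G$ whose vertices are pairs $(D, q)$ with $D$ a downset of $\trpo{w}$ and $q$ a state of $\aut$, with an edge $(D, q) \to (D \cup \set{e}, q')$ whenever $e$ is minimal in $\trpo{w} \setminus D$, carries label $a$, and $(q, a, q') \in \trans$; predictive membership then reduces to asking whether some vertex $(\set{1,\ldots,n}, q_F)$ with $q_F \in \accept$ is reachable from $(\emptyset, \qI)$, which a plain BFS decides in time linear in the number of edges of $G$.

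The central combinatorial step is bounding the number of downsets of $\trpo{w}$ by $O(n^{\cwdth})$. Because $\indep$ is irreflexive, any two positions of $w$ sharing a label are comparable in $\trpo{w}$, and more generally any antichain of $\trpo{w}$ projects to a set of pairwise independent letters of $\alphabet$, so the poset $\trpo{w}$ has width at most $\cwdth$. Dilworth's theorem then partitions its $n$ elements into at most $\cwdth$ chains, and since the restriction of a downset to a chain is an initial segment, each downset is uniquely specified by a $\cwdth$-tuple of segment lengths, yielding at most $(n+1)^{\cwdth}$ downsets. For algorithmic purposes a downset $D$ can be represented succinctly by its per-letter consumption vector $(d_a)_{a \in \alphabet}$, and $G$ is explored lazily from $(\emptyset, \qI)$ without ever materializing its full vertex set.

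Putting the pieces together, $G$ has at most $O(m \cdot n^{\cwdth})$ vertices and each vertex has out-degree at most $\cwdth$, since the minimal elements of $\trpo{w} \setminus D$ themselves form an antichain. After a one-shot precomputation that tabulates, for every position $p$ in $w$ and every letter $b$, the count $\#_b(w[1..p-1])$, one can test in $O(\cwdth)$ time whether the $(d_a+1)$-th occurrence of $a$ is minimal in $\trpo{w} \setminus D$, by checking for each $b$ dependent on $a$ that $d_b$ already covers every $b$-position preceding that occurrence in $w$; the associated automaton transitions $(q, a, q')$ from the state-component $q$ are then read off in constant time per pair. A BFS over $G$ therefore runs in the advertised time $O(\cwdth \cdot m \cdot n^{\cwdth})$. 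The main technical obstacle is the Dilworth-based downset bound: a naive per-letter representation gives only the weaker $n^{|\alphabet|}$, and obtaining the sharper $n^{\cwdth}$ relies on identifying the poset width of $\trpo{w}$ with the width $\cwdth$ of the concurrent alphabet and exploiting a chain decomposition; everything else is bookkeeping over the product construction.
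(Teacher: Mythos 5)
The paper never proves this statement itself: it imports it from \cite{Bertoni1989} (with the observation that the dependence on the automaton size can be made explicit), so the comparison is against the standard argument behind that citation. Your proposal is, in substance, exactly that argument: predictive membership is decided by a reachability/dynamic-programming computation over pairs $(D,q)$ where $D$ is a downset (ideal) of $\trpo{w}$ and $q$ a state of $\aut$, and the crucial quantitative step is that $\trpo{w}$ has poset width at most $\cwdth$ (any antichain projects to pairwise-independent, hence distinct, letters), so a Dilworth chain decomposition encodes each downset by at most $\cwdth$ initial-segment lengths, giving at most $(n+1)^{\cwdth}$ downsets. The reduction from ``some linearization of $\trpo{w}$ is accepted'' to reachability, the width bound, and the lazy per-letter-count representation are all correct, so you have reconstructed the intended proof.

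One accounting slip worth noting: for a nondeterministic $\aut$ the out-degree of a vertex $(D,q)$ is not bounded by $\cwdth$; it is (at most $\cwdth$ minimal events) times the nondeterministic branching of $q$ on the corresponding letter. A plain BFS therefore costs $O(\cwdth \cdot n^{\cwdth}\cdot |\trans|)$, which over a constant alphabet can be $O(\cwdth \cdot n^{\cwdth}\cdot m^{2})$ rather than $O(\cwdth \cdot m\cdot n^{\cwdth})$; the advertised bound is obtained if $m$ is read as the size of $\aut$ (states plus transitions), or if $\aut$ is deterministic. Since the paper asserts the bound without proof, this is a discrepancy in bookkeeping against the stated constant-in-$m$ exponent, not a flaw in your method: the approach and the $n^{\cwdth}$ downset bound, which carry the real content, are sound.
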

The dependence of $n^{\cwdth}$ above bound was shown to be optimal, 
conditional on the widely believed
Strong Exponential Time Hypothesis (SETH)~\cite{ang2024predictive}. 
In our algorithms, we will often refer to
an oracle $\memp{\treq}(\aut,q,w)$ that, 
given an automaton $\aut$, a state $q$ of $\aut$, 
and a word $w \in \alphabet^*$, 
decides the membership query `is $w \in \eqcl{L(q)}$?'.
}


\section{Complexity of Exact Counting}
\seclabel{exact}

We now study the computational complexity of counting trace 
equivalence classes for some slice of a given regular language, presented as an automaton.
First, we show that the problem remains in the \#P complexity class, which is also the complexity in which one can determine the vanilla problem of \#NFA.

\begin{theorem}
	\thmlabel{NFA-sharp-P-membership}
	Fix a concurrent alphabet $(\alphabet, \indep)$.
	Given a non-deterministic finite automaton $\aut$ accepting 
	language $L$, an independence relation $\indep$, and a positive 
	integer $n$ presented in unary, the problem of determining
	the size $|\quot{\treq}{L_n}|$ is in \#P.
\end{theorem}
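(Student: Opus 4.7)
The plan is to exhibit a polynomial-time checkable witness predicate whose number of witnesses equals $|\quot{\treq}{L_n}|$. The natural witness for a trace class is its lexicographic normal form: by the uniqueness of $\nf{t}$ for each equivalence class $t$, there is a bijection between $\quot{\treq}{L_n}$ and the set
$$
W = \setpred{w \in \alphabet^n}{w \in L_{\nf{}} \text{ and } w \in \eqcl{L}}.
$$
Since trace equivalence preserves length, $w \in \eqcl{L}$ together with $|w|=n$ is equivalent to $w \in \eqcl{L_n}$, so $w \in W$ iff $w = \nf{t}$ for some $t \in \quot{\treq}{L_n}$. Hence $|W| = |\quot{\treq}{L_n}|$.

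It remains to verify that a candidate $w \in \alphabet^n$ can be certified as a member of $W$ in polynomial time; this is what places the problem in $\#P$. First, because $n$ is presented in unary, the witness $w$ has length polynomial in the input size, so guessing $w$ is legal for a $\#P$ machine. Second, by \propref{lexnormalform-regular}, $L_{\nf{}}$ is a (fixed) regular language over the fixed concurrent alphabet $(\alphabet, \indep)$, so testing $w \in L_{\nf{}}$ reduces to simulating a fixed DFA and costs $O(n)$. Third, checking $w \in \eqcl{L}$ is precisely the predictive membership problem of \thmref{predictive-membership-complexity}, which runs in time $O(\cwdth \cdot m \cdot n^{\cwdth})$, polynomial in the input size since $\cwdth \leq |\alphabet| \in O(1)$.

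Putting the pieces together, one designs a nondeterministic polynomial-time Turing machine that, on input $(\aut, \indep, n)$, guesses a string $w \in \alphabet^n$, then deterministically verifies both (i) $w \in L_{\nf{}}$ and (ii) $w \in \eqcl{L}$, accepting iff both tests succeed. The number of accepting computations equals $|W| = |\quot{\treq}{L_n}|$, establishing membership in $\#P$. The only potential obstacle is the subtle point that $\eqcl{L} \cap \alphabet^n$ might strictly contain $\eqcl{L_n}$ in general; but because trace equivalence is length-preserving these two sets coincide, which is why the predictive membership oracle of \thmref{predictive-membership-complexity} (stated for $\eqcl{\lang{\aut}}$) suffices without any modification.
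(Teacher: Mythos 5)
Your proposal is correct and follows essentially the same route as the paper: a nondeterministic polynomial-time machine guesses a length-$n$ word, checks it is a lexicographic normal form (via \propref{lexnormalform-regular}) and checks predictive membership $w \in \eqcl{L}$ (via \thmref{predictive-membership-complexity}), so that accepting paths are in bijection with $\quot{\treq}{L_n}$. Your explicit remark that length-preservation of $\treq$ makes $\eqcl{L}\cap\alphabet^n = \eqcl{L_n}$ is a nice clarification the paper leaves implicit, but the argument is the same.
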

\begin{proof}
	We construct a non-deterministic Turing 
	machine $M$ that operates as follows. 
	Given as input the NFA $\aut$ with $m$ states and the 
	length $n$ in unary, the machine $M$ non-deterministically guesses
	a word $w$ of length $n$ and accepts if both the following are true:
	\begin{enumerate*}[label=(\roman*)]
	\item $w$ is lexicographically the smallest element in $\eqcl{w}$, and 
	\item $w \in \eqcl{L}$.
	\end{enumerate*}
	The first check can be performed in time $O(|w|)$ since the set of all lexicographic normal forms
	over $\alphabet^*$ is a regular language (\propref{lexnormalform-regular}), 
	whose automaton has constant size, and checking membership in such an automaton takes linear time.
	The second check can be performed in time $O(\cwdth \cdot m \cdot |w|^{\cwdth})$ (\thmref{predictive-membership-complexity}).
	Thus, $M$ is a non-deterministic \emph{polynomial-time} Turing machine.
	Next, observe that the number of accepting paths of $M$ on input $\aut$ and $n$
	is equal to $|\quot{\treq}{L_n}|$, because each accepting path has a one-to-one correspondence with $\quot{\treq}{L_n}$.
	This establishes membership in \#P.
\end{proof}

We now ask how hard is the problem. 
Of course, in the case when $\indep=\emptyset$, 
the problem trivially becomes the \#NFA problem which is already \#P-hard \ucomment{cite}. 
Interestingly, we show that the trace counting 
problem remains hard even when the language is presented as a DFA.
This is in stark contrast to the more traditional \#DFA problem 
which can be solved in polynomial time,
using a simple dynamic programming algorithm.

\begin{theorem}
\thmlabel{DFA-hard}
	Fix a concurrent alphabet $(\alphabet, \indep)$.
	Given a DFA $\aut$ accepting 
	language $L$, an independence relation $\indep$, and a positive 
	integer $n$ presented in unary, the problem of determining $|\quot{\treq}{L_n}|$ is \#P-hard.
\end{theorem}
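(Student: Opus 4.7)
The plan is to reduce from \#NFA, which is \#P-hard even for non-deterministic finite automata over the binary alphabet $\{0, 1\}$ (obtainable, for instance, via a standard reduction from counting satisfying assignments of a monotone 2-DNF formula). Given such an NFA $N$ with branching factor at most~$2$ at each state, and a length $n$ in unary, the plan is to construct, in polynomial time, a DFA $D$ over a fixed concurrent alphabet $(\alphabet, \indep)$ together with a length $n' = 3n$, so that
\[
|\quot{\treq}{\lang{D} \cap \alphabet^{n'}}| \;=\; |\lang{N} \cap \{0, 1\}^n|,
\]
thereby transferring \#P-hardness.

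Concretely, I fix $\alphabet = \{0, 1, a, b\}$ and let $\indep$ be the symmetric closure of $\{(a, b)\}$, so that $a$ and $b$ commute while every other pair is dependent. Words of $\lang{D}$ will have the form $x_1 y_1 \sigma_1 \, x_2 y_2 \sigma_2 \cdots x_n y_n \sigma_n$, encoding an accepting run of $N$: each pair $x_i y_i \in \{ab, ba\}$ encodes the binary transition choice taken at step $i$, and $\sigma_i \in \{0, 1\}$ is the letter that $N$ reads at step $i$. The DFA $D$ carries in its state a current NFA state, the step index $i$, a within-block phase, and a one-letter buffer recording the first of $a, b$ seen in the current block; upon reading $y_i$ it resolves the choice and upon reading $\sigma_i$ it transitions to the corresponding successor in $N$. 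This gives $|D| = O(|Q| \cdot n)$ and $D$ is deterministic by construction.

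Correctness rests on the following trace analysis. Under $\indep$, the only adjacent swap permitted is $ab \leftrightarrow ba$, and such a swap can only occur within a single block $x_i y_i$: every $\sigma_i \in \{0, 1\}$ is dependent on every other letter, so it acts as a rigid separator that no $a$ or $b$ can cross. Two words in $\lang{D}$ are therefore trace equivalent iff they agree in the sequence $\sigma_1 \sigma_2 \cdots \sigma_n$, with the branch choices encoded by the $x_i y_i$'s being absorbed by the equivalence. Hence the number of traces touching $\lang{D} \cap \alphabet^{3n}$ equals the number of spell sequences realized by some accepting run of $N$, which is $|\lang{N} \cap \{0,1\}^n|$.

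The main obstacle is to design the encoding and the concurrent alphabet so that three properties hold simultaneously: (i)~$D$ is deterministic and polynomial-size despite the nondeterminism of $N$; (ii)~trace equivalence absorbs exactly the nondeterministic choices of $N$; and (iii)~distinct spell sequences remain trace-distinct. The independence $(a, b)$ accomplishes (ii) by letting the two branch letters commute \emph{within} a block, while the dependence of $\{0, 1\}$ with every other letter accomplishes (iii) by forbidding any swap from crossing a spell letter. For NFAs with a larger branching factor $B$, one can use blocks of length $k = O(\log B)$ over $\{a, b\}$ with a fixed Parikh vector, supplying $\binom{k}{k/2}$ mutually trace-equivalent orderings per block while keeping $D$ polynomial in size.
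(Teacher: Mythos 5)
Your reduction is correct, but it takes a genuinely different route from the paper. The paper gives a direct parsimonious reduction from \#DNF: an accepted word is a block $a^i b a^{k-i-1}$ followed by a separator and an assignment, the position of $b$ encodes which term witnesses satisfaction, and since $(a,b)\in\indep$ all witness positions for the same assignment collapse to one trace, so $|\quot{\treq}{\lang{\aut_\phi}}|$ equals the number of satisfying assignments. You instead reduce from \#NFA itself: the DFA reads an explicit certificate of the nondeterministic choices, encoded in the commuting letters $a,b$ interleaved block-by-block with the input letters $0,1$, which are dependent on everything and hence act as barriers; trace equivalence then quotients out exactly the certificate, so traces touching $\lang{D}\cap\alphabet^{n'}$ are in bijection with words of length $n$ accepted by the NFA. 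Both constructions exploit the same phenomenon (commuting letters absorb witness information), but yours yields the more modular conceptual statement that DFA trace counting is at least as hard as NFA word counting, inheriting hardness from \#NFA wholesale, while the paper's gadget is smaller, self-contained, and does not rely on the \#NFA hardness result or on controlling branching. One caveat: your base case assumes \#NFA is \#P-hard for binary-alphabet NFAs with branching factor at most $2$, which is not an off-the-shelf statement and would need a short argument (e.g., spreading a $k$-way branch over the first $\lceil\log k\rceil$ input letters while remembering those bits in the state); however, your final remark --- blocks of length $O(\log B)$ over $\{a,b\}$ with a fixed Parikh vector, all orderings of which are mutually trace-equivalent --- handles arbitrary branching factor directly and keeps $D$ deterministic and of polynomial size, so the assumption is dispensable and the proof stands.
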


\begin{proof}
We show that counting traces over DFA modulo equivalence induced by $\indep$
is \#P-hard via a parsimonious reduction from \#DNF, which is known to be \#P-hard \ucomment{cite}. \\

\noindent
\underline{\em Reduction.}
Let $\phi = T_1 \lor T_2 \lor \cdots \lor T_k$ be a DNF formula over propositions 
$X = \set{x_1, \ldots, x_n}$ with terms $T_1, T_2, \ldots, T_k$.
We construct a DFA $\aut_\phi$ over a fixed 
concurrent alphabet $(\alphabet, \indep)$ such that all
strings accepted by $\aut$ have length $n$, and more importantly,
$|\quot{\treq}{\lang{\aut_\phi}}|$ equals the number of 
satisfying assignments of $\phi$.
Let use fix the alphabet to be $\alphabet = \set{a,b,0,1,\dmrct}$
and the following independence relation:
\[\indep = \set{(a,b), (b,a)}\]
Observe that the above independence relation marks 
the letters  $0$ and $1$ to be dependent with each other and with both of $a$ and $b$.

Our construction ensures that each word $w$ accepted by $\aut_\phi$ 
is of the form $w = \pi_1 \cdot \dmrct \cdot \pi_2$,
where the prefix $\pi_1$ has length $k$ and contains exactly $k-1$ occurences of $a$, 
 and exactly one occurence of $b$,
 while the suffix $\pi_2$ is a word of length $n$ over the alphabet $\set{0, 1}$.
The position $i$ of $b$ in the prefix $\pi_1$ intuitively represents 
the term $T_i$ that the assignment corresponding to $w$ sets to true. 

The DFA $\aut_\phi= (\states,\alphabet,\trans,\qI,\accept)$ can now be formalized as follows.
The states are partitioned as: $\states = \states_{\trm} \uplus \states_{\asgn} \uplus \set{q_\rej}$.
The states in $\states_{\trm}$ read the prefix of length $k$ (containing letters $a$ and $b$),
the states in $\states_{\asgn}$ read the remaining suffix, 
while the state $q_\rej$ is a dedicated reject state and is an absorbing state.
Each state in $\states_{\trm}$ tracks:
\begin{enumerate*}[label=(\alph*)]
	\item the number of $a$'s read so far (from 0 to $k-1$),
	\item whether the letter $b$ has been read yet, and
	\item if the letter $b$ was read, the position at which it appeared ($1$ to $k$).
\end{enumerate*}
Formally, 
\[
\states_{\trm} = \setpred{q_i}{0 \leq i \leq k-1} \uplus \setpred{r_{i,\beta}}{0 \leq i \leq k-1, 1 \leq \beta \leq k}
\]
where:
\begin{enumerate*}[label=(\alph*)]
	\item the state $q_i$ represents the fact that we have read $i$ $a$'s but no $b$ yet, and
	\item the state $r_{i,\beta}$ represents the fact that we have read $i$ $a$'s and also read $b$ at position 
	$\beta$.
\end{enumerate*}
The states in $\states_{\asgn}$ represent the assignment of interest and track how far
we have finished reading the assignment. Formally,
\[
\states_{\asgn} = \setpred{p_{T,\ell}}{T \text{ is a term in } \phi, 0 \leq \ell \leq n}
\]
where $p_{T,\ell}$ represents that the term of choice is $T$
and that the first $\ell$ bits of the assignment have been read so far.
We now describe the transition function for the DFA $\aut_\phi$.
The transitions within $\states_\trm$ track the number of $a$'s and the position of $b$'s
as follows:
\begin{itemize}
	\item $\trans(q_{i}, a) = q_{i+1}$ for $i < k-1$ (reading the $(i+1)^\text{th}$ $a$ before reading any $b$)
	\item $\trans(q_{i}, b) = r_{i,i+1}$ for $i \leq k-1$ (reading the first $b$ at position $i+1$)
	\item $\trans(r_{i,\beta}, a) = r_{i+1,\beta}$ for $i \leq k-1$ (reading remaining $a$'s after the first $b$, preserving position $\beta$)
\end{itemize}
After having read $k-1$ $a$'s and one $b$, the automaton transitions to a state from $\states_\asgn$:
\[
\trans(r_{k-1,\beta}, \dmrct) = p_{T_\beta,0}, \text{for every } 1 \leq \beta \leq k
\]
For each term $T$ in $\phi$ and $1 \leq \ell \leq n$:
\begin{itemize}
	\item If $x_\ell$ appears positively in $T$: 
	$\trans(p_{T,\ell-1}, 1) = p_{T,\ell}$
	\item If $x_\ell$ appears negatively in $T$: 
	$\trans(p_{T,\ell-1}, 0) = p_{T,\ell}$ 
	\item If $x_\ell$ does not appear in $T$:
	$\trans(p_{T,\ell-1}, b) = p_{T,\ell}$ for $b \in \set{0,1}$
\end{itemize}
All transitions not described above go to state $q_\rej$.
The initial state is $q_{0} \in \states_\asgn$ and the accepting states are:
\[
F = \setpred{p_{T,n}}{T \text{ is a term in } \phi}
\]

\noindent
\underline{\em Parsimony.}
In the following, we use $\asgns \subseteq \set{0,1}^n$ to be the set of
all satisfying assignments of $\phi$,
and use $\asgnind$ to be the set of all pairs $(\alpha, i)$
such that $\alpha \in \set{0,1}^n$ is a satisfying assignment of $\phi$ and the term
$T_i$ in $\phi$ evaluates to true under $\alpha$.

\begin{claim}
There is a bijection between $\quot{\treq}{\lang{\aut_\phi}}$ and $\asgns$.
\end{claim}

\begin{proof}
First observe that:
\[
	\lang{\aut_\phi} = \setpred{a^i\cdot b\cdot a^{k-i-1}\cdot \dmrct \cdot \alpha}{(\alpha, i) \in \asgnind}
\]
Now, we remark that for any two strings $w = a^i\cdot b\cdot a^{k-i-1}\cdot \dmrct \cdot \alpha, w' = a^j\cdot b\cdot a^{k-j-1}\cdot \dmrct \cdot \alpha$ (where $i\neq j$), we have that $w \treq w'$ since $(a, b) \in \indep$.
Further, for any two strings $w = a^i\cdot b\cdot a^{k-i-1}\cdot \dmrct \cdot \alpha, w' = a^j\cdot b\cdot a^{k-j-1}\cdot \dmrct \cdot \beta$, if $\alpha \neq \beta$, then we must have
$(w, w') \notin \treq$.
Now, the desired bijection $f : \quot{\treq}{\lang{\aut_\phi}} \to \asgns$ is easy to see ---
$f(\eqcl{a^{k-1}\cdot b\cdot \dmrct \cdot \alpha}) = \alpha$.
Let us first argue why $f$ is injective.
Let $t_1 = \eqcl{a^{k-1}\cdot b\cdot \dmrct \cdot \alpha_1}$ and 
$t_2 = \eqcl{a^{k-1}\cdot b\cdot \dmrct \cdot \alpha_2}$ 
such that $t_1 \neq t_2$. 
In this case we must have $\alpha_1 \neq \alpha_2$ and thus $f(t_1) \neq f(t_2)$.
Let us now argue why $f$ is surjective.
Let $\alpha \in \asgns$ and let $T_i$ be some term that evaluates to true under $\alpha$.
We know that $(\alpha, i) \in \asgnind$ and thus $w = a^i\cdot b\cdot a^{k-i-1}\cdot \dmrct \cdot \alpha \in \lang{\aut_\phi}$. In this case we have $f(\eqcl{w}) = \alpha$.
\end{proof}

\noindent
\underline{\em Time complexity of the reduction.}
The reduction runs in polynomial time:
\begin{enumerate*}[label=(\alph*)]
	\item the prefix component has $O(k^2)$ states tracking 
	combinations of $a$-count and $b$-position,
	\item the assignment component has $O(kn)$ states for checking 
	each term, and
	\item the alphabet size is constant.
\end{enumerate*}
\end{proof}

\section{From Approximate Trace Counting to Almost Uniform Trace Sampling}
\seclabel{sampling}

Before we proceed to describing (in \secref{fpras}) our more intricate FPRAS, 
in this section, we describe how we can use
it to build an almost uniform sampler.
At a high level, our sampler generates traces (equivalence classes on strings) 
that touch $L_n(\aut)$ by (equivalently) generating their normal forms (strings).
Each normal form we generate is iteratively constructed,
starting from the shortest prefix $\emptyword$, extending it
by one letter in each iteration.
At iteration $i$, a letter $a \in \alphabet$ is picked to extend an already
generated prefix $u$ with probability proportional to the
number of traces that can extend the trace of $u \cdot a$ to a trace 
that intersects $L_n(\aut)$. Let us see this in more detail.

Let $u \in \alphabet^*$ be the prefix (of the desired normal form) constructed so far;
recall that $L_{\nf{}}$ is prefix-closed (\propref{lexnormalform-regular})
so $u \in L_{\nf{}}$.
Let us use $C(u)$ to denote the number of equivalence classes that contain words of length $n$
accepted by $\aut$, and whose normal form contains $u$ as a prefix:
\begin{align*}
C(u) = |\setpred{t \in \quot{\treq}{L_n(\aut)}}{\exists v, \nf{t} = u \cdot v}|
\end{align*}
Let  $\extn(u)$
be the possible choices of letters that $u$ can be extended
with, so that there is a further extension to a normal form word of length $n$
which is equivalent to something in $L_n(\aut)$:
\begin{align*}
\extn(u) = \setpred{a \in \alphabet}{\exists v, \eqcl{u \cdot a \cdot v} \cap L_n(\aut) \neq \emptyset}
\end{align*}
That is, $a \in \extn(u)$ iff $C(u\cdot a) > 0$.
Then, after having sampled prefix $u$, the next letter
 $a \in \extn(u)$ to append to $u$ should be picked with probability
\[
\frac{C(u \cdot a)}{C(u)}
\]
Observe that, for the empty string $\emptyword$, we have $C(\emptyword) = |\quot{\treq}{L_n(\aut)}|$.
Further, for a trace $t \in \quot{\treq}{L_n(\aut)}$ of length
in the language of $\aut$, we have $C({\nf{t}}) = 1$.
Suppose $\nf{t} = a_1a_2\dots a_n$. 
When the letters are sampled independently from each other, 
the probability to sample $\nf{t}$ is thus as desired:
$$
\frac{C(a_1)}{C(\lambda)}\cdot \frac{C(a_1 a_2)}{C(a_1)}  \dots  \frac{C(a_1\cdots a_n)}{C(a_1\cdots a_{n-1})} = \frac{C(a_1\dots a_n)}{C(\lambda)} = \frac{1}{|\quot{\treq}{L_n(\calA)}|}
$$ 

As such, this style of sampling bears resemblance to the standard approach for
sampling that iteratively produces 
a sequence of prefix-extension choices~\cite{JVV1986,ACJR19,ACJR21,MCMNFA2024,MeeldCPODS2025}.
However, this simple approach requires significant
work to adapt to our setting, because the sampled word may not 
be accepted by the automaton: even when a trace $t$ touches
$L_n(\aut)$ (i.e., $t \cap L_n(\aut) \neq \emptyset$), 
its normal form $\nf{t}$ need not be accepted by $\aut$ (i.e., $\nf{t} \not\in L_n(\aut)$).
Further, the quantity $C(u)$ is hard to compute exactly (\thmref{DFA-hard}).
We circumvent these two issues as follows.
We first show that the set of
words whose normal form starts with a fixed string $u$ as prefix, is regular
and accepted by a DFA of size $O(\poly{|u|})$.
We then leverage our FPRAS (\secref{fpras})
to produce approximate estimates for $\set{C(w)}_{w \in L_{\nf{}}}$
so that our sampler can approximately sample normal forms.

In~\secref{prefix-validator}, we show the construction of
our \emph{prefix validator automaton} $\aut_u$, which, for a fixed normal form $u$,
accepts the set of all words that have $u$ as a prefix in their normal form.
In~\secref{fpaus}, we present the final almost uniform sampler
together with analysis of its correctness.


\subsection{Prefix-Validator Automaton} 
\seclabel{prefix-validator}

Here, we describe an essential component of our FPAUS --- a
 succinct computational model that, for a given word $u \in L_{\nf{}}$
(i.e., $u$ is known to be lexicographically minimal from its class),
accepts the set of all words $w$ whose normal form starts with $u$.
We show that in fact, this model is a DFA which we call the \emph{prefix validator
automaton} for $u$ and denote using $\aut_u$, has polynomially many states
and can be constructed in polynomial time (see \lemref{soundness_prefix_validator}).

Given $u \in \Sigma^*$, we view the trace partial order $\trpo{u}$ as a DAG $G(\trpo{u})$ 
where the vertices are the $(a,i) \in {\sf LABIND}_u$ 
and there is an edge from $(a,i)$ to $(b,j)$ iff if $(a,i) \trpo{u} (b,j)$. 
By convention $\trpo{\lambda}$ is seen as the DAG with an empty set of vertices. 
The lemmas of this section are proved in appendix \ucomment{our companion technical report}.

\begin{definition}[DAG-prefix]
Let $u,u' \in \Sigma^*$, $G(\trpo{u'})$ is a \emph{DAG-prefix} of $G(\trpo{u})$ when $G(\trpo{u'})$ is a sub-DAG of $G(\trpo{u'})$ whose sources are also sources of $G(\trpo{u})$ and that is upward closed, that is, if $(a,i)$ is a vertex of $G(\trpo{u'})$ and if $(b,j)$ is an parent of $(a,i)$ in $G(\trpo{u})$ then $(b,j)$ is a parent of $(a,i)$ in $G(\trpo{u'})$.
\end{definition}

Importantly, $G(\trpo{u'})$ can be a DAG-prefix of $G(\trpo{u})$ while $u'$ is not a prefix of $u$. Note that if $G(\trpo{u'})$ is a DAG-prefix of $G(\trpo{u})$ and $u' = u''a$, with $a \in \Sigma$, then $G(\trpo{u''})$ is also a DAG-prefix of $G(\trpo{u})$.

\begin{definition}[Border]
The \emph{border} of a DAG-prefix $G(\trpo{u'})$ of $G(\trpo{u})$ is the set $L \subseteq {\mathsf LABIND}_u$ such that $(b,j) \in L$ if and only if $G(\trpo{u' b})$ is a DAG-prefix of $G(\trpo{u})$. $b$ is called a letter of the border. It is clear that there cannot be $j \neq k$ such that both $(b,j)$ and $(b,k)$ are in $L$. 
\end{definition}

Determining whether $G(\trpo{u'})$ is a DAG-prefix of $G(\trpo{u})$ takes polynomial time. It follows that finding the border of a DAG-prefix of $G(\trpo{u})$ also takes polynomial time.

\begin{definition}[$u$-Prefix and $u$-Residual]
Let $u \in \Sigma^*$ and $w \in \Sigma^*$. The \emph{$u$-prefix} of $w$ is the longest prefix $u'$ of $w$ such that $G(\trpo{u'})$ is a DAG-prefix of $G(\trpo{u})$. This is well-defined since $G(\trpo{\lambda})$ is a DAG-prefix of $G(\trpo{u})$. Writing $w = u'v$, $v$ is the \emph{$u$-residual} of $w$.
\end{definition}


States in the prefix-validator automaton for $u$ are tuples of the form $(u',b,L)$ where $u' \in \Sigma^*$, $b \in \Sigma \cup \{\lambda\}$ and $L \subseteq \Sigma$. A word $w$ reaches $(u',b,L)$ if and only if $u'$ is the $u$-prefix of $\nf{w}$ and $b$ is the first letter the $u$-residual of $\nf{w}$ and $L$ is the set of letters of the $u$-residual of $\nf{w}$. By convention, $b = \lambda$ if the $u$-residual of $\nf{w}$ is $\lambda$. The initial state is $(\lambda,\lambda,\emptyset)$ and the accepting states are all states $(u',b,L)$ where $u' = u$. To understand the transitions it is essential to get a good grasp on how, given $w \in \Sigma^*$ and $a \in \Sigma$, one finds the $u$-prefix of $\nf{wa}$ knowing only the $u$-prefix of $\nf{w}$ plus the first letter and the set of letters of the $u$-residual of $\nf{w}$.

Suppose we have
$$
\nf{w} = \underbrace{\myBox{red!40}{\ell_1\dots\ell_k}}_{u'}\underbrace{\myBox{blue!40}{\ell_{k+1} \dots \ell_l}}_{v}
$$
where the word in red is the $u$-prefix $u'$ of $\nf{w}$ and the word in blue is the $u$-residual $v$ of $\nf{w}$. We claim that $\nf{wa}$ is obtained by inserting $a$ at a certain position in $\nf{w}$ without permuting the remaining letters.

\begin{restatable}{lemma}{insertA}\label{lemma:insert_a}
Let $\nf{w} = \ell_1 \dots \ell_l$ and $a \in \Sigma$. There is an integer $i \in [l + 1]$, such that (with $\ell_{l+1} = \lambda$) $$
\nf{w \cdot a} =  \ell_1\dots \ell_{i-1} \cdot a \cdot \ell_i\dots \ell_l$$
\end{restatable}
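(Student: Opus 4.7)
The plan is to prove the lemma by induction on $l = |\nf{w}|$, using the standard greedy characterization of lexicographic normal forms: for any $u$, $\nf{u}$ is produced left-to-right by repeatedly extracting the lex-smallest minimal element of the (remaining) trace partial order of $u$. I will compare the greedy runs on $\trpo{w}$ and on $\trpo{wa}$, and show that they agree on every letter except the single new $a$-vertex, which corresponds to the insertion position $i$.

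The base case $l = 0$ is immediate: $\nf{\emptyword \cdot a} = a$ with $i = 1$. For the inductive step, note that $\trpo{wa}$ is obtained from $\trpo{w}$ by adding one new vertex labeled $a$ placed above all $\trpo{w}$-maximal elements whose labels are dependent with $a$ (together with the previous $a$-occurrences, by transitivity). Consider the first letter produced by the greedy construction on $\trpo{wa}$:

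\textbf{Case (a).} The greedy step selects the new $a$-vertex. This happens exactly when the new $a$ is minimal in $\trpo{wa}$ (so no letter of $w$ is dependent with $a$, in particular $\#_a(w) = 0$) and is lex-smallest among the minimal labels. Deleting this vertex from $\trpo{wa}$ restores $\trpo{w}$, so the remainder of $\nf{wa}$ is $\nf{w} = \ell_1 \dots \ell_l$, giving $\nf{wa} = a\,\ell_1\dots\ell_l$, which is the claimed form with $i=1$.

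\textbf{Case (b).} The greedy step selects some letter other than the new $a$. The set of minimal labels of $\trpo{wa}$ is either the set of minimal labels of $\trpo{w}$, or that set extended by $\{a\}$ (when the new $a$ is isolated); since the new $a$ was not chosen, the letter picked is the lex-smallest minimal label of $\trpo{w}$, which is exactly $\ell_1$ by the greedy characterization applied to $\trpo{w}$. Let $w_0$ be any word such that $\trpo{w_0}$ equals $\trpo{w}$ with its $\ell_1$-source deleted; then $\ell_1 w_0 \treq w$, hence $\ell_1 w_0 a \treq wa$, so removing $\ell_1$ from $\trpo{wa}$ yields $\trpo{w_0 a}$. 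Since $\nf{w_0} = \ell_2 \dots \ell_l$, the inductive hypothesis applied to $w_0$ gives $\nf{w_0 a} = \ell_2\dots\ell_{i-1}\,a\,\ell_i\dots\ell_l$ for some $i \in \{2,\dots,l+1\}$, and prepending $\ell_1$ completes the induction.

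The main obstacle is the bookkeeping in case (b): verifying that deleting the $\ell_1$-source of $\trpo{wa}$ produces exactly $\trpo{w_0 a}$ and that $\nf{w_0} = \ell_2 \dots \ell_l$. Both facts follow from the elementary observation that if $\ell$ is a minimal label of $\trpo{u}$, then $u \treq \ell \cdot u'$ for some $u'$ with $\trpo{u'} = \trpo{u} \setminus \{\ell\text{-source}\}$, and $\nf{u} = \ell \cdot \nf{u'}$; applied once to $w$ and once to $wa$ with the common minimal source $\ell_1$, this lets the induction close.
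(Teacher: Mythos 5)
Your proof is correct, but it takes a genuinely different route from the paper's. You argue by induction on $|w|$, using the greedy characterization of the lexicographic normal form of a trace (repeatedly emit the $\lexord$-smallest label among minimal vertices of the trace partial order), and you track how adding the new $a$-vertex, which sits above all vertices whose labels are dependent with $a$, perturbs the first greedy choice: either the new vertex is taken first (giving $i=1$), or the old vertex for $\ell_1$ is still taken first and the claim recurses on $w_0a$. The paper instead gives a direct, non-inductive argument: writing $\nf{w\cdot a'}=w_1'a'w_2'$ as a rearrangement of the letters of $\nf{w}$ with $a'$ inserted, it observes that the sequence of independent adjacent transpositions turning $\nf{w}\cdot a'$ into $\nf{wa'}$, with the moves involving $a'$ deleted, turns $w_1w_2=\nf{w}$ into $w_1'w_2'$, and then two applications of $\lexord$-minimality (of $\nf{w}$ and of $\nf{wa'}$) force $w_1=w_1'$ and $w_2=w_2'$. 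Your approach buys a more algorithmic picture (it essentially computes the insertion position $i$) but depends on the greedy/recursive identity $\nf{t}=\ell\cdot\nf{t'}$, which the paper never states and which you would need to prove (easy: same-letter occurrences are ordered since $\indep$ is irreflexive, so the greedy choice is unique and any lex-least linearization must start with it); the paper's argument is self-contained given only the definition of $\nf{}$ as lex-minimum and the commutation characterization of $\treq$. One small caution: your closing ``elementary observation'' as literally stated is false for an arbitrary minimal label $\ell$ (e.g., $(a,b)\in\indep$, $u=ba$, $\ell=b$ gives $\nf{u}=ab\neq b\cdot\nf{a}$); it holds only for the $\lexord$-smallest minimal label. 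In both places you invoke it, $\ell_1$ is indeed that label (for $w$ by definition, for $wa$ by your case (b) analysis), so the argument stands, but the observation should be stated with that hypothesis.
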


We typically assume that $i$ in Lemma~\ref{lemma:insert_a} is as large as possible, so that either $\ell_i = \lambda$ or $a \neq \ell_i$ and $a \lexord \ell_i$. Suppose first that $(a,c) \in \dep$ for some letter $c$ of $v$ (this requires $v \neq \lambda$), then $a$ is ``blocked'' by $v$ from accessing $u'$ and $\nf{wa}$ equals 
$$
\myBox{red!40}{\ell_1\dots\ell_k}\underbrace{\myBox{blue!40}{\ell_{k+1} \dots \ell_i}}_{\neq \lambda}a \, \myBox{blue!40}{\ell_{i+1} \dots \ell_l}
$$
We can then show that the $u$-prefix of $\nf{wa}$ is still the red word, i.e., $u'$. This is because $G(\trpo{u'})$ is a  DAG-prefix of $G(\trpo{u})$ while $G(\trpo{u'\ell_{k+1}})$ is not (otherwise $u'\ell_{k+1}$ would be the $u$-prefix of $\nf{w}$). 

\begin{restatable}{lemma}{uPrefixOne}\label{lemma:uPrefixOne}
Let $a \in \Sigma$ and $w \in \Sigma^*$. Let $u'$ be the $u$-prefix of $\nf{w}$ and $v$ be the $u$-residual of $\nf{w}$. If there is a letter $c$ in $v$ such that $(a,c) \in \dep$, then there are $v_1 \in \Sigma^* \setminus \{\lambda\}$ and $v_2 \in \Sigma^*$ such that $v = v_1v_2$, $\nf{wa} = u'v_1av_2$ and the $u$-prefix of $\nf{w a}$ is $u'$.
\end{restatable}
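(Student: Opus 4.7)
The plan is to invoke Lemma~\ref{lemma:insert_a} to view $\nf{wa}$ as a single-letter insertion into $\nf{w}$, then use the dependency $(a,c)\in\dep$ together with $c\in v$ to force the insertion point to lie strictly to the right of the first letter of $v$, and finally conclude that the $u$-prefix does not grow. First I would write $\nf{w}=\ell_1\cdots\ell_l$ with $u'=\ell_1\cdots\ell_k$ and $v=\ell_{k+1}\cdots\ell_l$, and apply Lemma~\ref{lemma:insert_a} to obtain some $i\in[l+1]$ with $\nf{wa}=\ell_1\cdots\ell_{i-1}\cdot a\cdot\ell_i\cdots\ell_l$. Setting $v_1:=\ell_{k+1}\cdots\ell_{i-1}$ and $v_2:=\ell_i\cdots\ell_l$, the decomposition $\nf{wa}=u'\cdot v_1\cdot a\cdot v_2$ with $v=v_1 v_2$ is automatic, so the content of the lemma reduces to two facts: (i) $i\ge k+2$, which makes $v_1\neq\lambda$, and (ii) the $u$-prefix of $\nf{wa}$ equals $u'$.

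The main obstacle will be pinning down the inequality $i\ge k+2$. Let $m$ be such that the distinguished occurrence of $c$ in $v$ is the $m$-th occurrence of $c$ in $\nf{w}$, and let $j$ be its position in $\nf{w}$; since $c$ lies in $v$, we have $j\ge k+1$. Because $a$ is appended at the very end of $wa$, this $m$-th occurrence of $c$ precedes the newly appended $a$ in the string $wa$, and since $(a,c)\in\dep$ the trace partial order $\trpo{wa}$ orders this $c$ before that $a$. Any linearization of $\trpo{wa}$, and in particular $\nf{wa}$, must therefore place the $m$-th $c$ before the new $a$. By Lemma~\ref{lemma:insert_a} the relative order of the letters of $\nf{w}$ is preserved inside $\nf{wa}$: the $m$-th $c$ sits at position $j$ if $j<i$ and at position $j+1$ if $j\ge i$, while the new $a$ sits at position $i$. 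Only the first case is consistent with $c$ preceding $a$, so $j<i$, and combined with $j\ge k+1$ this yields $i\ge k+2$ as required.

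To conclude I would verify that $u'$ remains the $u$-prefix of $\nf{wa}$. The length-$k$ prefix of $\nf{wa}$ is $u'$, and $G(\trpo{u'})$ is a DAG-prefix of $G(\trpo{u})$ by hypothesis, so the $u$-prefix of $\nf{wa}$ has length at least $k$. Since $i>k+1$, the $(k+1)$-th letter of $\nf{wa}$ is still $\ell_{k+1}$, so the length-$(k+1)$ prefix of $\nf{wa}$ equals $u'\cdot\ell_{k+1}$. But this string is already a prefix of $\nf{w}$, and by the maximality defining $u'$ as the $u$-prefix of $\nf{w}$ we know $G(\trpo{u'\cdot\ell_{k+1}})$ is not a DAG-prefix of $G(\trpo{u})$. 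Hence $u'$ is also the longest prefix of $\nf{wa}$ whose trace-order graph is a DAG-prefix of $G(\trpo{u})$, giving the claim.
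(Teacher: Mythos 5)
Your overall route is the same as the paper's: use Lemma~\ref{lemma:insert_a} to write $\nf{wa}$ as $\nf{w}$ with one $a$ inserted, show the insertion index satisfies $i\ge k+2$, and conclude with the downward-closure of the DAG-prefix property. The gap is in the step that is supposed to give $i\ge k+2$. You identify ``the new $a$'' with the letter physically inserted at position $i$ of $\nf{wa}$, but as an element of the trace the appended $a$ is the $(p+1)^{\text{th}}$ (i.e., last) occurrence of $a$, where $p=\#_{a}(w)$; in $\nf{wa}$ that element sits at the position of the \emph{last} $a$ of the word, which coincides with $i$ only if no occurrence of $a$ appears among $\ell_i,\dots,\ell_l$. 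Nothing in your argument rules such occurrences out, and the partial-order constraint you invoke only says that the $m^{\text{th}}$ $c$ precedes the last $a$ of $\nf{wa}$, which is perfectly compatible with $j\ge i$ when an $a$ survives to the right of the insertion point. Excluding that configuration is not a formality: one needs an occurrence/interleaving argument (a letter dependent on $a$ and distinct from $a$ lying to the right of the insertion point would change the interleaving of that letter with the $a$'s, contradicting $\nf{wa}\treq \nf{w}\cdot a$) together with the $\lexord$-minimality of $\nf{w}$ (choosing $i$ maximal forces $a\lexord\ell_i$, and then a later $a$ preceded only by $a$-independent letters could be commuted to position $i$, yielding a word equivalent to $\nf{w}$ but lexicographically smaller). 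You never use the minimality of $\nf{w}$ at all in this step, so the crux of the lemma is assumed rather than proved.

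A second, related problem is the case $c=a$, which the hypothesis permits since $\indep$ is irreflexive and hence $(a,a)\in\dep$; it genuinely arises when the only letter of $v$ dependent on $a$ is $a$ itself, and the prefix-validator construction relies on the lemma in exactly this situation. There your deduction collapses entirely: the requirement that the $m^{\text{th}}$ occurrence of $c=a$ precede the appended occurrence is satisfied by \emph{every} insertion (occurrences of one letter are interchangeable), so the trace order alone yields no lower bound on $i$. The lemma still holds in that case, but only because of the $\lexord$-minimality of $\nf{w}$ as sketched above. The paper's proof follows the same skeleton as yours (it takes the largest position $j$ of $c$ in $v$, asserts $i>j$ from $(a,c)\in\dep$, and then uses that $u'$ and $u'\ell_{k+1}$ are prefixes of both $\nf{w}$ and $\nf{wa}$), so what your proposal is missing is precisely a correct justification of the inequality $i>j$ covering occurrences of $a$ after the insertion point and the case $c=a$.
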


When $a$ is not ``blocked'' by any letter of $v$ then, by Lemma~\ref{lemma:insert_a}, $\nf{wa}$ is takes of the following three forms. 
$$
\text{(I)} : \myBox{red!40}{\ell_1\dots\ell_i}\,a \underbrace{\myBox{red!40}{\ell_{i+1}\dots\ell_k}}_{\neq \lambda}\myBox{blue!40}{\ell_{k+1} \dots \ell_l}
\qquad 
\text{(II)} : \myBox{red!40}{\ell_1\dots\ell_k}\,a \, \myBox{blue!40}{\ell_{k+1} \dots \ell_l}
\qquad
\text{(III)} : \myBox{red!40}{\ell_1\dots\ell_k}\underbrace{\myBox{blue!40}{\ell_{k+1} \dots \ell_i}}_{\neq \lambda}a \, \myBox{blue!40}{\ell_{i+1} \dots \ell_l}
$$
We can show that case (I) occurs if only if $\nf{u'a} \neq u'a$. In fact, in case (I), $\nf{u'a} = \ell_1\dots\ell_i a \ell_{i+1}\dots \ell_k$ for the $i$ of Lemma~\ref{lemma:insert_a}. Then we can show that the $u$-prefix of $\nf{wa}$ is the $u$-prefix of $\nf{u'a}$ which, importantly, may not be $\nf{u'a}$ in its entirety.

Cases (II) and (III) occurs when $\nf{u'a} = u'a$. Here, perhaps $a$ cannot ``move inside'' $u'$ (because of its dependence with letters of $u'$) or perhaps $a$ can ``move inside'' $u'$ but leaving it outside gives a word that is smaller lexicographically. In any cases, $\nf{wa}$ starts with $u'$ and the question is whether $a$ stays between $u'$ and $v$ (the second case) or whether a lexicographically smaller word is obtained by moving $a$ inside $v$ (the third case). To distinguish between these two cases, it suffices to know whether $\ell_{k+1} \lexord a$ or $a \lexord \ell_{k+1}$. If $\ell_{k+1} \lexord a$ then the smallest lexicographic word is obtained by moving $a$ inside the $u$-residual; this is case (III). If $\ell_{k+1} \lexord a$ then the smallest lexicographic word is obtained by moving $a$ between the $u'$ and $v$; this is case (II). Depending on the case, the $u$-prefix of $\nf{wa}$ can be shown to be $u'$ or $u'a$ because because $G(\trpo{u'})$ is a DAG-prefix of $G(\trpo{u})$ while neither $G(\trpo{u'\ell_{k+1}})$ nor $G(\trpo{u'a\ell_{k+1}})$ is. Note that we still have to check whether $G(\trpo{u'a})$ is a DAG-prefix of $G(\trpo{u})$ in case (II).
  
\begin{restatable}{lemma}{uPrefixTwo}\label{lemma:uPrefixTwo}
Let $a \in \Sigma$ and $w \in \Sigma^*$. Let $u'$ be the $u$-prefix of $\nf{w}$ and $v$ be the $u$-residual of $\nf{w}$. Suppose $(a,c) \in \indep$ for every letter $c$ of $v$. If $\nf{u' a} \neq u'a$ then there is $u'_1,u'_2 \in \Sigma^*$ such that $u' = u'_1u'_2$ and $\nf{u'a} = u'_1au'_2$ and $\nf{wa} = u'_1au'_2v$ and the $u$-prefix of $\nf{w a}$ is the $u$-prefix of $\nf{u' a}$. 
\end{restatable}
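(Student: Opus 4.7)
The plan is to prove the lemma in three steps. First, I apply Lemma~\ref{lemma:insert_a} to the string $u'$ (which is itself in normal form, as any prefix of $\nf{w}$ is in normal form) and the letter $a$: this yields an insertion position $i$ together with a decomposition $u' = u'_1 u'_2$, with $|u'_1| = i-1$, such that $\nf{u'a} = u'_1 \cdot a \cdot u'_2$. The hypothesis $\nf{u'a} \neq u'a$ forces $i \leq |u'|$, so $u'_2 \neq \lambda$; moreover, the validity of the insertion guarantees that $a$ is independent of every letter of $u'_2$, and that $a$ is strictly $\lexord$-smaller than the first letter of $u'_2$.

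Next, I show that $\nf{wa} = u'_1 a u'_2 v$. Membership in $\eqcl{wa}$ is immediate: using the independence of $a$ from every letter of $v$, we have $wa \treq u' v a \treq u' a v \treq u'_1 a u'_2 v$. For lex-minimality, I apply Lemma~\ref{lemma:insert_a} to $\nf{w} = u'v$ and letter $a$ and argue that the unique lex-minimal insertion position is $|u'_1|+1$. Insertions at any feasible position $\leq |u'_1|$ were already ruled out in the subproblem of computing $\nf{u'a}$, and hence also lose in the extended comparison because appending $v$ to both candidates does not change their ordering on the first $|u'|$ characters. Insertions at any position $\geq |u'_1|+2$ (whether inside $u'_2$ or inside $v$) place the first letter of $u'_2$ at index $|u'_1|+1$, and since this letter is strictly $\lexord$-greater than $a$, the comparison at position $|u'_1|+1$ favors inserting at $|u'_1|+1$. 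Hence $\nf{wa} = u'_1 a u'_2 v$.

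Finally, I argue that the $u$-prefix of $\nf{wa}$ coincides with the $u$-prefix of $\nf{u'a}$. Since $\nf{u'a}$ is a word-prefix of $\nf{wa} = \nf{u'a} \cdot v$, the $u$-prefix of $\nf{u'a}$ is already a prefix of $\nf{wa}$ whose trace DAG is a DAG-prefix of $G(\trpo{u})$, giving one direction. For the other direction, suppose for contradiction that the $u$-prefix of $\nf{wa}$ is strictly longer than that of $\nf{u'a}$. By the maximality of the $u$-prefix within $\nf{u'a}$, this can only happen when the $u$-prefix of $\nf{u'a}$ equals $\nf{u'a}$ in full and is extended by the first letter $x$ of $v$, so that $G(\trpo{u'a x})$ is a DAG-prefix of $G(\trpo{u})$. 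I then contradict the fact that $G(\trpo{u'x})$ is \emph{not} a DAG-prefix of $G(\trpo{u})$ (which holds because $u'$ is the maximal $u$-prefix of $\nf{w} = u'v$). The key observation is that since $(a,x) \in \indep$, the new vertex labeled $a$ is not a parent of the new vertex labeled $x$ in $G(\trpo{u})$, so every parent of the $x$-vertex in $G(\trpo{u})$ already lies in $G(\trpo{u'})$; this suffices to verify the sub-DAG, source, and upward-closure conditions required for $G(\trpo{u'x})$ to be a DAG-prefix of $G(\trpo{u})$, the desired contradiction.

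The main obstacles I expect are the lex-minimality argument in the second step and the DAG-transfer argument in the third step. Both exploit the independence of $a$ from all letters of $v$ in an essential way: this independence decouples the local modification inside $u'$ from the suffix $v$, allowing the insertion analysis for $u'a$ to lift directly to $u'va$ and allowing the parent structure of the $x$-vertex in $G(\trpo{u'ax})$ to reduce to that in $G(\trpo{u'x})$.
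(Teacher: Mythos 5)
Your proof is correct and follows essentially the same route as the paper's: Lemma~\ref{lemma:insert_a} applied to $u'$ gives the decomposition $u'=u'_1u'_2$ with $a$ strictly $\lexord$-smaller than the first letter of $u'_2$, the same lex-minimality comparison yields $\nf{wa}=\nf{u'a}\cdot v$, and the same independence observation (since $(a,b)\in\indep$ for $b$ the first letter of $v$, the new $a$-vertex cannot be a parent of the $b$-vertex) shows the $u$-prefix of $\nf{wa}$ cannot extend past $\nf{u'a}$. The only differences are presentational: you compare insertion positions directly where the paper invokes the two-form dichotomy from Lemma~\ref{lemma:insert_a} together with minimality of $\nf{u'a}$, and you unfold the DAG-prefix conditions explicitly where the paper phrases the final step via the border of $G(\trpo{u'a})$.
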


\begin{restatable}{lemma}{uPrefixThree}\label{lemma:uPrefixThree}
Let $a \in \Sigma$ and $w \in \Sigma^*$. Let $u'$ be the $u$-prefix of $\nf{w}$ and $v$ be the $u$-residual of $\nf{w}$. Suppose $(a,c) \in \indep$ for every letter $c$ of $v$ and let $b$ be the first letter of $v$. Suppose $\nf{u' a} = u'a$. If $v = \lambda$ or $a \lexord b$ then $\nf{wa} = u'av$ and the $u$-prefix of $\nf{w a}$ is the $u$-prefix of $u'a$. Otherwise, there are $v_1 \in \Sigma^* \setminus \{\lambda\}$ and $v_2 \in \Sigma^*$ such that $v = v_1v_2$, $\nf{wa} = u'v_1av_2$ and the $u$-prefix of $\nf{w a}$ is $u'$.
\end{restatable}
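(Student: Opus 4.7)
The plan is to reduce to computing $\nf{u'av}$, which equals $\nf{wa}$ because the independence hypothesis gives $wa \treq u'va \treq u'av$. I then apply Lemma~\ref{lemma:insert_a} with $\nf{w} = u'v = \ell_1\cdots\ell_l$ and $|u'|=k$: there is some $i \in [l+1]$ with $\nf{wa} = \ell_1 \cdots \ell_{i-1} \cdot a \cdot \ell_i \cdots \ell_l$, and the task is to identify this $i$ via lex-minimality. Two auxiliary facts used throughout are $\nf{u'} = u'$ (by prefix-closedness of $L_{\nf{}}$, Proposition~\ref{prop:lexnormalform-regular}) and $\nf{v} = v$ (otherwise some $v' \treq v$ with $v' \lexord v$ would give $u'v' \treq u'v$ with $u'v' \lexord u'v$, contradicting $\nf{u'v} = u'v$).

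For the first part, if $v = \lambda$ then $\nf{wa} = \nf{u'a} = u'a = u'av$ is immediate. For $v \neq \lambda$ and $a \lexord b$: positions $i \leq k$ are suboptimal because their first $k+1$ slots reproduce some insertion of $a$ into $u'$, which by $\nf{u'a} = u'a$ is lex-no-smaller than $u'a$; so the winning $i$ lies in $\{k+1, \ldots, l+1\}$, all of which are feasible since $a$ is independent of every $\ell_{k+1}, \ldots, \ell_l$. Among these, insertion at $i$ versus $i+1$ differs only at slots $i$ and $i+1$ (placing $a \ell_i$ versus $\ell_i a$), so $i$ wins iff $a \lexord \ell_i$. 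Scanning from $i = k+1$, the winner is $k+1$ precisely when $a \lexord \ell_{k+1} = b$, yielding $\nf{wa} = u'av$. For the $u$-prefix claim here, I split on whether $G(\trpo{u'a})$ is a DAG-prefix of $G(\trpo{u})$: if not, both the $u$-prefix of $u'a$ and that of $u'av$ equal $u'$ (the unique length-$(k+1)$ prefix of $u'av$ is $u'a$); if yes, I would show that $G(\trpo{u'ab})$ is not a DAG-prefix of $G(\trpo{u})$, by noting that the vertex $(b, \#_b(u')+1)$ has the same parents in $G(\trpo{u})$ whether considered within $u'b$ or $u'ab$—these parents being labeled by letters dependent on $b$, hence different from $a$ which is independent of $b$. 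Since $u'b$ is not a DAG-prefix (as $u'$ is the $u$-prefix of $u'v$), some parent is missing and remains missing in $u'ab$; so the $u$-prefix of $u'av$ equals $u'a$, matching the $u$-prefix of $u'a$.

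For the second part ($v \neq \lambda$ and $b \lexord a$), the same scan continues past $i = k+1$ whenever $\ell_i \lexord a$, yielding $\nf{wa} = u' v_1 a v_2$ where $v_1$ is the longest prefix of $v$ whose letters are all $\lexord a$ (non-empty since $\ell_{k+1} = b \lexord a$) and $v_2$ is the remainder of $v$. The $u$-prefix of $\nf{wa}$ is $u'$: the letter immediately after $u'$ is $b$, and $u'b$ is not a DAG-prefix of $G(\trpo{u})$ because $u'$ is the $u$-prefix of $u'v$. The main obstacle is justifying that global lex-minimality among all insertion positions reduces to this greedy adjacent-position scan—comparing non-adjacent positions $i$ and $i' > i+1$ could in principle require looking further—but since each pairwise comparison involves only slots $i$ and $i+1$ and the winner carries forward by transitivity of the lex order, the sequential analysis yields the correct global minimum.
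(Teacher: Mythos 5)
Your proof is correct and takes essentially the same route as the paper's: you pin down $\nf{wa}$ via Lemma~\ref{lemma:insert_a} together with lexicographic comparison of the feasible insertion positions (the paper packages this as the dichotomy between $u'\nf{av}$ and $\nf{u'a}v = u'av$), and you settle the $u$-prefix claims exactly as the paper does, using that $G(\trpo{u'b})$ is not a DAG-prefix of $G(\trpo{u})$ and that $(a,b)\in\indep$ prevents the newly added $a$-vertex from repairing this in $u'ab$ --- your missing-parent phrasing is the paper's border argument in different words. The only nitpick, namely that ``the parents of the $b$-vertex are labeled by letters dependent on $b$'' is literally true only for the transitive reduction of $\trpo{u}$, is at the same level of informality as the paper's own adjacency step, so there is no substantive gap.
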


So, to derive the $u$-prefix of $\nf{wa}$, we first want to know the letters of the $u$-residual to determine whether $a$ is ``blocked'' from accessing the $u$-prefix. If $a$ is not ``blocked'' then we want to know if we are in situation (I), (II) or (III) which requires computing $\nf{u'a}$ and its $u$-prefix, with $u'$ the $u$-prefix of $\nf{w}$, and we potentially need to know the first letter of the $u$-residual (to distinguish cases (II) and (III)). We know how to compute $\nf{u'a}$ and its $u$-prefix in polynomial-time. Thus, with only the $u$-prefix of $\nf{w}$ and the set of letters and the first letter of the $u$-residual of $\nf{w}$, we can determine the $u$-prefix of $\nf{wa}$ in polynomial-time.

\begin{definition}\label{def:prefix-aut}
	Let $(\alphabet,\indep)$ be a concurrent alphabet.
	Let $u=u_1\cdots u_k \in \alphabet^k$.
	The \emph{prefix validator automaton} for $u$ is the DFA 
	$\aut_u = (\calQ_u,\alphabet,\calT_u,q_{I, u},F_u)$ where
	\begin{itemize}
		\item \textbf{States}: Each state $p \in \calQ_u$ is of the form $(u',b,L)$ where $u' \in \Sigma^*$ such that $G(\trpo{u'})$ is a DAG-prefix of $G(\trpo{u})$, $b \in \Sigma \cup \{\lambda\}$, and $L$ is a set over $\Sigma$. Informally, a word $w$ reaches $(u',b,L)$ if and only if $u'$ is the $u$-prefix of $\nf{w}$ and $\nf{w} = u'v$ where $v$ starts with $b$ (with $b = \lambda$ if $v = \lambda$) and $L$ is the set of letters of $v$.
		\item \textbf{Transitions}: for every $p_1 = (u'_1,b_1,L_1) \in \calQ_u$ and every $a \in \Sigma$ we have a transition $(p_1,a,(u'_2,b_2,L_2))$. If there is $c \in L_1$ such that $(a,c) \in \dep$ then $a$ is ``blocked'' and $u'_2 = u'_1$, $b_2 = b_1$ and $L_2 = L_1 \cup \{a\}$. Otherwise, we compute $\nf{u'_1a} = u''v'$, with $u''$ its $u$-prefix. 
		\begin{itemize}
		\item[•] If $\nf{u'_1a} \neq u'_1a$, then we are in case (I): $u'_2 = u''$, $b_2 = \mathit{firstLetter}(v'\cdot b_1)$ and $L_2 = L_1 \cup \mathit{letters}(v')$. 
		\item[•] If $\nf{u'_1a} = u'_1a$ and if $b_1 = \lambda$ or $a \lexord b_1$ then we are in case (II): $u'_2 = u''$ and $b_2 = \mathit{firstLetter}(v'\cdot b_1)$ and $L_2 = L_1 \cup \mathit{letters}(v')$
		\item[•] $\nf{u'_1a} = u'_1a$ and if $b_1 \lexord a$ then we are in case (III): $u'_2 = u'_1$ and $b_2 = b_1$ and $L_2 = L_1 \cup \{a\}$.
		\end{itemize}
		\item \textbf{Initial State}: The initial state is $q_{I, u} = (\lambda,\lambda,\emptyset)$.
		\item \textbf{Acceptance}: $(u',b,L) \in F_u$ if and only if $u' = u$

	\end{itemize}
\end{definition}

\if{False}

We now describe the workings of the prefix validator automaton and state its correctness.
In the following, for a word $w \in \alphabet^*$ we use 
$\upsets{w}$ to be the collection of subsets $U \subseteq {\sf LABIND}_w$
that are upward closed with respect to $\trpo{w}$,
i.e., $\big((a, i) \in U \land (a, i) \trpo{w} (b, j) \implies (b, j) \in U\big)$.\todo{change to prefix?}

\begin{definition}\label{def:prefix-aut}
	Let $(\alphabet,\indep)$ be a concurrent alphabet.
	Let $\aut=(\calQ,\alphabet,\calT,q_I,F)$ be an NFA.
	Let $u=u_1\cdots u_k \in \alphabet^k$.
	The \emph{prefix validator automaton} for $u$ is an NFA 
	$\aut_u = (\calQ_u,\alphabet,\calT_u,q_{I, u},F_u)$ where
	\begin{itemize}
		\item \textbf{States}: Each state $p \in \calQ_u$ is either 
		of the form $p = (q, U, \sigma, S, P)$,
		where $q \in \calQ$, $U \in \upsets{u}$, $\sigma \in \alphabet \uplus \set{\bot}$
		and $S, P \subseteq \alphabet$, or is the absorbing reject state $p_\rej$.
		Informally, the component $q$ tracks the state in which the
		original automaton $\aut$ may land after reading
		a string $w$.
		The component $U$ tracks the residual suffix of the partial order $\trpo{u}$
		that is not yet observed in the word $w$ read so far.
		The letter $\sigma$ denotes the last letter read in $w$ that belongs to $u$,
		and, is $\bot$ if none exists.
		The component $S$ denotes the set ${\sf Swap2Right}(U) = \setpred{b\in \alphabet}{\forall (a, i) \in U, \text{we have, } (b,a) \in \indep \land a \lexord b}$. Finally, the set $P$ tracks the set of letters seen in the word $w$ read so far and are not part of the prefix of $u$ read so far.


		\item \textbf{Transitions}: Let $p = (q,U,\sigma,S,P), p' = (q',U',\sigma',S',P') \in \calQ_u$
		and let $a \in \alphabet$.
		Then, $(p, a, p') \in \calT_u$ if and only if $(q,a,q') \in \calT$ and one of the following holds:
		\begin{enumerate}[label=(\alph*)]
			\item there is some $i$ for which $(a, i) \in \min_{\trpo{u}} U$,
			$U' = U \setminus \set{(a, i)}$, $\sigma' = a$, $P' = P$.
			Further, if $U' = \emptyset$, then $S' = {\sf Swap2Right}(U')$, and
			$S' = \Sigma$ otherwise.

			\item there is no $i$ for which $(a, i) \in \min_{\trpo{u}} U$,
			$a \in S$, $U' = U$, $\sigma' = \sigma$, $P' = P \cup \set{a}$,
			$S' = {\sf Swap2Right}(U')$, and either $\textcolor{red}{P \cup \set{\sigma}} = \emptyset$, or,
			$\exists c \in P \cup \set{\sigma},  ((a,c) \in \dep \vee c \lexord a)$;
			here, $P \cup \set{\bot} = P$.
		\end{enumerate}

		 \item \textbf{Initial State}: The initial state of $\aut_u$ is the tuple: 
		 \[q_{I, u} = (q_I, {\sf LABIND}_u, \bot, {\sf Swap2Right}({\sf LABIND}_u), \emptyset)\]

		\item \textbf{Acceptance}: $(q,U,\sigma,S,P) \in F_u$ iff $U = \emptyset$ and $q \in F$.
	
	\end{itemize}
\end{definition}
\fi

\begin{restatable}{lemma}{prefixValidatorSoundness}\lemlabel{soundness_prefix_validator}
	The language of the prefix-validator automaton $\aut_u$ is $\lang{\aut_u} = \setpred{w}{\exists v \in \alphabet^*, \nf{\eqcl{w}}  = u \cdot v}$. The size of state space $\states_u$ is bounded
	by $\cwdth \cdot |u|^{\cwdth} \cdot |\alphabet| \cdot 2^{|\alphabet|}$, where $\cwdth$ is the width of the concurrent alphabet $(\alphabet,\indep)$. The
	automaton construction can be performed in time $O(|\calT_u| \cdot |\calQ_u| \cdot (|\Sigma| + |u|^2))$.
\end{restatable}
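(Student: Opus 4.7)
The proof splits into three parts: correctness of the accepted language, bounding $|\calQ_u|$, and bounding the construction time.

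For correctness, the plan is to establish by strong induction on $|w|$ the invariant that after reading $w$ the unique state reached in $\aut_u$ is $(u', b, L)$, where $u'$ is the $u$-prefix of $\nf{w}$, $b$ is the first letter of the $u$-residual $v$ of $\nf{w}$ (with $b = \lambda$ if $v = \lambda$), and $L$ is the set of letters occurring in $v$. The base case $w = \lambda$ matches the initial state $(\lambda,\lambda,\emptyset)$ since $\nf{\lambda} = \lambda$ has empty $u$-prefix and empty $u$-residual. For the inductive step, assume the invariant after reading $w$ and extend by letter $a$; the transition from \defref{prefix-aut} is organized to mirror exactly the trichotomy of Lemmas~\ref{lemma:uPrefixOne},~\ref{lemma:uPrefixTwo}, and~\ref{lemma:uPrefixThree}. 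If some $c \in L_1$ satisfies $(a,c) \in \dep$, then $a$ is ``blocked'' and Lemma~\ref{lemma:uPrefixOne} yields that the $u$-prefix of $\nf{wa}$ remains $u'_1$ while $a$ is appended to the residual, matching the definition's update $u'_2 = u'_1$, $b_2 = b_1$, $L_2 = L_1 \cup \set{a}$. Otherwise we compute $\nf{u'_1 a}$ and its $u$-prefix $u''$; Lemmas~\ref{lemma:uPrefixTwo} and~\ref{lemma:uPrefixThree} identify the three subcases (I), (II), (III) according to whether $\nf{u'_1 a} \neq u'_1 a$, or equals $u'_1 a$ with $b_1 = \lambda$ or $a \lexord b_1$, or equals $u'_1 a$ with $b_1 \lexord a$. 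In each subcase, I verify that $(u'_2, b_2, L_2)$ agree with the $u$-prefix, first letter, and letter set of the $u$-residual of $\nf{wa}$, using the explicit decompositions in those lemmas. Acceptance $u' = u$ is then equivalent to $\nf{\eqcl{w}} = u \cdot v$ for some $v$, establishing the language claim.

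For the state-space bound, I count admissible tuples $(u', b, L)$. Only reachable $u'$ matter, and by the invariant every such $u'$ is the $u$-prefix of some $\nf{w}$; since $L_{\nf{}}$ is prefix-closed (\propref{lexnormalform-regular}), each such $u'$ is itself a lex normal form, and lex normal forms are in bijection with equivalence classes whose trace DAG is a DAG-prefix of $G(\trpo{u})$—equivalently, with antichains in $\trpo{u}$ (through their minimal elements). Because $\indep$ is irreflexive, positions of $\trpo{u}$ in a common antichain must carry pairwise-independent \emph{distinct} letters, so $\trpo{u}$ has width at most $\cwdth$. A Dilworth-style argument then bounds the number of antichains in a poset of width $k$ on $n$ elements by $O(k \cdot n^{k})$, giving $O(\cwdth \cdot |u|^{\cwdth})$ choices for $u'$. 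The component $b \in \alphabet \cup \set{\lambda}$ contributes at most $|\alphabet|+1$ choices, and $L \subseteq \alphabet$ contributes $2^{|\alphabet|}$, yielding the claimed product $\cwdth \cdot |u|^{\cwdth} \cdot |\alphabet| \cdot 2^{|\alphabet|}$.

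For construction time, I perform a forward enumeration from $q_{I,u}$: for each visited state $(u'_1,b_1,L_1)$ and each $a \in \alphabet$, the transition can be computed by first scanning $L_1$ against $\dep$ in $O(|\alphabet|)$ time and, when not blocked, computing $\nf{u'_1 a}$ and its $u$-prefix via the polynomial-time DAG-prefix check in $O(|u|^2)$ time. Charging this cost over all $|\calT_u|$ transitions and accounting for the bookkeeping needed to identify whether a destination state has already been discovered (amortized by $|\calQ_u|$) yields the stated $O(|\calT_u| \cdot |\calQ_u| \cdot (|\alphabet| + |u|^2))$ bound. The main obstacle I anticipate is the case analysis in case~(I): one must verify that the destination $u''$, namely the $u$-prefix of $\nf{u'_1 a}$—which can be a \emph{proper} prefix of $\nf{u'_1 a}$—is really the $u$-prefix of $\nf{wa}$, not merely some prefix thereof. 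This relies crucially on the decomposition $\nf{wa} = u'_1 a u'_2 v$ supplied by Lemma~\ref{lemma:uPrefixTwo} together with the fact that $u''$ is the unique longest prefix of $\nf{wa}$ whose DAG is a DAG-prefix of $G(\trpo{u})$.
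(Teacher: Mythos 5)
Your proposal is correct and follows essentially the same route as the paper's proof: the same inductive invariant on the reached state $(u',b,L)$ (that $u'$ is the $u$-prefix of $\nf{w}$, $b$ the first letter and $L$ the letter set of the $u$-residual), with the case analysis discharged by Lemmas~\ref{lemma:uPrefixOne}, \ref{lemma:uPrefixTwo} and~\ref{lemma:uPrefixThree}, the same count of DAG-prefixes of $G(\trpo{u})$ (at most $\cwdth\cdot|u|^{\cwdth}$) for the state bound, and the same forward enumeration with $O(|\alphabet|+|u|^2)$ per-transition work for the construction time. Two cosmetic remarks: a DAG-prefix is determined by the antichain of its \emph{maximal} (not minimal) elements, and the case-(I) subtlety you flag is already resolved by the conclusion of Lemma~\ref{lemma:uPrefixTwo}, which states outright that the $u$-prefix of $\nf{wa}$ equals the $u$-prefix of $\nf{u'a}$.
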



\subsection{FPAUS for Trace Sampling}
\seclabel{fpaus}

The procedure for almost-uniform trace sampling is given in Algorithm~\ref{alg:sampler}, \textsc{TraceSample}. It takes in the concurrent alphabet $(\alphabet,\indep)$, an automaton $\aut$ over $\alphabet$, a word-length $n$ and a parameter $\delta \in (0,1)$ and returns either $\bot$ or an element of $\quot{\treq}{L_{n}(\aut)}$. \textsc{TraceSample} does $O(\log(\delta^{-1}))$ calls to the subroutine \textsc{TraceSampleCore}, which performs the prefix-extension-based sampling previously described. However, \textsc{TraceSampleCore} does not always return the constructed sample, instead it may reject the sample and return $\bot$ (Line~\ref{line:rejection}). If all $O(\log(\delta^{-1}))$ calls to \textsc{TraceSampleCore} returns $\bot$ then does so \textsc{TraceSample}, otherwise the first non-$\bot$ output is returned. \textsc{TraceSampleCore} knows the probability $\phi$ of the sample it constructs, which may be not be close enough to $|\quot{\treq}{L_{n}(\aut)}|^{-1}$. By rejecting the sample with probability proportional to $\phi^{-1}$, we ensure that if the output of \textsc{TraceSampleCore} is not $\bot$, then all traces are equally likely to returned. In other words, conditionned on not returning $\bot$, \textsc{TraceSampleCore} is better than an almost uniform sampler, it is an \emph{exact} uniform sampler. Still, the probability of \textsc{TraceSampleCore}  returning $\bot$ may be too large, hence the $O(\log(\delta^{-1}))$ independent calls to boost the probability of returning an actual trace.


\begin{algorithm}[t]
	\caption{\textsc{TraceSample}}
	\algolabel{fpaus}
	\begin{algorithmic}[1]
		\Require Concurrent alphabet $(\Sigma,\indep)$, NFA $\calA$ over $\Sigma$, length $n$, parameter $\delta \in (0,1)$
		\State $m \gets 3 \lceil \log(\delta^{-1}) \rceil$; $out \gets \bot$; $i \gets 1$
		\While{$i < m \text{ and } out = \bot$}
		\State $out \gets \textsc{TraceSampleCore}(\calA,(\Sigma,\indep),n,\delta)$
		\State $i \gets i + 1$
		\EndWhile
		\State \Return $out$
	\end{algorithmic}
	\label{alg:sampler}
\end{algorithm}

\begin{algorithm}[t]
	\caption{\textsc{TraceSampleCore}}
	\algolabel{samplerCore}
	\begin{algorithmic}[1]
		\Require Concurrent alphabet $(\Sigma,\indep)$, NFA $\calA$ over $\Sigma$, length $n$, parameter $\delta \in (0,1)$
		\State $u_0 \gets \lambda$; $i \gets 1$; $\phi \gets 1$; $out' \gets \bot$; $\varepsilon' \gets 1/(16n)$; $\delta' \gets \delta/(2^n\cdot 3 \lceil \log(\delta^{-1}) \rceil)$ 
		\State $\tilde{C} \gets \text{FPRAS}(\lang{\calA}, n, \varepsilon', \delta')$
		\While{$i < n$}
		\State $\extn(u_{i-1}) \gets \{u_{i-1} \cdot a \mid u_{i-1}\cdot a \in L_{\nf{}}\}$ 
		\For{each $w \in \extn(u_{i-1})$}
		\State Construct the prefix-validator $\calA_{w}$ per Definition~\ref{def:prefix-aut}~\label{line:prefix-aut}
		\State Construct $\calA'_w$, the intersection of $\calA_w$ with $\calA$
		\State $\tilde{C}(w) \gets \text{FPRAS}(\calA'_w, 
		n, \varepsilon', \delta')$~\label{line:sample}
		\EndFor
		\State Sample $u_i$ from $\extn(u_{i-1})$ with probability $\tilde{C}(u_i)/\sum_{w \in \extn(u_{i-1})} \tilde{C}(w)$
		\State $\phi \gets \phi \cdot \tilde{C}(u_i)/\sum_{w \in \extn(u_{i-1})} \tilde{C}(w)$
		\State $i \gets i+1$
		\EndWhile
		\If{$\phi \geq 1/(2\tilde{C})$}
		\State $out' \gets u_n$ with probability $1/(2 \phi \tilde{C})$\label{line:rejection}
		\EndIf		
		\State \Return $out'$
	\end{algorithmic}
	\label{alg:sampler_core}
\end{algorithm}

\if{False}
\begin{theorem}[FPRAS to FPAUS Reduction]\label{thm:main-reduction}
	Given a concurrent alphabet $(\Sigma, \indep)$ of width $\cwdth$ 
	and an NFA $\calA$ with $m$ states, if there exists an FPRAS \textcolor{red}{for 
	$|\quot{\treq}{L_n}|$} with time complexity $T(n,\varepsilon,\delta)$, 
	then Algorithm~\ref{alg:sampler} is an FPAUS with:
	
	\begin{itemize}
		\item Time complexity $O(n^{\cwdth+2}m|\Sigma|T(n,\frac{\varepsilon}
		{2n},\frac{\delta}{n}))$
		
		\item Output distribution $\mu$ satisfying for all 
		$\tau \in \quot{\treq}{L_n}$:
		\[
		\frac{1-\varepsilon}{|\quot{\treq}{L_n}|} \leq \mu(\tau) \leq 
		\frac{1+\varepsilon}{|\quot{\treq}{L_n}|}
		\]
		with probability at least $1-\delta$
	\end{itemize}
\end{theorem}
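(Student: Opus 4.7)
The plan is to analyze Algorithm~\ref{alg:sampler} in two stages: first, show that conditioned on a high-probability event over the FPRAS randomness, a single invocation of \textsc{TraceSampleCore} behaves as an \emph{exact} uniform sampler over $\quot{\treq}{L_n(\aut)}$ with success probability bounded away from $0$; then, show that the $O(\log(\delta^{-1}))$ independent retries in \textsc{TraceSample} drive the overall failure probability below $\delta$. The time bound then falls out of a direct accounting of the number and cost of FPRAS calls.

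For the distributional argument, fix $\tau \in \quot{\treq}{L_n(\aut)}$ with $\nf{\tau} = a_1 \cdots a_n$, and for $u \in L_{\nf{}}$ let $C(u)$ denote the count defined in Section~\ref{sec:sampling}; by \lemref{soundness_prefix_validator}, $C(u) = |\quot{\treq}{L_n(\aut'_u)}|$, so it is exactly the quantity that the FPRAS estimates on Line~\ref{line:sample}. Define the \emph{good event} $\mathcal{E}$ as the intersection over all $O(n|\Sigma|)$ FPRAS calls (within one \textsc{TraceSampleCore} execution) of the event that $\tilde{C}(w) \in [(1-\varepsilon')C(w), (1+\varepsilon')C(w)]$, together with the analogous event for the top-level $\tilde{C}$. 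A union bound over the $O(n|\Sigma|)$ invocations, each failing with probability $\leq \delta'$, gives $\Pr[\neg \mathcal{E}] \leq O(n|\Sigma|)\delta'$, which the chosen $\delta'$ makes negligible. Under $\mathcal{E}$, using the identity $C(u) = \sum_{a : ua \in L_{\nf{}}} C(ua)$ and telescoping, the probability $\phi(\tau)$ that \textsc{TraceSampleCore} traces the specific extension sequence for $\tau$ lies in $[(1-\varepsilon')^{2n}/C(\emptyword),\, (1+\varepsilon')^{2n}/C(\emptyword)]$. With $\varepsilon' = 1/(16n)$, this forces $\phi(\tau) \geq 1/(2\tilde{C})$ for every $\tau$, so the guard on Line~\ref{line:rejection} is always satisfied and $\tau$ is emitted with unconditional probability $\phi(\tau) \cdot 1/(2\phi(\tau)\tilde{C}) = 1/(2\tilde{C})$. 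Hence conditioned on a non-$\bot$ output, \emph{every} trace is returned with identical probability, and the non-$\bot$ probability equals $|\quot{\treq}{L_n(\aut)}|/(2\tilde{C})$, which is at least $1/(2(1+\varepsilon'))$.

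It follows that the probability all $O(\log(\delta^{-1}))$ rounds of \textsc{TraceSampleCore} return $\bot$ (given $\mathcal{E}$ in each) is at most $(1/2 + O(\varepsilon'))^{O(\log(\delta^{-1}))} \leq \delta/2$. Combining with the union bound over good events across rounds and the multiplicative slack $(1\pm\varepsilon')^{2n} \leq 1\pm \varepsilon$ (for $\varepsilon' = \Theta(\varepsilon/n)$), the output distribution $\mu$ satisfies $\mu(\tau) \in [(1-\varepsilon)/|\quot{\treq}{L_n(\aut)}|,\, (1+\varepsilon)/|\quot{\treq}{L_n(\aut)}|]$ simultaneously for all $\tau$ with probability at least $1 - \delta$. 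For the running time, \textsc{TraceSample} performs $O(\log(\delta^{-1}))$ calls to \textsc{TraceSampleCore}; each makes $n$ iterations, per iteration constructs up to $|\Sigma|$ prefix-validator automata $\aut_w$ (each of size $O(n^{\cwdth})$ by \lemref{soundness_prefix_validator}, absorbing the $|\Sigma|$- and $2^{|\Sigma|}$-factors into the $O(1)$ alphabet assumption), intersects with $\aut$ to obtain an automaton of size $O(m n^{\cwdth})$, and calls the FPRAS with parameters $\varepsilon' = \Theta(\varepsilon/n)$ and $\delta' = \Theta(\delta/n)$. Summing gives the stated $O(n^{\cwdth+2} m |\Sigma| \, T(n, \varepsilon/(2n), \delta/n))$ bound.

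The main obstacle is the step propagating the per-call multiplicative error through the length-$n$ telescoping product that defines $\phi(\tau)$, and verifying that $\varepsilon' = 1/(16n)$ is small enough to make the rejection guard $\phi \geq 1/(2\tilde{C})$ trigger deterministically under $\mathcal{E}$ --- this is what collapses an \emph{almost}-uniform procedure into a genuinely uniform one conditioned on success, and is what allows one to avoid mixing two independent sources of error (rejection slack versus FPRAS slack). A secondary subtlety is ensuring that $\extn(u_{i-1})$ is computed over only those letters $a$ for which $u_{i-1}\cdot a \in L_{\nf{}}$; this is essential both for the identity $C(u) = \sum_a C(ua)$ above and relies on the prefix-closedness of $L_{\nf{}}$ guaranteed by \propref{lexnormalform-regular}.
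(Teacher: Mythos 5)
Your overall strategy is the same as the paper's (a ``good event'' under which all FPRAS estimates are $(1\pm\varepsilon')$-accurate, a telescoping product showing $\phi$ is within $(1\pm\varepsilon')^{O(n)}$ of $1/C(\lambda)$, the observation that with $\varepsilon'=1/(16n)$ the guard $\phi\geq 1/(2\tilde C)$ always passes so the rejection step turns the core sampler into an \emph{exactly} uniform sampler conditioned on success, and a retry argument for the $\bot$ probability). However, there is a genuine gap in how you define and use the good event. You take $\mathcal{E}$ to be ``all $O(n|\Sigma|)$ estimates actually computed during the run are accurate.'' This event is correlated with the sampled path: which FPRAS instances get queried depends on the letters chosen so far, and different instances have different (unknown) success probabilities, each only bounded below by $1-\delta'$. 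Consequently the identity you rely on --- that conditioned on $\mathcal{E}$ and a non-$\bot$ output every trace is returned with \emph{identical} probability $1/(2\tilde C)$ --- does not hold: for a fixed trace $\tau$, $\Pr[out'=\tau \wedge \mathcal{E}]$ factors as (roughly) $\mathsf{E}[\mathbbm{1}(\tilde C \text{ accurate})/(2\tilde C)]$ times the product of the accuracy probabilities of the instances queried along $\tau$'s path, and this product varies from trace to trace by a multiplicative factor as large as $(1-\delta')^{\Theta(n|\Sigma|)}$. So conditioning on your path-dependent $\mathcal{E}$ leaves a residual bias of order $n|\Sigma|\delta'$ between traces, which with your choice $\delta'=\Theta(\delta/n)$ is of order $|\Sigma|\delta$ and is not absorbed by the $(1\pm\varepsilon)$ budget unless $\delta\ll\varepsilon$.

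The paper resolves exactly this point by analyzing a modified procedure (\textsc{TraceSampleCore}$^*$) that precomputes $\tilde C(w)$ for \emph{all} possible prefixes $w$ before any sampling, so that the good event $E$ is determined before the path is chosen and is independent of it; conditioned on $E$ and the realized estimate values, the acceptance probability of every trace is then exactly $1/(2\tilde C)$, and a careful conditional computation (the paper's Claim about $\Pr[out'=t\mid E]$ being constant in $t$) makes the ``exact uniformity given success'' claim rigorous. The price is a union bound over the $2^{n+1}$ possible prefixes, which is precisely why the algorithm sets $\delta'=O(\delta 2^{-n}\log^{-1}(\delta^{-1}))$ rather than $\delta/n$ (harmless for an FPRAS, whose runtime depends only on $\log(1/\delta')$). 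To repair your proof you either need this decoupling device (and the exponentially small $\delta'$), or you must explicitly quantify the path-selection bias introduced by your $\mathcal{E}$ and show it fits inside the error budget; as written, the step ``hence conditioned on a non-$\bot$ output, every trace is returned with identical probability'' is not justified.
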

\begin{proof}
The proof relies on analyzing how the multiplicative approximation 
error accumulates across the sampling steps. Let $C_u$ denote the 
number of traces in $\quot{\treq}{L_n(\calA)}$ whose normal form starts 
with $u$. For the base case, when $u_0 = \lambda$, we have $C_\lambda 
= |\quot{\treq}{L_n}|$. By induction on prefix length, we can show 
the error in estimating extension probabilities compounds 
geometrically, yielding total error bounded by $\varepsilon$ after $n$ 
steps.
Let $E$ be the event that, at every step $i$, all $\tilde{C}(w)$ computed in the course of the algorithm are within $(1\pm \varepsilon')C(w)$. When $E$ holds we have 
$$
\frac{1 - \varepsilon'}{1 + \varepsilon'}\cdot \frac{C(u_i)}{C(u_{i-1})} \leq 
\frac{\tilde C(u_i)}{\sum_{w \in ext(u_{i-1})} \tilde C(w)} 
\leq 
\frac{1 + \varepsilon'}{1-\varepsilon'}\cdot \frac{C(u_i)}{C(u_{i-1}}
$$ 
Thus 
$
\Pr[u_n = a_1\dots a_n \mid E] =  \prod_{i \in [n]} \frac{\widetilde{C}_{u_{i-1}\cdot a_i}}{\sum_{b \in A_i} \widetilde{C}_{u_{i-1}\cdot b}} 
$ is bounded by
$$
\frac{1 - \varepsilon}{L}
\leq
\left(\frac{1 - \frac{\varepsilon}{2n}}{1 + \frac{\varepsilon}{2n}}\right)^n \frac{1}{|C_\lambda|}
\leq
\Pr[u_n = a_1\dots a_n \mid E]
\leq  
\left(\frac{1 + \frac{\varepsilon}{2n}}{1-\frac{\varepsilon}{2n}}\right)^n \frac{1}{|C_\lambda|} \leq \frac{1 + \varepsilon}{L}
$$
A union bound on $\Pr[\text{not } E]$ shows that it is at most $(|A_1| +\dots + |A_n|)\cdot \frac{\delta'}{n\cdot |\Sigma|} \leq \delta'$.
\end{proof}
\fi

\begin{restatable}{theorem}{mainReduction}[FPRAS to FPAUS Reduction]\label{thm:main-reduction}
	Suppose there exists an FPRAS with runtime $T(\calA,n,\varepsilon',\delta')$ for counting the traces of an NFA. Then, Algorithm~\ref{alg:sampler}, TraceSample, is an FPAUS for sampling traces from an NFA. Formally, given a concurrent alphabet $(\alphabet,\indep)$, an NFA $\aut$ over $\alphabet$, a positive integer $n$ and a parameter $\delta \in (0,1)$. TraceSample calls the FPRAS $O(n\log(\delta^{-1}))$ times with parameters $\varepsilon' = O(n^{-1})$ and $\delta' = O(\delta 2^{-n})$, runs in time $O(n\log(\delta^{-1}) (n^3\cdot (|\Sigma|+n^2) + T(\calA,n,\varepsilon',\delta')))$ and returns a value $out \in \{\bot\} \cup \quot{\treq}{L_{n}(\aut)}$. Furthermore, the total variation distance between the distribution of $out$ and the uniform distribution over $\quot{\treq}{L_{n}(\aut)}$ is at most $\delta$, i.e., 
	$$
	\frac{1}{2}\left(\Pr[out = \bot] + \sum_{t \in \quot{\treq}{L_{n}(\aut)}} \left|\Pr[out = t] - \frac{1}{|\quot{\treq}{L_{n}(\aut)|}|} \right| \right) \leq \delta
	$$
\end{restatable}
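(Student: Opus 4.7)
The plan is to decompose the analysis by conditioning on a good event $E$ under which every FPRAS call returns a value within the prescribed multiplicative error, and then to lift the conclusion to TraceSample via the outer booster loop together with a union bound. Write $C(u) = |\setpred{t \in \quot{\treq}{L_n(\aut)}}{\exists v,\ \nf{t} = u \cdot v}|$ and $N = C(\emptyword) = |\quot{\treq}{L_n(\aut)}|$. By \lemref{soundness_prefix_validator}, the language of the product $\calA'_w$ constructed in \lineref{sample} has exactly $C(w)$ traces in its $n$-th slice, so each FPRAS call within TraceSampleCore estimates either $N$ or some $C(w)$ with $|w| \leq n$. Let $E$ be the event that every such estimate, over all $m$ outer iterations, lies within a factor $(1 \pm \varepsilon')$ of its target. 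A single TraceSampleCore performs at most $1 + n|\alphabet|$ FPRAS invocations, so a union bound together with $\delta' = \delta/(2^n \cdot 3\lceil \log(\delta^{-1}) \rceil)$ gives $\Pr[\neg E] \leq m(1 + n|\alphabet|)\delta' \leq \delta/2$ (with ample slack thanks to the $2^n$ safety factor).

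Next, I argue that under $E$ a single call of TraceSampleCore returns $\bot$ with probability bounded away from $1$ and otherwise returns a \emph{perfectly} uniform sample from $\quot{\treq}{L_n(\aut)}$. Fix a trace $t$ with $\nf{t} = a_1 \cdots a_n$. The identity $\sum_{w \in \extn(u)} C(w) = C(u)$, which holds because $L_{\nf{}}$ is prefix-closed (\propref{lexnormalform-regular}), combined with the per-estimate bounds, yields
$$
\phi(t) \;=\; \prod_{i=1}^n \frac{\tilde{C}(a_1 \cdots a_i)}{\sum_{w \in \extn(a_1 \cdots a_{i-1})} \tilde{C}(w)} \;\in\; \left[\left(\tfrac{1-\varepsilon'}{1+\varepsilon'}\right)^n,\ \left(\tfrac{1+\varepsilon'}{1-\varepsilon'}\right)^n\right] \cdot \frac{1}{N}.
$$
With $\varepsilon' = 1/(16n)$ the bracketed factor lies in $[e^{-1/8}, e^{1/8}]$, and $\tilde{C} \in (1 \pm \varepsilon') N$; hence $\phi(t) \tilde{C} > 1/2$ deterministically. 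Thus the guard $\phi \geq 1/(2\tilde{C})$ in \lineref{rejection} always fires, the acceptance probability $1/(2\phi(t) \tilde{C})$ is a valid element of $(0,1]$, and the unconditional probability of returning $t$ in one call equals $\phi(t) \cdot 1/(2\phi(t) \tilde{C}) = 1/(2\tilde{C})$, which is independent of $t$. Consequently, conditional on $E$, each call outputs $\bot$ with probability $q = 1 - N/(2\tilde{C}) \leq 1 - 1/(2(1+\varepsilon'))$, a constant strictly less than $1$, and otherwise outputs each trace with equal probability.

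The outer loop runs $m = 3\lceil \log(\delta^{-1})\rceil$ i.i.d.\ copies of TraceSampleCore: conditional on $E$ the returned value is $\bot$ iff all $m$ copies output $\bot$ (probability $q^m$) and is otherwise uniform over $\quot{\treq}{L_n(\aut)}$. Since $q$ is bounded above by an absolute constant less than $1$, the choice of $m$ yields $q^m \leq \delta/2$. Writing $U_x$ for the uniform distribution on $\quot{\treq}{L_n(\aut)}$ and $\Pi_x$ for the output distribution, the conditional TV distance under $E$ equals $q^m$ since the non-$\bot$ part of $\Pi_x$ already coincides with $U_x$; combining with the earlier union bound gives
$$
\dtv(\Pi_x, U_x) \;\leq\; q^m + \Pr[\neg E] \;\leq\; \tfrac{\delta}{2} + \tfrac{\delta}{2} \;=\; \delta.
$$

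The runtime accounting is routine: each of the $O(nm|\alphabet|)$ inner-loop iterations constructs a prefix validator at cost $O(n^3(|\alphabet|+n^2))$ by \lemref{soundness_prefix_validator}, intersects with $\calA$, and invokes the FPRAS once at cost $T(\calA, n, \varepsilon', \delta')$; summing gives the stated bound. The main technical obstacle is the exact-uniformity step under $E$: the acceptance probability $1/(2\phi(t)\tilde{C})$ is chosen precisely so that the factor $\phi(t)$ --- which depends on $t$ through the telescoping product of FPRAS estimates --- cancels algebraically, leaving the trace-independent value $1/(2\tilde{C})$. The FPRAS accuracy is thus used only to guarantee two robustness properties under $E$, namely that the acceptance probability is well-defined (lying in $(0,1]$) and that the per-call success probability is bounded below by a positive constant, rather than to directly control bias in the output distribution.
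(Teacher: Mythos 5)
Your overall strategy is the one the paper uses: a good event $E$ under which all FPRAS estimates are accurate, the observation that under $E$ the guard $\phi \geq 1/(2\tilde{C})$ at \lineref{rejection} always fires, the algebraic cancellation $\phi(t)\cdot\frac{1}{2\phi(t)\tilde{C}} = \frac{1}{2\tilde{C}}$ giving trace-independent acceptance, boosting over $m = 3\lceil\log(\delta^{-1})\rceil$ repetitions, and a TV decomposition into $\Pr[\neg E]$ plus the conditional error. However, there is a rigor gap at the crux of the argument, namely the claim that conditional on $E$ each non-$\bot$ output of \textsc{TraceSampleCore} is \emph{exactly} uniform. Two related issues arise. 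First, $\phi(t)$ and $\tilde{C}$ are random variables (functions of the FPRAS outputs), so the cancellation only yields a trace-independent probability \emph{after conditioning on the realized estimate values}; one must then sum over those values, as the paper does in Claim~\ref{claim:sampler_claim_2}, to conclude $\Pr[out' = t \mid E] = \sum_k \Pr[\tilde{C}=k\mid E]/(2k)$ is a constant independent of $t$. Second, and more substantively, you define $E$ only over the estimates \emph{actually computed} during the run. That makes $E$ correlated with the sampled path: since the FPRAS is only guaranteed to be accurate with probability at least $1-\delta'$ (and may be more reliable on some product automata $\calA'_w$ than others), conditioning on ``all computed estimates are accurate'' can bias the distribution toward traces whose prefix-validators happen to be estimated more reliably, so exact uniformity conditional on your $E$ need not hold.

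The paper circumvents exactly this by passing to \textsc{TraceSampleCore}$^*$, which precomputes $\tilde{C}(w)$ for \emph{all} $w$ (at most $2^{n+1}$ of them) before any sampling, so that $E$ is measurable with respect to the estimate randomness alone and is independent of the path-sampling randomness; this is also precisely why $\delta'$ carries the $2^{-n}$ factor (your union bound over $O(mn|\alphabet|)$ computed estimates does not need it, but the decoupling does). Your remark about ``ample slack thanks to the $2^n$ safety factor'' suggests you sensed this, but as written the proof neither defines $E$ over all possible prefixes nor performs the conditioning-on-estimates step, so the statement ``conditioned on not returning $\bot$, the sampler is exactly uniform given $E$'' is not justified. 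The fix is mechanical (adopt the precomputation coupling and redo the cancellation conditionally on the estimate values, as in Claims~\ref{claim:sampler_claim_1} and~\ref{claim:sampler_claim_2}); alternatively, with your path-dependent $E$ one could still salvage an almost-uniformity bound by accounting for the extra $O(mn|\alphabet|\delta')$ bias, but then the output conditioned on $E$ is no longer exactly uniform and the TV accounting must absorb this additional term. The runtime and parameter accounting in your proposal is otherwise consistent with the paper's.
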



\section{An FPRAS for Trace Counting}
\seclabel{fpras}

We now turn our focus to presenting the FPRAS for trace counting. While the 
high-level structure of our approach follows the sampling-based approach of 
Meel and de Colnet~\cite{MeeldCPODS2025} in the context of counting words of NFA, there are 
significant technical barriers posed by trace counting, which are absent in 
the case of counting words of NFA. In order to discuss the technical barriers, 
we will present the high-level structure of the approach and highlight what 
differentiates trace counting from counting words of NFA. 

At a high level, the FPRAS works by computing an estimate $N(q)$ (of $|\quot{\treq}{L(q) \cap \alphabet^i}|$)
for each state $q$ of the given automaton $\aut$ 
for each length $0 \leq i\leq n$,
and achieves this by visiting the states of the unrolled automaton $\autunroll$
one layer at a time.
For each state $q \in \states^i$ (i.e., states in 
layer $i$ of $\autunroll$), we compute $N(q)$ by using 
the values the estimates $(N(q'))_{q' \in \preds(q)}$,
together with sets 
$(S^r(q'))_{r \in [\nsnt], q' \in \preds(q)}$ which are intuitively 
sets of traces sampled uniformly from $L(q)$. 
After this, the FPRAS also computes $(S^r(q))_{r \in [\nsnt]}$ for the new state $q$;
it does so by looking at sampled sets 
$(S^r(q'))_{r \in [\nsnt], q' \in \preds(q)}$ from the previous layer,
and extending it with a transition symbol. 

The core philosophy behind sampling-based approaches is that if we are trying 
to estimate the cardinality of a set $\mathcal{K}$, then we would like to 
sample every element $k \in \mathcal{K}$ with equal probability, i.e., no 
element of $\mathcal{K}$ should be preferred over another. In the case of 
counting problems over automata, we are working with unrolled automata. 
Accordingly, in the context of \#NFA, this translates to the desideratum of 
sampling every word $w \in L(q)$ for $q \in \calQ^i$ with equal probability. 
In the case of counting traces (i.e., counting \emph{modulo equivalence}), 
we want to ensure that our sample set does not 
consist of more than one word from the same equivalence class. We also want to 
ensure that for every trace $t$, we have a unique representative word 
$w \in t$ such that the set of samples would either contain $w$ or no 
other word from $t$. 
However, we do not a priori know the set of words 
in the equivalence class (or trace) $t$ for which there 
exists a path from the start state to 
$q \in \states^{\ell}$ and therefore, we cannot designate the representative 
word for a given class. The key observation is that we can still define a 
total ordering $\prec$ on all the words in $L(q)$, and designate the 
lexicographically smallest word in a class as the representative word for that 
class. It is worth remarking that the representative word for a class is now 
dependent on the state $q$, but it turns out that from technical analysis, all 
we need is that every class in $L(q)$ has a unique representative word.  

Now moving on to the real technical hurdle: First observe that in the case of 
counting words, for every pair of words $w, w' \in L(q)$, if there exists $v$ such that 
$w \cdot v \in L(q_F)$, then $w' \cdot v$ is also in $L(q_F)$. Equally 
crucially, if $w \neq w'$, then $w\cdot v \neq w' \cdot v$. Therefore, there 
is no reason for us to prefer $w$ over $w'$ (or vice-versa). 

Unfortunately, the aforementioned property breaks down in the case of trace 
counting. Let $q \in \calQ^i$, consider $w \in L(q)$ and $w'$ such that $w$ 
and $w'$ do not belong to the same class, i.e., $\eqcl{w} \neq \eqcl{w'}$ but there exists $v$ and $v'$ such that  $\eqcl{w\cdot v} \neq \eqcl{w' \cdot v'}$. Why does 
this pose a challenge? 
Well, it may be the case that we have 
$L(q) = \set{w_1, w_2, w_3}$ together with $\eqcl{w_1} \neq \eqcl{w_2}$ and 
$\eqcl{w_2} \neq \eqcl{w_3}$ and $\eqcl{w_1} \neq \eqcl{w_3}$, 
i.e., all the three words $w_1, w_2, w_3$ belong to different equivalence classes
but have a run in the unrolled automaton $\autunroll$ that leads to 
the same state $q$. 
Furthermore,  suppose we have the case that for $v_1$ and $v_2$
\begin{itemize}
	\item $\{w_i \cdot v_1, w_i \cdot v_2\} \in L(q_F)$ for all $i = 1,2,3$
	\item $\eqcl{w_1 \cdot v_1} = \eqcl{w_2 \cdot v_2}$ and $\eqcl{w_1 \cdot v_2} = \eqcl{w_2 \cdot v_1}$ 
	\item  
	$\eqcl{w_3 \cdot v_1} \neq \eqcl{w_1 \cdot v_2}$ and $\eqcl{w_3 \cdot v_2} \neq \eqcl{w_1 \cdot v_1}$
\end{itemize}

The aforementioned challenge is rather nontrivial to tackle, as when we are 
sampling \emph{representative} words for $q$, we do want to sample $w_1$, $w_2$ 
and $w_3$ with equal probability since all of them belong to different 
equivalence classes (and in our constructed example, each class is of size $1$). 
However, $\eqcl{w_1 \cdot v_1} = \eqcl{w_2 \cdot v_2}$ and $\eqcl{w_1 \cdot v_2} = \eqcl{w_2 \cdot v_1}$, we are suddenly preferring $\eqcl{w_1 v_1}$ over $\eqcl{w_3 v_1}$. Therefore, 
 if we want to ensure that we sample $\eqcl{w_3 v_1}$ with same probability as $\eqcl{w_1 v_1}$ , we 
would have to keep larger sample sets. How large should these sets be? That 
quantity would depend on how often multiple equivalence classes can shrink to 
one when they are extended by a word -- similar to how $\eqcl{w_1}$ and 
$\eqcl{w_2}$ when extended by $v_1$ (or $v_2$) shrink to the same equivalence class. We 
capture the aforementioned quantity in Lemma~\ref{lemma:main_result_core}, which in turn 
leads to a factor of $\cwdth\cdot n^\cwdth$ in the size of the sets of samples 
$S^r(q)$.

\subsection{Algorithm}

We present an efficient approximation algorithm for counting 
traces in regular languages.  Our algorithm constructs for each state $q \in \calQ^\mathsf{u}$ 
\begin{itemize}
\item[•] an estimate $N(q)$ of $|\quot{\treq}{L(q)}|$ and
\item[•] sets of samples $S^1(q),\dots,S^\nsnt(q) \subseteq L(q)$.
\end{itemize}

The main algorithm {\TraceMC} (Algorithm~\ref{alg:main}) takes as input an NFA $\calA$, a
 length $n$, and approximation parameters $\varepsilon$ and $\delta$. It first computes the unrolled NFA $\calA^{\mathsf{u}}$ of $\calA$ for length $n$ and returns $0$ if the language
 is empty. The algorithm then computes the width $\cwdth$ of the independence relation and
 sets parameters $\ns$, $\nt$, $\nsnt$, $\nv$, and $\threshold$ based on $\varepsilon$,
  $\delta$, $\cwdth$, $n$, and $|\calQ^{\mathsf{u}}|$. It runs {\TraceMCCore} independently $\nv$ times and returns the median of the results. 
  \begin{algorithm}[htb]
  	\caption{$\TraceMC(\calA,n,\varepsilon,\delta)$}
  	\label{alg:main}
  	\begin{algorithmic}[1]
  		\State compute unrolled NFA $\calA^{\mathsf{u}}$ of $\calA$ for $n$
  		\State \textbf{if} $L(\calA^{\mathsf{u}}) = \emptyset$ \textbf{then return} $0$		
  		\State compute $\cwdth$ the width of $(\Sigma,\indep)$
  		\State $\ns \gets \lceil 8\cdot\cwdth\cdot n^{\cwdth+1}\cdot(1+\varepsilon)\cdot\varepsilon^{-2} \rceil$; $\quad\nt \gets \lceil 2\cdot\ln(16\cdot|\calQ^{\mathsf{u}}|) \rceil$; $\quad\nsnt \gets \ns\cdot\nt$; $\quad\nv \gets \lceil 8\cdot\ln(\delta^{-1})\rceil$
  		\State $\threshold \gets 16\cdot\nsnt \cdot(1-\varepsilon)^{-1}\cdot|\calQ^{\mathsf{u}}|$
  		\State \textbf{for} $1 \leq j \leq \nv$ \textbf{do} $\mathsf{est}_j \gets \TraceMCCore(\calA^{\mathsf{u}},n,\nsnt,\ns,\nt,\threshold)$
  		\State \Return $\median(\mathsf{est}_1,\dots,\mathsf{est}_{\nv})$
  	\end{algorithmic}
  \end{algorithm}
  
  To prevent excessive resource consumption, {\TraceMCCore} maintains careful control of the number of samples through 
  the threshold parameter $\threshold$, terminating early with an 
  estimate of zero if this number grows too 
  large. This answer is erroneous but the analysis provides an upper bound on the probability to exit the algorithm this way. The 
  final estimate is computed in $\TraceMC$ as the median of multiple 
  independent runs, providing robust approximation guarantees.

{\TraceMCCore} (Algorithm~\ref{alg:core}) processes states of the unrolled automaton layer by layer. It initializes $N(q_I) \gets 1$ and $S^r(q_I) \gets \{\lambda\}$ for all $r \in [\nsnt]$. For each level $i$ from $1$ to $n$ and each state $q \in \calQ^i$, it calls {\estimateAndSample}$(q)$ to compute $N(q)$ and the sample sets $S^r(q)$. The algorithm tracks the total number of samples and terminates early, returning $0$, if this count exceeds $\threshold$. Otherwise, it returns $N(q_F)$.
\begin{algorithm}[ht]
	\caption{$\TraceMCCore(\calA^{\mathsf{u}},n,\nsnt,\ns,\nt,
		\threshold)$}
	\label{alg:core}
	\begin{algorithmic}[1]
		\State $N(q_I) \gets 1$
		\State \textbf{for} $1 \leq r \leq \nsnt$ \textbf{do} $S^r(q_I) \gets \{\lambda\}$
		\State $\text{numberSamples} \gets \nsnt$
		\For{$1 \leq i \leq n$}
		\For{$q \in \calQ^i$}
		\State $\estimateAndSample(q)$
		\State $\text{numberSamples} \gets \text{numberSamples} + \sum_{1 \leq r \leq \nsnt} |S^r(q)|$
		\State \textbf{if} $\text{numberSamples} \geq \threshold$ \textbf{then} \Return $0$\label{line:interrupt}
		\EndFor
		\EndFor
		\State \Return $N(q_F)$
	\end{algorithmic}
\end{algorithm}

{\estimateAndSample} (Algorithm~\ref{alg:estimate}) computes $N(q)$ and the sample sets $S^r(q)$ for a given state $q$. Let $\preds(q) = (q_1,\dots,q_k)$. The procedure first computes $N_{max}(q) = \max(N(q_1),\dots,N(q_k))$. For each trial $r \in [\nsnt]$ and each predecessor $q_i$, it constructs a normalized sample set $\bar S^r(q_i,q) \gets \reduce(S^r(q_i),N(q_i)/N_{max}(q))$. These normalized sets are then aggregated using the {\union} procedure to form $\hat{S}^r(q)$ for each $r$. The procedure computes $\hat N(q) \gets N_{max}(q)\cdot\mom(\ns,$ $|\hat S^1(q)|,\dots,|\hat S^\nsnt(q)|)$ and sets $N(q) \gets \round(q,\min(N_{max}(q),\hat{N}(q)))$ where $\round$ takes in $q$ and a value $v$ and returns the smallest value that is greater than or equal to $v$ and that is either an integer or of the form $(1+\varepsilon)\ell$ or of the form $(1-\varepsilon)\ell$ for some integer $\ell$. The rounding is here for technical reasons. Finally, for each $r \in [\nsnt]$, {\estimateAndSample} constructs $S^r(q) \gets \reduce(\hat{S}^r(q),N_{max}(q)/N(q))$.

\begin{algorithm}[ht]
	\caption{$\estimateAndSample(q)$ with $\preds(q) = (q_1,\dots,q_k)$}
	\label{alg:estimate}
	\begin{algorithmic}[1]
		\State $N_{max}(q)(q) \gets \max(N(q_1),\dots,N(q_k))$
		\State \textbf{for} $1 \leq r \leq \nsnt$ \textbf{and} $1 \leq i \leq k$ \textbf{do} $\bar S^r(q_i,q) \gets \reduce(S^r(q_i),\frac{N(q_i)}{N_{max}(q)})$\label{line:sbar}
		\State \textbf{for} $1 \leq r \leq \nsnt$ \textbf{do} $\hat{S}^r(q) \gets \union(q,\bar S^r(q_1,q),\dots,\bar S^r(q_k,q))$\label{line:shat}
		\State $\hat N(q) \gets N_{max}(q)\cdot\mom(\ns,|\hat S^1(q)|,\dots,|\hat S^\nsnt(q)|)$
		\State $N(q) \gets \round(q,\min(N_{max}(q),\hat{N}(q)))$
		\State \textbf{for}  $1 \leq r \leq \nsnt$ \textbf{do} $S^r(q) \gets \reduce(\hat{S}^r(q),\frac{N_{max}(q)}{N(q)})$\label{line:s}
	\end{algorithmic}
\end{algorithm}

The {\reduce} procedure (Algorithm~\ref{alg:reduce}) adjusts sampling probabilities. Given a random set $S \subseteq W$ and probability $p \in [0,1]$, it generates $S' \subseteq S$ such that $\Pr[w \in S'] = p\cdot \Pr[w \in S]$ for all $w \in W$. The procedure creates an empty set $S'$ and adds each element $w \in S$ to $S'$ independently with probability $p$.

\begin{algorithm}[htb]
	\caption{$\reduce(S,p)$ with $p \in [0,1]$}
	\label{alg:reduce}
	\begin{algorithmic}[1]
		\State $S' \gets \emptyset$
		\State \textbf{for} each $w \in S$\textbf{,} add $w$ to $S'$ with probability $p$
		\State \Return $S'$
	\end{algorithmic}
\end{algorithm}

We assume access to a membership procedure $\memp{\treq}$ that answers $true$ on input $(A,q,w)$ if a word in $\eqcl{w}$ is accepted by $q$ in automaton $A$. Now, in our case we have a word $w\cdot s$ that reaches $q$ through the transition $(q_i,s,q)$ and we have to determine whether there is an equivalent word that reaches $q$ through an earlier transition. We do this by removing $(q_i,s,q)$ and all transitions following it from $\calA^{\mathsf{u}}$ and calling $\memp{\treq}$ on the new automaton, $q$ and $w\cdot s$. 

The {\union} procedure (Algorithm~\ref{alg:union}) aggregates normalized sample sets from the predecessors of $q$. Let $\preds(q) = (q_1,\dots,q_k)$ and $S_1,\dots,S_k$ be sets with $S_i \subseteq L(q_i)$. The procedure initializes an empty set $S$ and computes $T$, the set of all transitions entering $q$. It processes transitions in $T$ according to a fixed total order $\prec$. For each transition $\tau = (q_i,s,q)$ and each word $w \in S_i$, the procedure constructs a modified automaton $\calA'$ by removing from $\calT^\mathsf{u}$ the transition $\tau$ and all transitions that follow $\tau$ in the ordering $\prec$. It then calls the membership oracle $\memp{\treq}(\calA',q,w\cdot s)$ to check whether a word equivalent to $w \cdot s$ reaches $q$ in $\calA'$. If the oracle returns false, the word $w \cdot s$ is added to $S$. The procedure returns $S$.
\begin{algorithm}[ht]
	\caption{$\union(q,S_1,\dots,S_k)$ with $\preds(q) = (q_1,\dots,q_k)$  and $S_i \subseteq L(q_i)$}
	\label{alg:union}
	\begin{algorithmic}[1]
		\State $S \gets \emptyset$
		\State $T \gets \calT^\mathsf{u} \cap (\calQ^\mathsf{u} \times \Sigma \times \{q\})$ 
		\For{$\tau = (q_i,s,q) \in T$}
		\For{$w \in S_i$}
		\State $\calA' \gets (\calQ^\mathsf{u},\Sigma, \calT^\mathsf{u} \setminus \{\tau' \in T \mid  \tau' \succcurlyeq \tau\}, q_I, q_F)$\label{line:new_automaton}
		\State \textbf{if} $\memp{\treq}(\calA',q,w\cdot s)$ answers \emph{false} \textbf{then}   add $w \cdot s$ to $S$ \label{line:membership_check}
		\EndFor
		\EndFor
		\State \Return $S$
	\end{algorithmic}
\end{algorithm}


\subsection{Canonical Runs and Canonical Words}\label{sec:canrun}

We seek to count the traces modulo $\indep$ that intersect $L(q)$, yet our algorithm samples words of $L(q)$. As explained before, words acts as representants of their trace. Thus, if $S(q)$ is a sample set for $q$ and that a word $w \in L(q)$ is put in $S(q)$ by the algorithm, then we have sampled $\eqcl{w}$. To avoid that the size of the intersection $t \cap L(q)$ of a trace $t$ with $L(q)$ does not influence the probability that $t$ is sampled for $q$, we make sure that every trace $t$ intersecting $L(q)$ is equally likely to be sampled through a canonical representant, denoted by $\can(t,q)$.

Words sampled for $q$ are built over words sampled for $q$'s predecessors. That is, the only way $w$ can be sampled for $q$ is if there is a transition $(q',a,q) \in \calT^\mathsf{u}$ such that $w = \omega\cdot a$ and that $\omega$ is sampled for $q'$. This means that sample words are constructed through runs in the automaton. Because the automaton is non-deterministic, two traces $t$ and $t'$ that both intersect $L(q)$ can have different number of runs reaching $q$ for $\can(t,q)$ and $\can(t',q)$, respectively. So to make sure that $t$ and $t'$ are sampled with equal probability, it is important that a canonical word is placed in a sample set only when it is constructed through a specific run. We call that run the canonical run and note it $\run(t,q)$.

Runs in the unrolled automaton are represented as sequences of transitions. For convenience we also use an arrow notation, for instance the run $((q_I,a,q_1),(q_1,b,q_2),(q_2,c,q_3))$ can be written
$$
q_I \xrightarrow{a} q_1 \xrightarrow{b} q_2 \xrightarrow{c} q_3
$$
For $R$ a run and $0 \leq k \leq |R|$, we denote by $\word(R)$ the word constructed by $R$. Note that for the empty run $\lambda$ we have $\word(\lambda) = \lambda$ ($\lambda$ being the symbol used for any empty sequence, be it a word or a run). For instance the word of the run $((q_I,a,q_1),(q_1,b,q_2),(q_2,c,q_3))$ is $abc$.

\begin{definition}[Canonical runs]\label{definition:canonical_run}
Let $q \in \calQ^\mathsf{u}$ and let $t \in \quot{\treq}{L(q)}$. The \emph{canonical run} $\run(t,q)$ of $t$ for $q$ is defined as follows
\begin{itemize}
\item[•] if $q$ is the initial state of $\calA^\mathsf{u}$, then $t = \eqcl{\lambda}$ and $\run(t,q) = \lambda$,
\item[•] otherwise let $(q',s,q) \in \calT^\mathsf{u}$ be the first transition with respect to $\prec$ such that $W = \{w \in L(q') \mid w \cdot s \in t\}$ is not empty. There is a unique $t' \in \quot{\treq}{L(q')}$ such that $W  \subseteq t'$. We set $\run(t,q) = \run(t',q') \cdot (q',s,q)$.
\end{itemize} 
\end{definition}

\begin{definition}[Canonical words]\label{definition:canonical_run}
Let $q \in \calQ^\mathsf{u}$ and let $t \in \quot{\treq}{L(q)}$. The \emph{canonical word} of $t$ for $q$ is $\can(t,q) = \word(\run(t,q))$. It is readily verified that $t = \eqcl{\can(t,q)}$.
\end{definition}

\begin{example}
In~\figref{example}, suppose the independence relation is $I = \{(a,b),(b,a),(b,c),(c,b)\}$. We have that $\eqcl{abcc} = \{abcc,bacc\}$. Both $abcc$ and $bacc$ are accepted by the automaton through the runs 
\begin{align*}
q_I \xrightarrow{b} q_1 \xrightarrow{a} q_6 \xrightarrow{c} q_9 \xrightarrow{c} q_F
\qquad
q_I \xrightarrow{a} q_2  \xrightarrow{b} q_5 \xrightarrow{c} q_9 \xrightarrow{c} q_F
\qquad
q_I  \xrightarrow{b} q_3 \xrightarrow{a} q_6 \xrightarrow{c} q_9 \xrightarrow{c} q_F
\end{align*}
Let the alphabet symbols be ordered as $a \prec b \prec c$, let the states be ordered as $q_F \prec q_1 \prec q_2 \prec q_3 \prec q_4 \prec q_5 \prec q_6 \prec q_7 \prec q_8 \prec q_9 \prec q_{10} \prec q_F$ and define $(q_i,s,q_j) \prec (q_k,s',q_l)$ when $q_i \prec q_k$, or when $q_i = q_k$ and $q_j \prec q_l$, or when $q_i = q_k$ and $q_j = q_l$ and $s \prec s'$. Since the three runs above go through $(q_9,c,q_F)$ we have that 
$$
\run(\eqcl{abcc},q_F) = \run(\eqcl{abc},q_9) \xrightarrow{c} q_F
$$
Now, $\eqcl{abc} \cap L(q_9) = \{abc,bac\}$. $abc$ is constructed by extending $ab \in L(q_5)$ with $c$ while $ba$ is constructed by extending $ba \in L(q_6)$ with $c$. Since $(q_5,c,q_9) \prec (q_6,c,q_9)$ we have that 
$$
\run(\eqcl{abcc},q_F) = \run(\eqcl{abc},q_9) \xrightarrow{c} q_F = \run(\eqcl{ab},q_5) \xrightarrow{c} q_9 \xrightarrow{c} q_F
$$
Following the construction, we find 
$$
\run(\eqcl{abcc},q_F) = \run(\eqcl{a},q_2) \xrightarrow{b} q_5 \xrightarrow{c} q_9 \xrightarrow{c} q_F 
= q_I \xrightarrow{a} q_2  \xrightarrow{b} q_5 \xrightarrow{c} q_9 \xrightarrow{c} q_F
$$
Thus, the canonical word of $\eqcl{abcc}$ for $q_F$ is $abcc$.
\end{example}

\begin{figure*}
\centering
\scalebox{0.8}{
\begin{tikzpicture}[xscale=1.95,yscale=1.5, every node/.style={scale=0.8
5}]
\def\f{\small}
\def\s{0.8cm}

\node[draw=white,,fill=white,circle,minimum size =\s] (qi) at (0,0) { };

\node[draw=white,,fill=white,circle,minimum size =\s] (q1) at (1,+1) { };
\node[draw=white,,fill=white,circle,minimum size =\s] (q2) at (1,+0.33) { };
\node[draw=white,,fill=white,circle,minimum size =\s] (q3) at (1,-0.33) { };
\node[draw=white,,fill=white,circle,minimum size =\s] (q4) at (1,-1) { };

\node[draw=white,,fill=white,circle,minimum size =\s] (q5) at (2,+1) { };
\node[draw=white,,fill=white,circle,minimum size =\s] (q6) at (2,0) { };
\node[draw=white,,fill=white,circle,minimum size =\s] (q7) at (2,-1) { };

\node[draw=white,,fill=white,circle,minimum size =\s] (q8) at (3,+1) { };
\node[draw=white,,fill=white,circle,minimum size =\s] (q9) at (3,0) { };
\node[draw=white,,fill=white,circle,minimum size =\s] (q10) at (3,-1) { };

\node[draw=white,fill=white,circle,minimum size =\s] (qf) at (4,0) { };

\draw[-latex] (qi) to  node[midway,fill=white,font=\f] {$b$} (q1);
\draw[-latex] (qi) to  node[midway,fill=white,font=\f] {$a$}  (q2);
\draw[-latex] (qi) to  node[midway,fill=white,font=\f] {$b$}  (q3);
\draw[-latex] (qi) to  node[midway,fill=white,font=\f] {$c$}  (q4);

\draw[-latex] (q1) to  node[midway,fill=white,font=\f]  {$b$}  (q5);
\draw[-latex] (q1) to  node[near end,fill=white,font=\f]  {$a$}  (q6);

\draw[-latex] (q2) to  node[near end,fill=white,font=\f]  {$b$}  (q5);
\draw[-latex] (q2) to  node[midway,fill=white,font=\f]  {$a$}  (q6);

\draw[-latex] (q3) to  node[midway,fill=white,font=\f]  {$a$}  (q6);
\draw[-latex] (q3) to  node[near end,fill=white,font=\f]  {$b$}  (q7);

\draw[-latex] (q4) to  node[near end,fill=white,font=\f]  {$c$}  (q6);
\draw[-latex] (q4) to  node[midway,fill=white,font=\f]  {$c$}  (q7);

\draw[-latex] (q5) to  node[midway,fill=white,font=\f]  {$b$} (q8);
\draw[-latex] (q5) to  node[near start,fill=white,font=\f]  {$c$} (q9);

\draw[-latex] (q6) to  node[near start,fill=white,font=\f]  {$a$} (q8);
\draw[-latex] (q6) to  node[midway,fill=white,font=\f]  {$c$} (q9);
\draw[-latex] (q6) to  node[midway,fill=white,font=\f]  {$c$} (q10);

\draw[-latex] (q7) to node[midway,fill=white,font=\f]  {$a$} (q10);

\draw[-latex] (q8) to  node[midway,fill=white,font=\f]  {$a$} (qf);
\draw[-latex] (q9) to  node[midway,fill=white,font=\f]  {$c$} (qf);
\draw[-latex] (q10) to [in=-120,out=20] node[midway,fill=white,font=\f]  {$b$} (qf);
\draw[-latex] (q10) to [in=-150,out=50] node[midway,fill=white,font=\f]  {$a$} (qf);


\node[draw,fill=white,circle,minimum size =\s] (qi) at (0,0) {$q_I$};

\node[draw,fill=white,circle,minimum size =\s] (q1) at (1,+1) {$q_1$};
\node[draw,fill=white,circle,minimum size =\s] (q2) at (1,+0.33) {$q_2$};
\node[draw,fill=white,circle,minimum size =\s] (q3) at (1,-0.33) {$q_3$};
\node[draw,fill=white,circle,minimum size =\s] (q4) at (1,-1) {$q_4$};

\node[draw,fill=white,circle,minimum size =\s] (q5) at (2,+1) {$q_5$};
\node[draw,fill=white,circle,minimum size =\s] (q6) at (2,0) {$q_6$};
\node[draw,fill=white,circle,minimum size =\s] (q7) at (2,-1) {$q_7$};

\node[draw,fill=white,circle,minimum size =\s] (q8) at (3,+1) {$q_8$};
\node[draw,fill=white,circle,minimum size =\s] (q9) at (3,0) {$q_9$};
\node[draw,fill=white,circle,minimum size =\s] (q10) at (3,-1) {$q_{10}$};

\node[draw,fill=white,circle,minimum size =\s] (qf) at (4,0) {$q_F$};
\end{tikzpicture}
}
\caption{Automaton for Example 2}\label{fig:example}
\end{figure*}
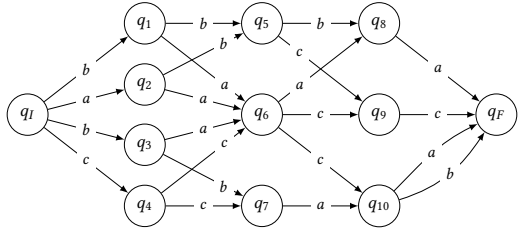

In Section~\ref{sec:fpras}, we claimed that $\TraceMCCore$ creates only sample that are canonical words. We prove this formally with the next lemmas.

\begin{lemma}\label{lemma:samples_are_canonical_words}
Let $q \in \calQ^\mathsf{u}$ and $t \in \quot{\treq}{L(q)}$. Let $r \in [\nsnt]$. If in an run of $\TraceMCCore(\calA^{\mathsf{u}},n,\nsnt,\ns,\nt,\theta)$ we have that $S^r(q) \cap t \neq \emptyset$ then $S^r(q) \cap t = \{\can(t,q)\}$.
\end{lemma}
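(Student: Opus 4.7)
The plan is to induct on the level $i$ of $q$ in the unrolled automaton $\calA^\mathsf{u}$. The base case $i=0$ is immediate: the only state is $q_I$, the only class is $t = \eqcl{\lambda}$, and the algorithm initializes $S^r(q_I) = \{\lambda\}$, which matches $\can(\eqcl{\lambda}, q_I) = \lambda$ by Definition~\ref{definition:canonical_run}. For the inductive step, fix $q \in \calQ^i$ with $i \geq 1$, a class $t \in \quot{\treq}{L(q)}$, and suppose $w^\star \in S^r(q) \cap t$. I want to show $w^\star = \can(t,q)$; combined with the fact that $S^r(q) \cap t \neq \emptyset$, this gives $S^r(q) \cap t = \{\can(t,q)\}$.

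I would trace how $w^\star$ enters $S^r(q)$. By Line~\ref{line:s}, $S^r(q) \subseteq \hat{S}^r(q)$, since $\reduce$ only deletes elements. So $w^\star \in \hat{S}^r(q)$ was added inside $\union$ (Algorithm~\ref{alg:union}) via some transition $\tau = (q_j,s,q)$, meaning $w^\star = w \cdot s$ for some $w \in \bar{S}^r(q_j,q) \subseteq S^r(q_j)$ (again using that $\reduce$ only deletes). The key fact is that the membership check on Line~\ref{line:membership_check} returned \emph{false} for $\calA'$, which (by construction on Line~\ref{line:new_automaton}) contains every transition into $q$ strictly $\prec \tau$ but omits $\tau$ and all later ones. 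Hence no word of $\eqcl{w \cdot s} = t$ reaches $q$ through any transition into $q$ strictly earlier than $\tau$.

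Now I match this with Definition~\ref{definition:canonical_run}. Let $(q',s',q)$ be the $\prec$-first transition into $q$ for which $W = \{u \in L(q') \mid u \cdot s' \in t\}$ is nonempty; by definition $\run(t,q) = \run(t',q') \cdot (q',s',q)$, where $t'$ is the unique class of $\quot{\treq}{L(q')}$ containing $W$. Because $w \cdot s \in t$ witnesses that $\tau = (q_j, s, q)$ carries a word of $t$ into $q$, and because no strictly earlier transition into $q$ carries any word of $t$, the transition $\tau$ must be exactly this first transition, giving $q_j = q'$ and $s = s'$. Then $w \in W \subseteq t'$, and $w \in S^r(q')$, so $S^r(q') \cap t' \neq \emptyset$. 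The inductive hypothesis applied at level $i-1$ (or less, via the layered structure of $\calA^\mathsf{u}$) forces $w = \can(t',q')$. Concatenating, $w^\star = \can(t',q') \cdot s' = \word(\run(t',q') \cdot (q',s',q)) = \can(t,q)$.

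I do not expect any serious obstacle — the argument is essentially definition-chasing, but with one subtle point I would make sure to spell out: the membership oracle $\memp{\treq}$ is called on an automaton $\calA'$ whose transitions \emph{strictly following} $\tau$ inside $T$ are removed, so a \emph{false} answer precisely rules out reaching $q$ through a $\prec$-earlier transition, which is exactly the condition selecting the canonical transition in Definition~\ref{definition:canonical_run}. Once this matching is clean, the rest is a straightforward layered induction, relying only on $\reduce$ being a subsampling procedure and on the one-to-one correspondence between the last transition of $\run(t,q)$ and the first transition into $q$ in $\prec$-order that carries a word of $t$.
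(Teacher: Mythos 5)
Your proof is correct and follows essentially the same route as the paper's: induction on the level of $q$, using $S^r(q) \subseteq \hat{S}^r(q)$, reading off from the $\union$ membership check (on the pruned automaton of Line~\ref{line:new_automaton}) that no $\prec$-earlier transition into $q$ carries a word of $t$, identifying that transition with the one selected in Definition~\ref{definition:canonical_run}, and then applying the inductive hypothesis to the predecessor sample. No gaps; this matches the paper's argument.
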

\begin{proof}
Let $k \in [0,n]$ such that $q \in \calQ^k$. We proceed by induction on $k$. For $k = 0$ the lemma is immediate since $q = q_I$, $L(q_I) = \quot{\treq}{\lambda} = \{\lambda\}$, $\can(\{\lambda\},q_I) = \lambda$ and the sets $S^r(q_I) = \{\lambda\}$. Now, suppose that $k > 0$ and that the lemma holds for all states in $\calQ^{\leq k-1}$. Suppose $w \in S^r(q) \cap t$. Since $S^r(q) \subseteq \hat S^r(q)$ we have that $w \in \hat S^r(q) \cap t$. Let $q_1,\dots,q_l$ be the predecessors of $q$ and recall that $\hat S^r(q) = \union(q,\bar S^r(q_1,q),\dots,\bar S^r(q_l,q))$. Write $w = w' \cdot s$, with $s$ the last symbol of $w$. The $\union$ ensures that $w \in \hat S^r (q)$ if only if there is $i \in [\ell]$ such that the following holds
\begin{itemize}
\item[1.] $(q_i,s,q) \in \calT^\mathsf{u}$  
\item[2.] $w' \in S^r(q_i,q)$  
\item[3.] there is no transition $(q_j,s',q) \prec (q_i,s,q)$ such that $w'' \cdot s' \treq w' \cdot s$ for some $w'' \in L(q_j)$.
\end{itemize} 
Let $t' = \quot{\treq}{w'}$. Note that $S^r(q_i,q) \subseteq S^r(q_i)$ so, by induction, $w' = \can(t',q_i)$. By items 1. and 3., we have that $\run(t,q) = \run(t',q_i) \cdot (q_i,s,q)$, so $\can(t,q) = \can(t',q_i)\cdot s =  w' \cdot s = w$.
\end{proof}

\begin{lemma}\label{lemma:samples_are_canonical_runs}
Let $q \in \calQ^\mathsf{k}$, $t \in \quot{\treq}{L(q)}$ and $\run(t,q) = (q_I = q_0 \xrightarrow{a_1} q_1 \xrightarrow{a_2} q_2 \xrightarrow{a_3} \dots \xrightarrow{a_k} q_k = q)$. If in an run of $\TraceMCCore(\calA^{\mathsf{u}},n,\nsnt,\ns,\nt,\theta)$ we have that $a_1a_2\dots a_k \in S^r(q)$ then for all $0 \leq i \leq k$ we have that $a_1\dots a_i \in S^r(q_i)$.
\end{lemma}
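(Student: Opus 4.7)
The plan is to induct downward on $i$ from $k$ to $0$, with base case $i=k$ given by the hypothesis $a_1\cdots a_k \in S^r(q_k)$. For the inductive step I assume $a_1\cdots a_i \in S^r(q_i)$ and aim to conclude $a_1\cdots a_{i-1} \in S^r(q_{i-1})$.

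The first observation I would use is that the $\reduce$ procedure (Algorithm~\ref{alg:reduce}) only deletes elements: it never introduces new words. Tracing Line~\ref{line:s} backwards from $S^r(q_i)$ gives $a_1\cdots a_i \in \hat S^r(q_i)$; and $\hat S^r(q_i) = \union(q_i,\bar S^r(q_1',q_i),\dots,\bar S^r(q_\ell',q_i))$ where $q_1',\dots,q_\ell'$ are the predecessors of $q_i$. By the body of $\union$ (Algorithm~\ref{alg:union}), there must be a predecessor $q_j'$ and a word $w' \in \bar S^r(q_j',q_i)$ with $w' \cdot a_i = a_1\cdots a_i$ (hence $w' = a_1\cdots a_{i-1}$), such that the transition $\tau = (q_j',a_i,q_i)$ passed the $\memp{\treq}$-check in Line~\ref{line:membership_check}, i.e.\ no $\prec$-earlier transition $(q'',s'',q_i)$ admits a word $\treq$-equivalent to $a_1\cdots a_i$.

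The crux is to identify $q_j'$ with $q_{i-1}$. For this I would invoke Lemma~\ref{lemma:samples_are_canonical_words} at state $q_i$: since $a_1\cdots a_i \in S^r(q_i) \cap \eqcl{a_1\cdots a_i}$, the word $a_1\cdots a_i$ is exactly the canonical word $\can(\eqcl{a_1\cdots a_i},q_i)$. Unpacking Definition~\ref{definition:canonical_run}, the canonical run $\run(\eqcl{a_1\cdots a_i},q_i)$ is obtained by picking the $\prec$-minimum transition into $q_i$ whose source admits a preimage of the class. But this is precisely the condition the $\union$-check in Line~\ref{line:membership_check} enforces on $\tau$. Consequently the canonical-run choice and the $\union$-choice coincide, forcing $\tau = (q_{i-1},a_i,q_i)$ and hence $q_j' = q_{i-1}$.

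From $a_1\cdots a_{i-1} \in \bar S^r(q_{i-1},q_i)$ and another application of the fact that $\reduce$ only removes elements (Line~\ref{line:sbar}), I obtain $a_1\cdots a_{i-1} \in S^r(q_{i-1})$, closing the induction. The only genuinely nontrivial step is the alignment between the $\prec$-ordering used in $\run$ and the $\prec$-ordering used inside $\union$; once one notices that these are literally the same ordering on transitions entering $q_i$, the rest is bookkeeping over $\reduce$ and $\hat S^r$. No new probabilistic argument is needed --- the lemma is really a structural statement chasing the constructions of Algorithms~\ref{alg:estimate},~\ref{alg:reduce}, and~\ref{alg:union}, riding on Lemma~\ref{lemma:samples_are_canonical_words} to pin down which predecessor transition is the one that actually fires.
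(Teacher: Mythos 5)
The paper states this lemma without giving a proof (it appears right after the proof of Lemma~\ref{lemma:samples_are_canonical_words}), so there is no in-paper argument to compare against; your backward-tracing argument is the natural one and mirrors the paper's own proof of Lemma~\ref{lemma:samples_are_canonical_words}: peel off the two $\reduce$ steps (which only delete elements), observe that $\union$ can only have inserted $a_1\cdots a_i$ through some transition $\tau=(q_j',a_i,q_i)$ with $a_1\cdots a_{i-1}\in\bar S^r(q_j',q_i)$ and the $\memp{\treq}$-check passing, and use that this check makes $\tau$ the $\prec$-first transition into $q_i$ through which a word of $\eqcl{a_1\cdots a_i}$ can reach $q_i$, i.e.\ exactly the transition selected in Definition~\ref{definition:canonical_run}. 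I consider the proof correct.

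One step deserves to be made explicit, because as written it is slightly too quick. Invoking Lemma~\ref{lemma:samples_are_canonical_words} only tells you that $a_1\cdots a_i$ is the canonical \emph{word} $\can(\eqcl{a_1\cdots a_i},q_i)$; it does not by itself identify the \emph{states} of $\run(\eqcl{a_1\cdots a_i},q_i)$, and a priori this run could pass through a level-$(i-1)$ state different from $q_{i-1}$ while spelling the same word (e.g.\ if $a_1\cdots a_{i-1}$ lies in $L(q')$ for several predecessors $q'$ of $q_i$ with an $a_i$-transition into $q_i$). What closes this is the prefix property of canonical runs: unrolling Definition~\ref{definition:canonical_run}, $\run(t,q)=\run(t',q_{k-1})\cdot(q_{k-1},a_k,q_k)$ with $\can(t',q_{k-1})=a_1\cdots a_{k-1}$, hence $t'=\eqcl{a_1\cdots a_{k-1}}$ by the remark $t'=\eqcl{\can(t',q_{k-1})}$, and iterating gives $\run(t,q)[i]=\run(\eqcl{a_1\cdots a_i},q_i)$ for every $i$, so the last transition of $\run(\eqcl{a_1\cdots a_i},q_i)$ is indeed $(q_{i-1},a_i,q_i)$. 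With that observation inserted (and it also renders the appeal to Lemma~\ref{lemma:samples_are_canonical_words} optional, since the prefix property already yields $a_1\cdots a_i=\can(\eqcl{a_1\cdots a_i},q_i)$), your identification $\tau=(q_{i-1},a_i,q_i)$ and hence the downward induction go through exactly as you describe.
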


It is now verified that the algorithm can sample only one word per trace and per state and that, assuming $\prec$ is fixed, this word is unique. In addition this word is sampled for $q$ only if it is constructed through its canonical run to $q$. With this we can now say that the algorithm essentially samples traces and we may sometimes replace an event ``$w \in S^r(q)$'' by the event ``$\eqcl{w} \in S^r(q)$''. 

\subsection{Divergence and Dependence} 

When two words $w$ and $w'$ can be sampled for $q$, we will be interested in the probability that both are sample at the same time, that is,  
$$
\Pr[w \in S^r(q) \text{ and } w' \in S^r(q)]
$$
In the next section we will see that the above probability is connected to the \emph{divergence state} of $\run(\eqcl{w},q)$ and $\run(\eqcl{w'},q)$. For $R$ a run in the automaton and $k \geq 0$ an integer at most the number of states in $R$, we denote by $R[k]$ the run $R$ restricted to its first $k$ transitions. For instance if $R = ((q,a,q'),(q',b,q''),(q'',c,q'''))$, then $R[0] = \lambda$, $R[1] = ((q,a,q'))$, $R[2] = ((q,a,q'),(q',b,q''))$ and $R[3] = R$. The \emph{end state} of $R$ is the end state of its last transition.

\begin{definition}
Let $R$ and $R'$ be two distinct runs in $\calA^\mathsf{u}$. The \emph{divergence state} of $R$ and $R'$ is the end state of $R[k]$ for the largest $k \leq \min(|R|,|R'|)$ such that $R[k] = R'[k]$. We also say that $R$ and $R'$ \emph{diverge at their $(k+1)^\text{th}$ state}.
\end{definition}

\begin{example}
In Figure~\ref{fig:divergence} (left), the two runs 
$$
q_I \xrightarrow{a} q_2 \xrightarrow{b} q_5 \xrightarrow{c} q_9 \xrightarrow{c} q_F \quad \text{ and } \quad q_I \xrightarrow{a} q_2 \xrightarrow{b} q_5 \xrightarrow{b} q_8 \xrightarrow{a} q_F
$$
diverge at their $3^\text{rd}$ node $q_5$. These two runs diverge because they read a next symbol after $q_5$ but in other cases runs can diverge at a state even though reading the same symbol, for instances the two runs 
$$
q_I \xrightarrow{a} q_2 \xrightarrow{a} q_6 \xrightarrow{c} q_{10} \xrightarrow{b} q_F \quad \text{ and }
\quad q_I \xrightarrow{a} q_2 \xrightarrow{a} q_6 \xrightarrow{c} q_9 \xrightarrow{c} q_F
$$ 
shown in ~\ref{fig:divergence} (right), diverge at $q_6$ despite both reading $c$ next.

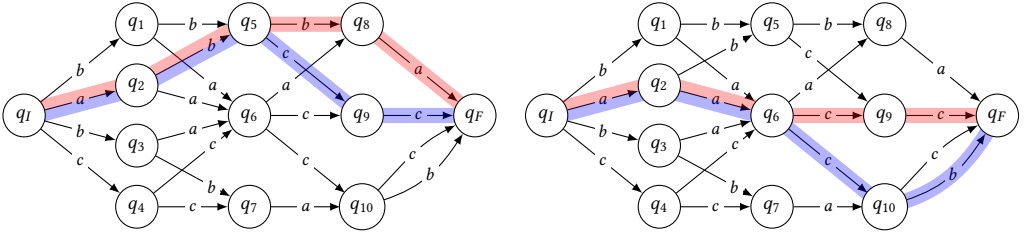
\begin{figure*}[h!]
\centering
\begin{subfigure}{0.5\textwidth}
\begin{tikzpicture}[xscale=1.5,yscale=1.2, every node/.style={scale=0.7
5}]
\def\f{\small}
\def\s{0.75cm}
\def\i{2}

\node[draw=white,,fill=white,circle,minimum size =\s] (qi) at (0,0) { };

\node[draw=white,,fill=white,circle,minimum size =\s] (q1) at (1,+1) { };
\node[draw=white,,fill=white,circle,minimum size =\s] (q2) at (1,+0.33) { };
\node[draw=white,,fill=white,circle,minimum size =\s] (q3) at (1,-0.33) { };
\node[draw=white,,fill=white,circle,minimum size =\s] (q4) at (1,-1) { };

\node[draw=white,,fill=white,circle,minimum size =\s] (q5) at (2,+1) { };
\node[draw=white,,fill=white,circle,minimum size =\s] (q6) at (2,0) { };
\node[draw=white,,fill=white,circle,minimum size =\s] (q7) at (2,-1) { };

\node[draw=white,,fill=white,circle,minimum size =\s] (q8) at (3,+1) { };
\node[draw=white,,fill=white,circle,minimum size =\s] (q9) at (3,0) { };
\node[draw=white,,fill=white,circle,minimum size =\s] (q10) at (3,-1) { };

\node[draw=white,fill=white,circle,minimum size =\s] (qf) at (4,0) { };

\draw[-latex] (qi) to  node[midway,fill=white,font=\f,inner sep=\i] {$b$} (q1);
\draw[-latex] (qi) to  node[midway,fill=white,font=\f,inner sep=\i] {$a$}  (q2);
\draw[-latex] (qi) to  node[midway,fill=white,font=\f,inner sep=\i] {$b$}  (q3);
\draw[-latex] (qi) to  node[midway,fill=white,font=\f,inner sep=\i] {$c$}  (q4);

\draw[-latex] (q1) to  node[midway,fill=white,font=\f,inner sep=\i]  {$b$}  (q5);
\draw[-latex] (q1) to  node[near end,fill=white,font=\f,inner sep=\i]  {$a$}  (q6);

\draw[-latex] (q2) to  node[near end,fill=white,font=\f,inner sep=\i]  {$b$}  (q5);
\draw[-latex] (q2) to  node[midway,fill=white,font=\f,inner sep=\i]  {$a$}  (q6);

\draw[-latex] (q3) to  node[midway,fill=white,font=\f,inner sep=\i]  {$a$}  (q6);
\draw[-latex] (q3) to  node[near end,fill=white,font=\f,inner sep=\i]  {$b$}  (q7);

\draw[-latex] (q4) to  node[near end,fill=white,font=\f,inner sep=\i]  {$c$}  (q6);
\draw[-latex] (q4) to  node[midway,fill=white,font=\f,inner sep=\i]  {$c$}  (q7);

\draw[-latex] (q5) to  node[midway,fill=white,font=\f,inner sep=\i]  {$b$} (q8);
\draw[-latex] (q5) to  node[near start,fill=white,font=\f,inner sep=\i]  {$c$} (q9);

\draw[-latex] (q6) to  node[near start,fill=white,font=\f,inner sep=\i]  {$a$} (q8);
\draw[-latex] (q6) to  node[midway,fill=white,font=\f,inner sep=\i]  {$c$} (q9);
\draw[-latex] (q6) to  node[midway,fill=white,font=\f,inner sep=\i]  {$c$} (q10);

\draw[-latex] (q7) to node[midway,fill=white,font=\f,inner sep=\i]  {$a$} (q10);

\draw[-latex] (q8) to  node[midway,fill=white,font=\f,inner sep=\i]  {$a$} (qf);
\draw[-latex] (q9) to  node[midway,fill=white,font=\f,inner sep=\i]  {$c$} (qf);
\draw[-latex] (q10) to [in=-120,out=20] node[midway,fill=white,font=\f,inner sep=\i]  {$b$} (qf);
\draw[-latex] (q10) to [in=-150,out=50] node[midway,fill=white,font=\f,inner sep=\i]  {$c$} (qf);

\draw[color=red,opacity=0.3,line width=1.45mm] (0,0.062) -- (1,+0.33+0.062) -- (2,1+0.07);
\draw[color=blue,opacity=0.3,line width=1.45mm] (0,-0.062) -- (1,+0.33-0.062) -- (2,1-0.07);
\draw[color=blue,opacity=0.3,line width=2mm] (2,1) -- (3,0) -- (4,0);
\draw[color=red,opacity=0.3,line width=2mm] (2,1) -- (3,1) -- (4,0);

\node[draw,fill=white,circle,minimum size =\s] (qi) at (0,0) {$q_I$};

\node[draw,fill=white,circle,minimum size =\s] (q1) at (1,+1) {$q_1$};
\node[draw,fill=white,circle,minimum size =\s] (q2) at (1,+0.33) {$q_2$};
\node[draw,fill=white,circle,minimum size =\s] (q3) at (1,-0.33) {$q_3$};
\node[draw,fill=white,circle,minimum size =\s] (q4) at (1,-1) {$q_4$};

\node[draw,fill=white,circle,minimum size =\s] (q5) at (2,+1) {$q_5$};
\node[draw,fill=white,circle,minimum size =\s] (q6) at (2,0) {$q_6$};
\node[draw,fill=white,circle,minimum size =\s] (q7) at (2,-1) {$q_7$};

\node[draw,fill=white,circle,minimum size =\s] (q8) at (3,+1) {$q_8$};
\node[draw,fill=white,circle,minimum size =\s] (q9) at (3,0) {$q_9$};
\node[draw,fill=white,circle,minimum size =\s] (q10) at (3,-1) {$q_{10}$};

\node[draw,fill=white,circle,minimum size =\s] (qf) at (4,0) {$q_F$};
\end{tikzpicture}
\end{subfigure}\begin{subfigure}{0.5\textwidth}
\begin{tikzpicture}[xscale=1.5,yscale=1.2, every node/.style={scale=0.7
5}]
\def\f{\small}
\def\s{0.75cm}
\def\i{2}

\node[draw=white,,fill=white,circle,minimum size =\s] (qi) at (0,0) { };

\node[draw=white,,fill=white,circle,minimum size =\s] (q1) at (1,+1) { };
\node[draw=white,,fill=white,circle,minimum size =\s] (q2) at (1,+0.33) { };
\node[draw=white,,fill=white,circle,minimum size =\s] (q3) at (1,-0.33) { };
\node[draw=white,,fill=white,circle,minimum size =\s] (q4) at (1,-1) { };

\node[draw=white,,fill=white,circle,minimum size =\s] (q5) at (2,+1) { };
\node[draw=white,,fill=white,circle,minimum size =\s] (q6) at (2,0) { };
\node[draw=white,,fill=white,circle,minimum size =\s] (q7) at (2,-1) { };

\node[draw=white,,fill=white,circle,minimum size =\s] (q8) at (3,+1) { };
\node[draw=white,,fill=white,circle,minimum size =\s] (q9) at (3,0) { };
\node[draw=white,,fill=white,circle,minimum size =\s] (q10) at (3,-1) { };

\node[draw=white,fill=white,circle,minimum size =\s] (qf) at (4,0) { };

\draw[-latex] (qi) to  node[midway,fill=white,font=\f,inner sep=\i] {$b$} (q1);
\draw[-latex] (qi) to  node[midway,fill=white,font=\f,inner sep=\i] {$a$}  (q2);
\draw[-latex] (qi) to  node[midway,fill=white,font=\f,inner sep=\i] {$b$}  (q3);
\draw[-latex] (qi) to  node[midway,fill=white,font=\f,inner sep=\i] {$c$}  (q4);

\draw[-latex] (q1) to  node[midway,fill=white,font=\f,inner sep=\i]  {$b$}  (q5);
\draw[-latex] (q1) to  node[near end,fill=white,font=\f,inner sep=\i]  {$a$}  (q6);

\draw[-latex] (q2) to  node[near end,fill=white,font=\f,inner sep=\i]  {$b$}  (q5);
\draw[-latex] (q2) to  node[midway,fill=white,font=\f,inner sep=\i]  {$a$}  (q6);

\draw[-latex] (q3) to  node[midway,fill=white,font=\f,inner sep=\i]  {$a$}  (q6);
\draw[-latex] (q3) to  node[near end,fill=white,font=\f,inner sep=\i]  {$b$}  (q7);

\draw[-latex] (q4) to  node[near end,fill=white,font=\f,inner sep=\i]  {$c$}  (q6);
\draw[-latex] (q4) to  node[midway,fill=white,font=\f,inner sep=\i]  {$c$}  (q7);

\draw[-latex] (q5) to  node[midway,fill=white,font=\f,inner sep=\i]  {$b$} (q8);
\draw[-latex] (q5) to  node[near start,fill=white,font=\f,inner sep=\i]  {$c$} (q9);

\draw[-latex] (q6) to  node[near start,fill=white,font=\f,inner sep=\i]  {$a$} (q8);
\draw[-latex] (q6) to  node[midway,fill=white,font=\f,inner sep=\i]  {$c$} (q9);
\draw[-latex] (q6) to  node[midway,fill=white,font=\f,inner sep=\i]  {$c$} (q10);

\draw[-latex] (q7) to node[midway,fill=white,font=\f,inner sep=\i]  {$a$} (q10);

\draw[-latex] (q8) to  node[midway,fill=white,font=\f,inner sep=\i]  {$a$} (qf);
\draw[-latex] (q9) to  node[midway,fill=white,font=\f,inner sep=\i]  {$c$} (qf);
\draw[-latex] (q10) to [in=-120,out=20] node[midway,fill=white,font=\f,inner sep=\i]  {$b$} (qf);
\draw[-latex] (q10) to [in=-150,out=50] node[midway,fill=white,font=\f,inner sep=\i]  {$c$} (qf);

\draw[color=red,opacity=0.3,line width=1.45mm] (0,0.062) -- (1,+0.33+0.062) -- (2,0+0.062);
\draw[color=blue,opacity=0.3,line width=1.45mm] (0,-0.062) -- (1,+0.33-0.062) -- (2,0-0.062);
\draw[color=blue,opacity=0.3,line width=2mm] (2,0) -- (3,-1) to [out=15,in=-115] (4,0);
\draw[color=red,opacity=0.3,line width=2mm] (2,0) -- (3,0) -- (4,0);

\node[draw,fill=white,circle,minimum size =\s] (qi) at (0,0) {$q_I$};

\node[draw,fill=white,circle,minimum size =\s] (q1) at (1,+1) {$q_1$};
\node[draw,fill=white,circle,minimum size =\s] (q2) at (1,+0.33) {$q_2$};
\node[draw,fill=white,circle,minimum size =\s] (q3) at (1,-0.33) {$q_3$};
\node[draw,fill=white,circle,minimum size =\s] (q4) at (1,-1) {$q_4$};

\node[draw,fill=white,circle,minimum size =\s] (q5) at (2,+1) {$q_5$};
\node[draw,fill=white,circle,minimum size =\s] (q6) at (2,0) {$q_6$};
\node[draw,fill=white,circle,minimum size =\s] (q7) at (2,-1) {$q_7$};

\node[draw,fill=white,circle,minimum size =\s] (q8) at (3,+1) {$q_8$};
\node[draw,fill=white,circle,minimum size =\s] (q9) at (3,0) {$q_9$};
\node[draw,fill=white,circle,minimum size =\s] (q10) at (3,-1) {$q_{10}$};

\node[draw,fill=white,circle,minimum size =\s] (qf) at (4,0) {$q_F$};
\end{tikzpicture}
\end{subfigure}
\caption{Divergence states}\label{fig:divergence}
\vspace{-0.2in}
\end{figure*}
\end{example}

We will later explain that if $\run(w,q)$ and $\run(w',q)$ diverge at their $k^\text{th}$ state $q^*$ and that  the common prefix of $w$ and $w'$ up to $q^*$ is $\word(\run(w,q)[k-1]) = w^*$, then $\Pr[w \in S^r(q) \text{ and } w' \in S^r(q)]$ is (in first approximation) bounded from above by
$$
\frac{\Pr[w \in S^r(q)]\cdot \Pr[w' \in S^r(q)]}{\Pr[w^* \in S^r(q^*)]}
$$
Intuitively, the farther $q^*$ is to $q$, the closer the events $w \in S^r(q)$ and $w' \in S^r(q)$ are to be independent. In the extreme case, the two runs diverge at the initial node $q_I$, then $\Pr[w^* \in S^r(q^*)] = \Pr[\lambda \in S^r(q_I)] = 1$ and the events are fully independent. 

Let $q \in \calQ^k$ and $t \in \quot{\treq}{L(q)}$. For $l < k$, we denote by 
$$
\depclass(t,l,q)
$$ the set of $t' \in \quot{\treq}{L(q)}$ such that $\run(t,q)$ and $\run(t',q)$ diverge at their $l^\text{th}$ state. Since we claim that $\Pr[t \in S^r(q) \text{ and } t' \in S^r(q)]$ depends on the divergence states of $\run(t,q)$ and $\run(t',q)$, in a sense the set $\depclass(t,l,q)$ gathers the $t'$ with the same $\Pr[t' \in S^r(q) \mid t \in S^r(q)]$. The sets $\depclass(t,l,q)$ partition $\quot{\treq}{L(q)}$.

$$
\quot{\treq}{L(q)} = \depclass(t,1,q) \cup \dots \cup \depclass(t,k+1,q)
$$

A crucial observation is that the size of $\depclass(t,l,q)$ decreases as $l$ approaches $k$. Informally, there are few pairs $(t,t')$ such that $t \in S^r(q)$ and $t' \in S^r(q)$ depend strongly on each other. 

\begin{lemma}\label{lemma:combinatorial_lemma_helper}
Let $S_1 \subseteq \quot{\treq}{\alphabet^{n_1}}$ and $S_2 \subseteq \quot{\treq}{\alphabet^{n_2}}$
be two sets containing traces of length $n_1 \in \nats$ and $n_2 \in \nats$.
Let $S = \setpred{t_1 \cdot t_2}{t_1 \in S_1, t_2 \in S_2}$.
We have that $|S| \geq \frac{1}{\cwdth\cdot n^{\cwdth}} |S_1|\cdot|S_2|$, where $n = n_1+n_2$ and $\cwdth$ is the width of the concurrent alphabet $(\alphabet, \indep)$.
\end{lemma}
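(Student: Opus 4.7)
The plan is to double-count pairs $(t_1, t_2) \in S_1 \times S_2$ by grouping them according to their image under the concatenation map $(t_1, t_2) \mapsto t_1 \cdot t_2 \in S$. Writing $N(t)$ for the number of preimages of $t$, we have $|S_1| \cdot |S_2| = \sum_{t \in S} N(t)$, so it will suffice to prove the pointwise bound $N(t) \leq \cwdth \cdot n^{\cwdth}$ for every $t \in S$, which will immediately give $|S_1| \cdot |S_2| \leq |S| \cdot \cwdth \cdot n^{\cwdth}$.

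To bound $N(t)$, I will set up a bijection between factorizations $t = t_1 \cdot t_2$ with $|t_1| = n_1$ and downward-closed subsets of the trace partial order $\trpo{w}$ of size $n_1$, where $w$ is any representative of $t$. Given such a subset $D \subseteq {\sf LABIND}_w$, restricting $\trpo{w}$ to $D$ and to its complement defines two traces whose concatenation is $t$: any linearization of $D$ followed by a linearization of $D^c$ is a linearization of $t$ by downward-closedness, hence a word in $t$. Conversely, the pair $(t_1, t_2)$ determines $D$ uniquely, because $D$ must contain exactly $\#_a(t_1)$ positions labeled $a$ for each letter $a$, and since same-label positions are totally ordered in $\trpo{w}$ (they are always $\dep$-dependent), downward-closedness forces these to be $(a, 1), \ldots, (a, \#_a(t_1))$.

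To conclude, I count these downward-closed subsets via their antichains of maximal elements, using the fact that a finite downward-closed set is the downward closure of its maxima. An antichain in $\trpo{w}$ is a set of labeled positions with pairwise distinct and pairwise $\indep$-independent labels: same-label pairs are ordered by position, and distinct-label dependent pairs are ordered by occurrence. Hence any antichain has size at most $\cwdth$ by definition of the width, so the total count of downward-closed sets of size $n_1$ is bounded by $\sum_{k=0}^{\cwdth} \binom{n}{k} \leq \cwdth \cdot n^{\cwdth}$, absorbing small cases into the stated constant. The main obstacle is establishing the bijection: verifying uniqueness of $D$ from the abstract pair of equivalence classes $(t_1, t_2)$ relies on the structural fact that same-label positions form a chain in $\trpo{w}$. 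Once this is in place, the width bound on antichains follows directly from the definition of $\cwdth$, and the double count rearranges to the claimed inequality $|S| \geq \tfrac{1}{\cwdth \cdot n^{\cwdth}} |S_1| \cdot |S_2|$.
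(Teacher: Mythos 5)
Your proof is correct and takes essentially the same route as the paper's: you identify the factorizations $t = t_1 \cdot t_2$ with cuts (downward-closed prefixes) of the trace partial order of $t$ and bound their number via antichains, which have size at most $\cwdth$ since their elements carry pairwise independent distinct labels, yielding the $\cwdth \cdot n^{\cwdth}$ bound per trace and the double-counting conclusion $|S_1|\cdot|S_2| \leq |S|\cdot \cwdth \cdot n^{\cwdth}$. The only difference is presentational: you spell out why a pair $(t_1,t_2)$ determines its cut uniquely (same-label positions form a chain), a step the paper asserts in one line.
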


\begin{proof}
First, let us count the number of distinct ways of partitioning a trace $t$ (of length $n$) into two traces $t_1$ and $t_2$ so that $t = t_1 \cdot t_2$. At a high level, this is the number of cuts in the partial order induced by $t$. The number of different cuts $N_{\sf cuts}$ in the partial order equals the number of possible subsets of the elements of $t$ that are pairwise independent. Since the width of the partial order is $\cwdth$, the number of cuts $N_{\sf cuts}$ is bounded above by $n^1 + n^2 + \ldots + n^{\cwdth} \leq \cwdth \cdot n^{\cwdth}$ since any set of pairwise independent elements from $t$ cannot have size more than $\cwdth$. This gives us $N_{\sf cuts} \leq \cwdth \cdot n^{\cwdth}$.
Now we remark that each pair $(t_1, t_2) \in S_1 \times S_2$ such that $t_1 \cdot t_2 = t$ is characterized by a unique cut in $t$. In other words, for each trace $t$, there are at most $N_{\sf cuts}$ pairs $(t_1, t_2) \in S_1 \times S_2$ for which $t = t_1 \cdot t_2$. 
This means $|S_1| \cdot |S_2| \leq |S| \cdot N_{\sf cuts} \leq |S| \cdot \cwdth \cdot n^{\cwdth}$. Thus we have
$|S| \geq  \frac{1}{\cwdth\cdot n^{\cwdth}} |S_1|\cdot|S_2|$ as desired.
\end{proof}

\begin{lemma}\label{lemma:combinatorial_lemma}
Let $q \in \calQ^\mathsf{u}$ and $t \in \quot{\treq}{L(q)}$, let $q'$ be the $k^\text{th}$ state reached by $\run(t,q)$, then 
$$
|\depclass(t,k,q)| \leq \cwdth\cdot n^\cwdth\frac{|\quot{\treq}{L(q)}|}{|\quot{\treq}{L(q')}|}
$$ 
where $\cwdth$ is the width of the concurrent alphabet $(\alphabet, \indep)$.
\end{lemma}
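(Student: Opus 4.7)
\textbf{Proof proposal for Lemma~\ref{lemma:combinatorial_lemma}.} The plan is to exploit the common-prefix structure that divergence at the $k^\text{th}$ state imposes on canonical runs, and then combine the residuals with $\quot{\treq}{L(q')}$ via Lemma~\ref{lemma:combinatorial_lemma_helper} to reconstruct a large subset of $\quot{\treq}{L(q)}$.

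First I would fix the shared part. Let $\rho^* = \run(t,q)[k-1]$; by definition $\rho^*$ ends at $q'$, and $w^* = \word(\rho^*)$ is a word in $L(q')$. Every $t' \in \depclass(t,k,q)$ satisfies $\run(t',q) = \rho^* \cdot \rho'_{t'}$ for some run $\rho'_{t'}$ from $q'$ to $q$ whose first transition differs from the corresponding transition of $\run(t,q)$. Set $v_{t'} = \word(\rho'_{t'})$; then $\can(t',q) = w^* \cdot v_{t'}$ and $t' = \eqcl{w^* \cdot v_{t'}}$. Because $\autunroll$ is layered, all words $v_{t'}$ share a common length $n_2$ equal to the number of transitions from $q'$ to $q$.

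Next, I would package the residuals into a set of traces and check injectivity. Define
$$S_2 = \setpred{\eqcl{v_{t'}}}{t' \in \depclass(t,k,q)} \subseteq \quot{\treq}{\alphabet^{n_2}}.$$
If $v_{t'_1} \treq v_{t'_2}$, then $w^* \cdot v_{t'_1} \treq w^* \cdot v_{t'_2}$, forcing $t'_1 = t'_2$; thus the map $t' \mapsto \eqcl{v_{t'}}$ is injective and $|S_2| = |\depclass(t,k,q)|$. Setting $S_1 = \quot{\treq}{L(q')}$ (a subset of $\quot{\treq}{\alphabet^{k-1}}$), I would verify that $S = \setpred{t_1 \cdot t_2}{t_1 \in S_1, t_2 \in S_2} \subseteq \quot{\treq}{L(q)}$: given $t_1 \in S_1$, pick any $u_1 \in t_1 \cap L(q')$; given $t_2 = \eqcl{v_{t'}} \in S_2$, the word $u_1 \cdot v_{t'}$ labels a run that first reaches $q'$ via $u_1$ and then $q$ via $\rho'_{t'}$, so $u_1 \cdot v_{t'} \in L(q)$ and $t_1 \cdot t_2 = \eqcl{u_1 \cdot v_{t'}} \in \quot{\treq}{L(q)}$.

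Finally, I would invoke Lemma~\ref{lemma:combinatorial_lemma_helper}. Since $(k-1) + n_2$ equals the level of $q$ in $\autunroll$, which is at most $n$, the helper yields
$$|\quot{\treq}{L(q)}| \;\geq\; |S| \;\geq\; \frac{|S_1|\cdot |S_2|}{\cwdth\cdot n^\cwdth} \;=\; \frac{|\quot{\treq}{L(q')}|\cdot |\depclass(t,k,q)|}{\cwdth\cdot n^\cwdth},$$
which rearranges to the desired bound. The main difficulty is conceptual rather than computational: one must verify carefully that ``divergence of the canonical runs at the $k^\text{th}$ state'' translates into the residuals $\eqcl{v_{t'}}$ being all distinct (the injectivity step) and that concatenation with an arbitrary $t_1 \in \quot{\treq}{L(q')}$ always lands inside $\quot{\treq}{L(q)}$. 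Once these two structural observations are settled, Lemma~\ref{lemma:combinatorial_lemma_helper} carries the combinatorial weight of the argument.
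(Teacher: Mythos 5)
Your proposal is correct and follows essentially the same route as the paper's proof: split each canonical word of a trace in $\depclass(t,k,q)$ into the common prefix $\word(\run(t,q)[k-1]) \in L(q')$ and a residual, observe that distinct traces give distinct residual classes, note that any word of $L(q')$ concatenated with any residual lands in $L(q)$, and then apply Lemma~\ref{lemma:combinatorial_lemma_helper} to $S_1 = \quot{\treq}{L(q')}$ and the set of residual classes. The only cosmetic difference is that the paper applies the helper with total length $h$ (the level of $q$) and then relaxes $h^\cwdth$ to $n^\cwdth$, which you do implicitly; no gap.
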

\begin{proof}
Suppose $q \in \calQ^h$. Let $N = |\depclass(t,k,q)|$ and $\depclass(t,k,q) = \{t_1,t_2,\dots,t_N\}$. For each trace $t_i$ let $w_i$ be the word obtained concatenating in order the symbols read in $\run(t_i,q)$. Let $w$ be the word obtained that way for $\run(t,q)$ and let $w = u\cdot v$ with $u \in \Sigma^{k-1}$ and $v \in \Sigma^{h - k + 1}$. Note that $u \in L(q')$. Since every run $\run(t_i,q)$ diverges from $\run(t,q)$ we have that, for every $i$, $w_i = u \cdot v_i$ with $v_i \in \Sigma^{h - k +1}$. It follows that for every $1 \leq i < j \leq N$, $v_i \neq v_j$ and $\eqcl{v_i} \neq \eqcl{v_j}$ (otherwise $t_i$ and $t_j$ would be equal). For every $u' \in L(q')$ and $i \in [N]$, $u'\cdot v_i$ is in $L(q)$. Let $S_1 = \quot{\treq}{L(q')}$, $S_2 = \quot{\treq}{\{v_1,\dots,v_N\}} = \{\eqcl{v_1},\dots,\eqcl{v_N}\}$ and $S = \setpred{t_1 \cdot t_2}{t_1 \in S_1, t_2 \in S_2}$, we have $S \subseteq \quot{\treq}{L(q)}$. So, by Lemma~\ref{lemma:combinatorial_lemma_helper}, $|\quot{\treq}{L(q)}| \geq |S| \geq \frac{1}{\cwdth\cdot h^{\cwdth}} |S_1|\cdot|S_2| =  \frac{1}{\cwdth\cdot h^{\cwdth}} |\quot{\treq}{L(q')}|\cdot|\depclass(t,k,q)| \geq \frac{1}{\cwdth\cdot n^{\cwdth}} |\quot{\treq}{L(q')}|\cdot|\depclass(t,k,q)|$.
\end{proof}


\subsection{FPRAS Analysis}\label{sec:analysis}

We use the notation $(1 \pm \varepsilon)N$, with $N \geq 0$, to refer to the interval $[(1-\varepsilon)N,\,(1+\varepsilon)N]$, $(1-\varepsilon)N$ and $(1+\varepsilon)N$ are the lower and upper limits of $(1 \pm \varepsilon)N$. The rest of the section is dedicated to proving~\thmref{main_result} and Lemma~\ref{lemma:main_result_core}.

\begin{theorem}
\thmlabel{main_result}
Let $\calA$ be an NFA over the fixed-size alphabet $\Sigma$ and let $\indep \subseteq \Sigma \times \Sigma$ be an independence relation. Let $\cwdth$ be the width of the concurrent alphabet $(\Sigma,\indep)$. $\TraceMC(\calA,n,\varepsilon,\delta)$ takes
$
O(\log(\delta^{-1})(\cwdth n^{\cwdth+2}\varepsilon^{-2}|\calQ|\log(n|\calQ|) +  n|\calQ|^2\log|\calQ|))
$ elementary operations and $O(\log(\delta^{-1})\cwdth n^{\cwdth+2}\varepsilon^{-2}$ $|\calQ|^2\log(n|\calQ|))$ calls to the membership oracle $\memp{\treq}$, and returns $\mathsf{est}$ with the guarantee
$$
		\Pr\left[\mathsf{est}  \in (1 \pm \varepsilon) |\quot{\treq}{L_n(\calA)}| \right] \leq 1 - \delta.
$$
\end{theorem}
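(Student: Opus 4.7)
The plan is to prove \thmref{main_result} via a two-level amplification. First I would show that a single invocation of $\TraceMCCore$ returns an estimate in the target interval $(1\pm\varepsilon)|\quot{\treq}{L_n(\calA)}|$ with some constant probability (say $3/4$). Because $\TraceMC$ returns the median of $\nv = \lceil 8\log(\delta^{-1})\rceil$ independent such estimates, a Chernoff bound on the number of ``good'' runs yields the $1-\delta$ guarantee: the median escapes the target interval only if a majority of runs do, which happens with probability at most $\delta$.

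For the single-run correctness, I would induct on the layer index $i = 0,1,\ldots,n$ of the unrolled automaton $\calA^\unroll$. The inductive invariant, maintained with high probability for every state $q \in \calQ^i$, is (a) $N(q) \in (1\pm\varepsilon)|\quot{\treq}{L(q)}|$ and (b) for every trace $t \in \quot{\treq}{L(q)}$ and every $r \in [\nsnt]$, $\Pr[\can(t,q) \in S^r(q)]$ equals the target $1/N(q)$, up to the controlled snap applied by $\round$. The base case is immediate since $S^r(q_I)=\{\lambda\}$ and $N(q_I)=1$. For the step, Lemmas~\ref{lemma:samples_are_canonical_words}--\ref{lemma:samples_are_canonical_runs} guarantee that only canonical words appear in $\hat S^r(q)$, so $\Ex[|\hat S^r(q)|] = |\quot{\treq}{L(q)}|/N_{max}(q)$; the median-of-means step (\lemref{medians_of_means}) converts this into the estimate $N(q)$, while the final call to $\reduce$ rebalances per-trace sampling probabilities to the correct $1/N(q)$.

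The main obstacle, and where trace counting genuinely departs from word counting, is bounding $\Var[|\hat S^r(q)|]$ so as to apply \lemref{medians_of_means} with tractable parameters. The indicators $\mathbbm{1}[\can(t,q) \in \hat S^r(q)]$ are far from independent: two canonical words whose runs share a long common prefix are correlated through shared $\reduce$ decisions upstream. I would partition traces by the divergence layer of their canonical runs relative to $\run(t,q)$: two traces whose runs diverge at level $k$ at some state $q^*$ contribute a pairwise covariance on the order of $1/\Pr[\can(\cdot,q^*) \in S^r(q^*)]$. Lemma~\ref{lemma:combinatorial_lemma} is the key combinatorial tool: it caps each dependency class by $|\depclass(t,k,q)| \leq \cwdth\cdot n^\cwdth\cdot |\quot{\treq}{L(q)}|/|\quot{\treq}{L(q^*)}|$. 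Summing telescopes layer by layer to $\Var[|\hat S^r(q)|] = O(\cwdth\cdot n^\cwdth \cdot \Ex[|\hat S^r(q)|])$, exactly matching the choice $\ns = \Theta(\cwdth n^{\cwdth+1}\varepsilon^{-2})$ and $\nt = \Theta(\log|\calQ^\unroll|)$, which drives per-state failure probability below $1/(16|\calQ^\unroll|)$; a union bound over $|\calQ^\unroll|$ states then preserves the invariant through all layers.

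Finally, I would handle the early-termination clause at Line~\ref{line:interrupt} and bound the runtime. Conditioned on the invariant, the expected total number of tracked samples is $O(\nsnt\cdot|\calQ^\unroll|)$, so Markov's inequality shows $\threshold$ is exceeded with probability at most $1/16$, absorbable into the $3/4$ single-run success constant. For complexity, summing over the $|\calQ^\unroll| = O(n|\calQ|)$ states the cost of $\union$ (dominated by $O(\nsnt\cdot|\calT^\unroll|)$ membership queries to $\memp{\treq}$, each of cost handled separately as an oracle call) plus $O(\nsnt)$ for median-of-means, multiplied by $\nv = O(\log\delta^{-1})$ outer repetitions, and substituting $\nsnt = \Theta(\cwdth n^{\cwdth+1}\varepsilon^{-2}\log|\calQ^\unroll|)$ and $|\calT^\unroll| = O(n|\calQ|^2)$, recovers exactly the stated elementary-operation and oracle-call bounds.
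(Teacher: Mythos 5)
Your high-level skeleton matches the paper's proof: median amplification over $\nv$ independent runs via a Chernoff bound, canonical words/runs so that each trace has a unique representative per state, divergence classes capped by Lemma~\ref{lemma:combinatorial_lemma} feeding a variance bound for median-of-means, and a Markov argument for the interrupt at Line~\ref{line:interrupt}. The genuine gap is in the single-run analysis. The invariant you propose to carry through the layers --- that $\Pr[\can(t,q)\in S^r(q)]$ equals $1/N(q)$, and that $\Ex[|\hat S^r(q)|]=|\quot{\treq}{L(q)}|/N_{max}(q)$ --- is not well-posed in the actual algorithm: $N(q)$ and $N_{max}(q)$ are random variables computed from the very sets $\hat S^r(q)$ from which $S^r(q)$ is obtained by $\reduce$, and across layers the normalizations at Line~\ref{line:sbar} use the random estimates $N(q')$ of the predecessors. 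The events $\{t\in S^r(q)\}$ and the values of these estimates are correlated, so neither the exact per-trace probability nor the unconditional expectation you invoke follows by induction, and your Markov step (``conditioned on the invariant, the expected number of samples is \dots'') conditions on an event that itself biases the sample sets. The paper's proof exists precisely to tame this dependence: it conditions on \emph{realizable histories} $h$ (the ancestor values of $N$), couples the real execution to an idealized $\mathfrak{S}$-process whose per-$r$ subprocesses are independent (Lemma~\ref{lemma:coupling}), proves the first- and second-order probability bounds only for the $\mathfrak{S}$-variables (Lemmas~\ref{lemma:proba_first_order} and~\ref{lemma:proba_second_order}), and then union-bounds over all histories in $\calH_q$ --- which is only possible because $\round$ restricts $N(q)$ to a small set of acceptable values (Lemma~\ref{lemma:few_acceptable_values_in_the_right_range}); your proposal assigns $\round$ no role. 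For the threshold bound, the paper needs the decoupling identity $\Ex[\mathbbm{1}(t\in S^r(q))\cdot N(q)]=1$ (Lemma~\ref{lemma:white_magic_lemma}) rather than conditioning on the estimates being accurate. Without this machinery (or an equivalent device), the application of Lemma~\ref{lem:medians_of_means} and the Markov step do not go through as you state them.

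A secondary, fixable slip: your claimed variance bound $O(\cwdth\, n^{\cwdth}\,\Ex[|\hat S^r(q)|])$ is off. Summing the pairwise bounds over the up to $n{+}1$ divergence levels yields $\Var[|\vZ{r}{h}{q}|]\le 2\cwdth\, n^{\cwdth+1}(1+\varepsilon)\,\Ex[|\vZ{r}{h}{q}|]^2$ (Lemma~\ref{lemma:variance}), and it is this bound --- quadratic in the mean and with exponent $\cwdth+1$ --- that matches the choice $\ns=\Theta(\cwdth\, n^{\cwdth+1}\varepsilon^{-2})$ in $\TraceMC$.
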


\begin{lemma}\label{lemma:main_result_core}
Let $\calA^\mathsf{u}$ be the unrolled automaton of length $n$ such that $L(\calA^\mathsf{u}) \neq \emptyset$. Let $m = \max_i |\calQ^i|$, $\varepsilon > 0$, and define $\nsnt$, $\ns$, $\nt$ and $\threshold$ as in $\TraceMC$, then $\TraceMCCore(\calA^\mathsf{u},n,\ns,\nt,\threshold)$ terminates in $O(\ns\nt n m + m^2\log(m)n)$ elementary operations and $O(\threshold m) = O(\ns\nt n m^2)$ calls to the oracle $\memp{\treq}$ and returns $\mathsf{est}$ with the guarantee
$$
		\Pr\left[\mathsf{est}  \notin (1 \pm \varepsilon) |\quot{\treq}{L_n(\calA)}| \right] \leq \frac{1}{4}.
$$
\end{lemma}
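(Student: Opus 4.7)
The plan is to prove the lemma by induction on the layer index $i$ of the unrolled automaton $\calA^\mathsf{u}$, maintaining a joint invariant about the estimates $N(q)$ and the marginal and joint sampling probabilities for canonical words. Specifically, for each state $q \in \calQ^i$ and each $r \in [\nsnt]$ I would track the two invariants: (i) $N(q) \in (1 \pm \varepsilon i/n)\,|\quot{\treq}{L(q)}|$, and (ii) for every trace $t \in \quot{\treq}{L(q)}$, $\Pr[\can(t,q) \in S^r(q)] = 1/N(q)$ exactly, with cross-trial independence. The base case $q = q_I$ is trivial. For the inductive step, I would use the fact established in Lemmas~\ref{lemma:samples_are_canonical_words} and~\ref{lemma:samples_are_canonical_runs} that the sample sets contain exactly canonical words produced along canonical runs; combined with the multiplicative effect of the two \reduce\ calls and the deduplication inside $\union$, this yields $\Pr[\can(t,q) \in \hat S^r(q)] = 1/N_{max}(q)$, so that $\Ex[|\hat S^r(q)|] = |\quot{\treq}{L(q)}|/N_{max}(q)$.

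The core technical step is to bound the variance of $|\hat S^r(q)| = \sum_{t \in \quot{\treq}{L(q)}} \mathbbm{1}[\can(t,q) \in \hat S^r(q)]$ well enough for the median-of-means estimator (Lemma~\ref{lem:medians_of_means}) to concentrate $\hat N(q)$ within a $(1 \pm \varepsilon/n)$ factor of $|\quot{\treq}{L(q)}|$ with failure probability at most $1/(16|\calQ^\mathsf{u}|)$. For this I would partition the cross-terms according to the divergence state: for $t, t' \in \quot{\treq}{L(q)}$ whose canonical runs diverge at state $q^*$ on layer $\ell$, a direct calculation along the two branches of the recursion shows
\begin{equation*}
\Pr[\can(t,q) \in \hat S^r(q) \,\wedge\, \can(t',q) \in \hat S^r(q)] \;\leq\; \frac{1}{N_{max}(q)^2}\cdot \frac{N(q^*)}{1} \;\lesssim\; \frac{|\quot{\treq}{L(q^*)}|}{N_{max}(q)^2}.
\end{equation*}
Summing $t'$ over $\depclass(t,\ell,q)$ and invoking Lemma~\ref{lemma:combinatorial_lemma} bounds the contribution from divergence layer $\ell$ by $\cwdth n^\cwdth \cdot |\quot{\treq}{L(q)}|/N_{max}(q)^2$; summing over $\ell \leq n$ then gives $\Var[|\hat S^r(q)|] \leq \cwdth n^{\cwdth+1} \cdot \Ex[|\hat S^r(q)|]/N_{max}(q) \cdot |\quot{\treq}{L(q)}|$. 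I expect this variance bound to be the main obstacle, because it is precisely the place where trace counting departs from word counting: the commutation relation couples distinct canonical words in a nontrivial way, and without the combinatorial control on $|\depclass(t,\ell,q)|$ the estimator's variance would blow up. With this variance bound, the chosen $\ns$ and $\nt$ make the median-of-means deviation probability small enough that a union bound over all $|\calQ^\mathsf{u}|$ states yields an aggregated failure probability of at most $1/8$ for invariant (i).

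The remaining pieces are cleanup. The rounding in $\round$ loses only a $(1+\varepsilon/n)$ factor per layer and, by choice of constants, preserves the cumulative invariant after $n$ layers, yielding a final $(1 \pm \varepsilon)$ bound on $N(q_F) = \mathsf{est}$. For the early-exit at Line~\ref{line:interrupt}, I would argue that under invariant (i) the expected total sample count is $\sum_q \nsnt \cdot |\quot{\treq}{L(q)}|/N(q) \leq \nsnt|\calQ^\mathsf{u}|/(1-\varepsilon)$, so Markov's inequality bounds the probability of ever exceeding $\threshold = 16\nsnt|\calQ^\mathsf{u}|/(1-\varepsilon)$ by $1/16$. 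Summing $1/8 + 1/16 + $ the small slack gives the claimed overall failure probability $\leq 1/4$.

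For the running time, each state $q$ contributes $O(\nsnt \cdot |\preds(q)|)$ for building $\bar S^r$, $\hat S^r$, and $S^r$, $O(\nsnt)$ for the median-of-means, and the cost of $\union$ which is dominated by $\memp{\treq}$ calls. Since under the threshold the total number of sampled words is $O(\threshold)$ and each is checked against at most $|\calQ^\mathsf{u}|$ competing transitions, the total number of oracle calls is $O(\threshold \cdot m) = O(\ns \nt n m^2)$. The remaining bookkeeping — computing predecessors, building the modified automata on Line~\ref{line:new_automaton}, and the rounding and median — contributes $O(\ns \nt n m + m^2 \log(m) n)$ elementary operations, matching the stated bound.
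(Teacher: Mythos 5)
Your overall skeleton (canonical words/runs, divergence-state partition plus Lemma~\ref{lemma:combinatorial_lemma} for the variance, median-of-means, a Markov bound for the early exit, and the running-time accounting) does track the paper's proof, but two of your key steps do not go through as stated. First, your accuracy invariant is incompatible with the algorithm's parameters. You let the error grow as $(1\pm\varepsilon i/n)$ per layer, which forces each median-of-means estimate to be accurate to within a $(1\pm\varepsilon/n)$ factor. With the batch size $\ns = \lceil 8\cwdth n^{\cwdth+1}(1+\varepsilon)\varepsilon^{-2}\rceil$ and the variance bound $\Var \leq 2\cwdth n^{\cwdth+1}(1+\varepsilon)\mu^2$, Lemma~\ref{lem:medians_of_means} applied with deviation $\varepsilon\mu$ gives the ratio $2\sigma^2/(\varepsilon^2\mu^2\ns) = 1/2$, but applied with deviation $\varepsilon\mu/n$ the ratio becomes $\Theta(n^2) \geq 1$, so the tail bound is vacuous: the chosen $\ns,\nt$ only support per-state accuracy $\varepsilon$, not $\varepsilon/n$. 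The paper's analysis needs no accumulation at all: since $\Pr[t \in \vZ{r}{h}{q}] = m_h(q)^{-1}$ holds exactly (Lemma~\ref{lemma:proba_first_order}), the estimator $N_{max}(q)\cdot\mom(\cdot)$ is centered at $|\quot{\treq}{L(q)}|$ no matter how accurate the ancestors' estimates are; ancestor errors enter only the variance bound through $(1\pm\varepsilon)$ factors (Lemma~\ref{lemma:variance}), so a uniform $(1\pm\varepsilon)$ guarantee at every state (Lemma~\ref{lemma:proba_p(q)}) suffices and there is no $(1\pm\varepsilon/n)^n$ telescoping.

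Second, your invariant (ii) --- ``$\Pr[\can(t,q)\in S^r(q)] = 1/N(q)$ exactly, with cross-trial independence'' --- is not a well-formed statement: $N(q)$, $N_{max}(q)$ and the ancestors' estimates are random variables computed from the very sample sets in question, so the $\reduce$ probabilities are correlated with the events they act on, and all trials $r$ share the same $N$-values, so they are not independent. The same issue undermines your pairwise bound (the ``direct calculation along the two branches of the recursion'') and your threshold argument, which treats $\Ex[|S^r(q)|]$ as $|\quot{\treq}{L(q)}|/N(q)$ with $N(q)$ effectively deterministic. The paper devotes a large part of the proof precisely to this decoupling: it conditions on realizable histories, couples $\TraceMCCore^*$ with the $\mathfrak{S}$-process (Lemma~\ref{lemma:coupling}) in which Lemmas~\ref{lemma:proba_first_order} and~\ref{lemma:proba_second_order} hold for \emph{fixed} $h$ and $v$, and takes a union bound over all histories --- which is the actual purpose of $\round$ (it bounds $|\calH_q|$, Lemma~\ref{lemma:few_acceptable_values_in_the_right_range}), not a per-layer rounding loss as in your sketch. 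For the early exit it replaces your expectation computation by the identity $\Ex[\mathbbm{1}(t\in S^r(q))\cdot N(q)] = 1$ (Lemma~\ref{lemma:white_magic_lemma}) restricted to the event $N(q)\in(1\pm\varepsilon)|\quot{\treq}{L(q)}|$ (Lemma~\ref{lemma:proba_S(q)}). Without some such device, the exact-marginal and pairwise-probability claims at the heart of your variance and threshold bounds are unjustified.
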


\noindent Given Lemma~\ref{lemma:main_result_core}, proving of~\thmref{main_result} boils down to a standard median trick.

\begin{proof}[Proof of~\thmref{main_result}]
Let $X_i$ be the indicator variable taking value $1$ if the $i^\text{th}$ run returns $\mathsf{est}_i  \notin (1 \pm \varepsilon)|\quot{\treq}{L_n(\calA)}|$, and $0$ otherwise. Let $X = \sum_{i = 1}^{\nv} X_i$. By Lemma~\ref{lemma:main_result_core}, $\Ex[X] \leq \frac{\nv}{4}$. The $X_i$'s are independent, so we use Chernoff-Hoeffding inequality
\begin{align*}
\Pr\left[\underset{1 \leq j \leq \nv}{\median}(\mathsf{est}_i) \not\in (1 \pm \varepsilon)|\quot{\treq}{L_n(\calA)}|\right] \leq
\Pr\left[X \geq \frac{\nv}{2}\right] \leq \Pr\left[X - \Ex[X] \geq \frac{\nv}{4}\right]
 \leq \exp\left(-\frac{\nv}{8}\right) \leq \delta 
\end{align*}

The running time is $\nv$ times the running time of $\TraceMCCore$ plus the cost of unrolling the automaton and of computing the median. Unrolling takes time negligible compared to running $\TraceMCCore$. The fast median computation of $\nv$ values takes time $O(\nv)$.
\end{proof}

In the rest of the section, $\calA^\mathsf{u}$, $\nsnt$, $\ns$, $\nt$ and $\threshold$ are fixed as in the statement of Lemma~\ref{lemma:main_result_core}. We will show that $\TraceMCCore$ is unlikely to exit through line~\ref{line:interrupt}, which makes it fine to prove the tightness of the estimates in the variant of $\TraceMCCore$ that does not stop when the number of samples grows large. We call $\TraceMCCore^*$ the algorithm $\TraceMCCore$ without line~\ref{line:interrupt}.

\begin{lemma}\label{lemma:proba_p(q)}
$\TraceMCCore^*(\calA^\mathsf{u},n,\ns,\nt,\threshold)$ computes $N(q) \not\in (1 \pm \varepsilon)|\quot{\treq}{L(q)}|$ for some $q \in \calQ^\mathsf{u}$ with probability at most $1/16$.
\end{lemma}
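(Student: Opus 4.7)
The plan is to proceed by induction on the levels $\calQ^0, \calQ^1, \ldots, \calQ^n$ of the unrolled automaton, establishing a level-wise event $E_i$ stating that $N(q) \in (1\pm\varepsilon)|\quot{\treq}{L(q)}|$ for every $q \in \calQ^{\leq i}$, and showing $\Pr[\neg E_i \mid E_{i-1}] \leq \frac{1}{16|\calQ^\mathsf{u}|}\cdot |\calQ^i|$. The base case is immediate since $N(q_I) = 1 = |\quot{\treq}{L(q_I)}|$. A union bound across the $|\calQ^\mathsf{u}|$ states of $\calQ^\mathsf{u}$ then gives the desired $1/16$ bound. Conditioning on $E_{i-1}$ is natural because the estimate $N(q)$ is built from $N_{max}(q)$ and the sample-set sizes $|\hat S^r(q)|$, and both are computed from predecessor quantities in the previous layer.

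The core step is the per-state analysis. I would first argue that, conditional on the predecessor sample sets and predecessor estimates, each trace $t \in \quot{\treq}{L(q)}$ enters $\hat{S}^r(q)$ through its canonical word $\can(t,q)$ with probability exactly $1/N_{max}(q)$. This relies on Lemmas~\ref{lemma:samples_are_canonical_words} and~\ref{lemma:samples_are_canonical_runs}, combined with the fact that each $\reduce$ call in Line~\ref{line:sbar} scales the marginal survival probability by $N(q_i)/N_{max}(q)$, composing telescopically along the canonical run. Hence $\Ex[|\hat S^r(q)|] = |\quot{\treq}{L(q)}|/N_{max}(q)$, so the quantity $N_{max}(q) \cdot |\hat S^r(q)|$ is an unbiased estimator of $|\quot{\treq}{L(q)}|$ and median-of-means across $\nsnt = \ns \cdot \nt$ trials gives a sharp concentration bound via Lemma~\ref{lem:medians_of_means}.

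The main technical obstacle, which distinguishes this analysis from the word-counting case, is bounding the variance of $|\hat S^r(q)|$. Writing it as a sum of covariances of indicator variables $\mathbbm{1}[t \in \hat S^r(q)]$ over pairs of traces, the dependence between two indicators is governed by the divergence state of the canonical runs $\run(t,q)$ and $\run(t',q)$. Grouping pairs by divergence level and using the partition $\quot{\treq}{L(q)} = \bigcup_{k} \depclass(t,k,q)$, each level-$k$ contribution is bounded using Lemma~\ref{lemma:combinatorial_lemma}, which caps $|\depclass(t,k,q)|$ by $\cwdth \cdot n^\cwdth \cdot |\quot{\treq}{L(q)}| / |\quot{\treq}{L(q')}|$. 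Summing over $k$ and using the inductive estimate on $|\quot{\treq}{L(q')}|$ (which is within $(1\pm\varepsilon)$ of $N(q')$ by $E_{i-1}$), the per-sample variance can be shown to be at most $\cwdth\cdot n^\cwdth (1+\varepsilon) \cdot \Ex[|\hat S^r(q)|] / N_{max}(q)$ up to constants. Plugging into the median-of-means tail bound with the chosen $\ns = \Theta(\cwdth n^{\cwdth+1}\varepsilon^{-2})$ and $\nt = \Theta(\log|\calQ^\mathsf{u}|)$ yields the failure probability $1/(16|\calQ^\mathsf{u}|)$ needed for the union bound.

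The final technical wrinkle is the rounding step in $\estimateAndSample$ and the $\min(N_{max}(q), \hat N(q))$ truncation: these must be checked to preserve the $(1\pm\varepsilon)$ window, which follows by observing that $|\quot{\treq}{L(q)}| \leq N_{max}(q)\cdot\cwdth\cdot n^\cwdth$ cannot be much larger than $N_{max}(q)$ for the induction to be vacuous, and that the $\round$ operation snaps to values in $\{\ell, (1\pm\varepsilon)\ell : \ell \in \nats\}$ whose nearest element lies within a multiplicative $(1+\varepsilon)$ factor. Assembling the per-state bound and applying a union bound over all $|\calQ^\mathsf{u}|$ states completes the proof.
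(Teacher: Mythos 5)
There is a genuine gap, and it sits exactly at the point your sketch treats as routine: the claim that, \emph{conditional on the predecessor sample sets and predecessor estimates}, each trace $t \in \quot{\treq}{L(q)}$ enters $\hat S^r(q)$ with probability exactly $1/N_{max}(q)$, and that the $|\hat S^r(q)|$ are then i.i.d.\ so that \lemref{medians_of_means} applies. Neither holds under that conditioning. The estimate $N(q_i)$ is itself a function of the sample sets $\hat S^1(q_i),\dots,\hat S^{\nsnt}(q_i)$ (via the median-of-means), and $S^r(q_i)$ is obtained from $\hat S^r(q_i)$ by a $\reduce$ whose parameter involves $N(q_i)$; hence conditioning on the event $E_{i-1}$ (or on particular values of the predecessor estimates) biases the conditional law of the sets — e.g.\ conditioning on $N(q_i)=v$ skews the sizes $|\hat S^r(q_i)|$, so the survival probability of $\can(t,q_i)$ is no longer exactly $v^{-1}$. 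Moreover, since the common estimates $N(q')$ are computed from \emph{all} $\nsnt$ sample sets jointly, conditioning on them couples the different indices $r$, so the batch means fed to the median are not conditionally independent and the tail bound of \lemref{medians_of_means} cannot be invoked as you propose. This is precisely the delicacy the paper flags ("the dependence between $N(q)$, $S^r(q)$ and $\hat S^r(q)$ makes the analysis delicate") and resolves with the $\mathfrak{S}$-process: for each \emph{fixed} realizable history $h$ and value $v$ it builds fresh variables $\vZ{r}{h}{q}$ and $\vY{r}{h}{v}{q}$ in which the random estimates are replaced by the constants $h(q_i)$, $m_h(q)$, $v$, making the inclusion probabilities exact (Lemma~\ref{lemma:proba_first_order}) and the $\nsnt$ copies genuinely i.i.d., and then transfers the failure event of the real algorithm to this process via the one-sided coupling inequality of Lemma~\ref{lemma:coupling}. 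Your proposal contains no substitute for this decoupling step, and without it the unbiasedness and concentration claims in your second paragraph do not go through.

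A second, related omission: because the paper fixes a history $h$ before applying concentration, it must take a union bound not only over states but over all realizable histories $h \in \calH_q$, and the only reason this union is affordable is the $\round$ operation, which restricts $N(q)$ to at most $3\cdot 2^n$ "acceptable" values per state and hence bounds $|\calH_q| \le 2^{(n+2)|\calQ^\mathsf{u}|}$ (Lemma~\ref{lemma:few_acceptable_values_in_the_right_range}). In your sketch the rounding appears only as a precision-preservation check (the analogue of Lemma~\ref{lemma:rounding_of_good_is_still_good}), which misses its actual role; correspondingly your per-state target failure probability $1/(16|\calQ^\mathsf{u}|)$ and union bound over states alone do not match the structure the conditioning problem forces on the argument. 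The parts of your sketch that do align with the paper — canonical words via Lemmas~\ref{lemma:samples_are_canonical_words} and~\ref{lemma:samples_are_canonical_runs}, the variance bound through divergence states, $\depclass(t,k,q)$ and Lemma~\ref{lemma:combinatorial_lemma}, the treatment of the $\min(N_{max}(q),\hat N(q))$ truncation using $|\quot{\treq}{L(q')}| \le |\quot{\treq}{L(q)}|$ — are correct in spirit, but they are carried out in the paper inside the $\mathfrak{S}$-process (Lemmas~\ref{lemma:proba_second_order} and~\ref{lemma:variance}), not for the raw algorithmic variables, and that relocation is what makes them valid.
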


\begin{lemma}\label{lemma:proba_S(q)}
$\TraceMCCore^*(\calA^\mathsf{u},n,\ns,\nt,\threshold)$ constructs sets $S^r(q)$ such that $\sum_{r \in [\ns\nt],q \in \calQ^\mathsf{u}} |S^r(q)| \geq \threshold$ with probability at most $1/8$.
\end{lemma}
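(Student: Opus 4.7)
The plan is to combine Lemma~\ref{lemma:proba_p(q)} with a Markov argument applied to the expected total sample count. First I would define $E$ to be the event that $N(q) \in (1 \pm \varepsilon)|\quot{\treq}{L(q)}|$ for every state $q \in \calQ^\mathsf{u}$ reached during a run of $\TraceMCCore^*$; Lemma~\ref{lemma:proba_p(q)} gives $\Pr[\neg E] \leq 1/16$. A union bound then reduces the problem to showing $\Pr[\sum_{r,q}|S^r(q)| \geq \threshold,\; E] \leq 1/16$, after which the target bound $1/8$ follows.

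The key step is a per-state estimate
\[
\Ex\!\left[|S^r(q)| \cdot \mathbbm{1}_E\right] \leq \frac{1}{1-\varepsilon},
\]
which I would obtain by induction on the layer of $q$. The base case $q = q_I$ is trivial, since $S^r(q_I) = \{\lambda\}$ and $N(q_I) = 1$. For the inductive step, Lemmas~\ref{lemma:samples_are_canonical_words} and~\ref{lemma:samples_are_canonical_runs} guarantee that the only possible representative of a trace $t \in \quot{\treq}{L(q)}$ in $S^r(q)$ is $\can(t,q)$, and it enters $\hat S^r(q)$ only along its canonical run. Writing $\can(t,q) = \can(t',q') \cdot s$ with $(q',s,q)$ the last transition of $\run(t,q)$, the cascade $\bar S^r(q',q) = \reduce(S^r(q'), N(q')/N_{max}(q))$, $\hat S^r(q) = \union(\ldots)$, $S^r(q) = \reduce(\hat S^r(q), N_{max}(q)/N(q))$ turns the inductive inclusion probability $1/N(q')$ for $\can(t',q')$ into an inclusion probability $1/N(q)$ for $\can(t,q)$ (conditional on the $N$-estimates of preceding layers). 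Summing over the $|\quot{\treq}{L(q)}|$ canonical traces and using $N(q) \geq (1-\varepsilon)|\quot{\treq}{L(q)}|$ on $E$ yields the bound.

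Given the per-state estimate, linearity of expectation yields
\[
\Ex\!\left[\sum_{r \in [\nsnt],\, q \in \calQ^\mathsf{u}} |S^r(q)| \cdot \mathbbm{1}_E\right] \leq \frac{\nsnt \cdot |\calQ^\mathsf{u}|}{1-\varepsilon},
\]
and since $\threshold = 16 \cdot \nsnt \cdot (1-\varepsilon)^{-1} \cdot |\calQ^\mathsf{u}|$, Markov's inequality gives $\Pr[\sum_{r,q} |S^r(q)| \geq \threshold,\; E] \leq 1/16$. Combining with $\Pr[\neg E] \leq 1/16$ delivers $\Pr[\sum_{r,q} |S^r(q)| \geq \threshold] \leq 1/8$, as required.

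The main obstacle is making the telescoping argument inside the per-state estimate rigorous, because $N(q)$ is itself a random variable that depends on $\hat S^1(q),\dots,\hat S^\nsnt(q)$ through $\mom$ and on the $\round$ step. The right way to thread the induction is to condition on the $\sigma$-algebra generated by the estimates in preceding layers, at which point the $\reduce$ factors become essentially deterministic functions of that history and the identity $\Pr[\can(t,q) \in S^r(q)\mid\mathcal{F}_{<q}] = 1/N(q)$ follows by direct computation. This is the same invariant required to prove Lemma~\ref{lemma:proba_p(q)}, so once that invariant is available, the argument here amounts to bookkeeping rather than a new combinatorial ingredient.
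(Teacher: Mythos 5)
Your overall skeleton matches the paper's: split on the event that all estimates $N(q)$ are tight (cost $1/16$ via Lemma~\ref{lemma:proba_p(q)}), bound the expected total number of samples on that event by $\nsnt\cdot|\calQ^\mathsf{u}|\cdot(1-\varepsilon)^{-1}$, and finish with Markov against $\threshold=16\cdot\nsnt\cdot(1-\varepsilon)^{-1}\cdot|\calQ^\mathsf{u}|$; the arithmetic and the per-state bound $\Ex\big[|S^r(q)|\cdot\mathbbm{1}(N(q)\in(1\pm\varepsilon)|\quot{\treq}{L(q)}|)\big]\le(1-\varepsilon)^{-1}$ are exactly what the paper uses. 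The gap is in how you justify that bound. The invariant you lean on, $\Pr[\can(t,q)\in S^r(q)\mid\mathcal{F}_{<q}]=1/N(q)$, is not well-formed and, once repaired, is false: $N(q)$ is not measurable with respect to the preceding layers --- it is computed (via median-of-means and rounding) from $|\hat S^1(q)|,\dots,|\hat S^{\nsnt}(q)|$, which include the very set $\hat S^r(q)$ in which $\can(t,q)$ must survive. Hence the last $\reduce(\hat S^r(q),N_{max}(q)/N(q))$ at state $q$ is \emph{not} ``essentially deterministic given the history'': conditioning on the value of $N(q)$ biases the event $\can(t,q)\in\hat S^r(q)$, and in general $\Pr[\can(t,q)\in S^r(q)\mid N(q)=v]\neq 1/v$. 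This self-dependence between the sample sets and the estimate at the same state is precisely the delicate point of the whole analysis, and your proposal does not resolve it.

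The paper circumvents it not with a conditional-probability identity but with the product-expectation identity $\Ex[\mathbbm{1}(t\in S^r(q))\cdot N(q)]=1$ (Lemma~\ref{lemma:white_magic_lemma}), proved by a telescoping computation that keeps $N(q)$ \emph{inside} the expectation: conditioning on the pre-reduce data, the fresh coins of the final $\reduce$ contribute exactly the factor $N_{max}(q)/N(q)$, the factor $N(q)$ cancels, and one descends to $\Ex[\mathbbm{1}(t'\in S^r(q_i))\cdot N(q_i)]$ at the canonical predecessor. One then inserts $N(q)\cdot N(q)^{-1}$ and uses $\mathbbm{1}(N(q)\in(1\pm\varepsilon)|\quot{\treq}{L(q)}|)\cdot N(q)^{-1}\le\big((1-\varepsilon)|\quot{\treq}{L(q)}|\big)^{-1}$ to get the per-trace bound, which summed over $\quot{\treq}{L(q)}$ yields your per-state estimate. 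Note also that this is not ``the same invariant required to prove Lemma~\ref{lemma:proba_p(q)}'': that proof goes through the decoupled $\mathfrak{S}$-process and the coupling inequality (Lemma~\ref{lemma:coupling}), where after fixing a history one obtains inequalities, not equalities. So your reduction to Markov is correct, but the inductive step as described would fail; the missing ingredient is the expectation identity (or an equivalent decoupling device such as the $\mathfrak{S}$-process).
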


\noindent We prove that Lemma~\ref{lemma:main_result_core} follows from the two lemmas above.
\begin{proof}[Proof of Lemma~\ref{lemma:main_result_core}]
Let $\Delta = (1 \pm \varepsilon)|\quot{\treq}{L(\calA^\mathsf{u})}|$.
Let $A$ be $\TraceMCCore(\calA^\mathsf{u},n,\ns,\nt,\threshold)$ and $A^*$ be $\TraceMCCore^*(\calA^\mathsf{u},n,\ns,\nt,\threshold)$. Let $\Sigma = \sum_{r \in [\ns\nt]}\sum_{q\in \calQ^{\mathsf{u}}} |S^r(q)|$. $A$ returns $N(q_F)$ if it exits normally and $0$ when it exits through line~\ref{line:interrupt}, which occurs if and only if $\Sigma \geq \threshold$. By assumption $|L(\calA^\mathsf{u})| > 0$, so $0 \not\in \Delta$. By Lemmas~\ref{lemma:proba_p(q)} and~\ref{lemma:proba_S(q)},
\begin{align*}
\Pr_{A}\left[\mathsf{est} \not\in \Delta\right] \leq \Pr_{A^*}\left[N(q_F) \not\in \Delta \text{ or } \Sigma \geq \threshold\right] \leq \Pr_{A^*}\left[N(q_F) \not\in \Delta\right] + \Pr_{A^*}\left[\Sigma \geq \threshold\right] \leq 1/4
\end{align*}

For the running time, let $m = \max_i |\calQ^i|$. The total number of samples is in $O(\threshold)$. Each sample goes through at most $O(m)$ {\reduce}'s so the cumulated cost of all {\reduce}'s is $O(\threshold m)$ operations. Arithmetic computations break down into (1) computing $\nt|\calQ^\mathsf{u}|$ means of $\ns$ integers, (2) computing $|\calQ^\mathsf{u}|$ minimums in subsets of $O(m)$ rational numbers, and computing $|\calQ^\mathsf{u}|$ medians of $\nt$ rational numbers; this can be done in $O((\nt\ns+\nt\log(\nt)+m)|\calQ^\mathsf{u}|)$ elementary operations. For the {\union}'s, the set $T$ can be computed and sorted w.r.t. $\prec$ ahead of time for every $q$ in time $O(m\log(m)|\calQ^\mathsf{u}|)$. The oracle $\memp{\treq}$ is called at most $m$ times for the same sample. Hence a total of $O(\threshold m)$ oracle calls and $O(\threshold m + m\log(m)|\calQ^\mathsf{u}| + \ns\nt|\calQ^\mathsf{u}|) = O(\ns\nt n m + m^2\log(m)n)$ operations.
\end{proof}

The dependence between $N(q)$, $S^r(q)$ and $\hat S^r(q)$ makes the analyze of $\TraceMCCore^*$ delicate. To deal with it, we introduce a framework for the analysis, a random process that simulates many possible runs of $\TraceMCCore^*$ at once.

\subsection{$\mathfrak{S}$-Process}

The set of ancestors of $q \in \calQ^{\mathsf{u}}$ is $\ancestors(q) = \bigcup_{q' \in \preds(q)} \{q'\} \cup \ancestors(q')$. Note that $q \not\in \ancestors(q)$. 

\begin{definition}
An \emph{history} for a state $q$ in $\calA^{\mathsf{u}}$ is a mapping from $\ancestors(q)$ to $\mathbb{R}$. An history $h$ for $q$ is \emph{realizable} when there is run of $\TraceMCCore^*(\calA^{\mathsf{u}},n,\ns,\nt,\threshold)$ that assigns $N(q')$ to $h(q')$ for all $q' \in \ancestors(q)$; any such run is called \emph{compatible} with $h$. Given two states $q_1$ and $q_2$ of $\calA^{\mathsf{u}}$, two histories for $q_1$ and $q_2$ are \emph{compatible} if they are consistent on $\ancestors(q_1) \cap \ancestors(q_2)$. The only possible history for the initial state is the empty history $\emptyset$ since it has no predecessor.
\end{definition}

The $\mathfrak{S}$-process simulates many possible runs of $\TraceMCCore^*$ all at once. In the $\mathfrak{S}$-process, the random sets $S^r(q)$ and $\hat S^r(q)$ are simulated by the random sets $\mathfrak{S}^r(q)$ and $\hat{\mathfrak{S}}^r(q)$. These random sets are called $\mathfrak{S}$-variables. There is one random set $\vZ{r}{h}{q}$ per realizable history $h$ for $q$, and one random set $\vY{r}{h}{v}{q}$ per realizable history $h$ for $q$ and per value $v$ that can be computed for $N(q)$. $\vZ{r}{h}{q}$ simulates $\hat S^r(q)$ in runs of $\TraceMCCore^*$ compatible with $h$, and $\vY{r}{h}{v}{q}$ simulates $S^r(q)$ for the same runs where, in addition, $\TraceMCCore^*$ sets $N(q)$ to $v$. 

\myparagraph{$\mathfrak{S}$-variables} Fix $r \in [\nsnt]$.
\begin{enumerate}[leftmargin=*]
\item[•] if $q$ is the initial state $q_I$ of $\calA^{\mathsf{u}}$, then the only possible history for $q$ is the empty history $\emptyset$, the only value for $N(q)$ is $1$, and the only value for $S^r(q)$ is $\{\eqcl{\lambda}\}$. So we define $\vY{r}{\emptyset}{1}{q} = \{\eqcl{\lambda}\}$.
\item[•] if $q\neq q_I$ then let $\preds(q) = \{q_1,\dots,q_k\}$. Fix a realizable history $h$ for $q$ and let $h_i$ be the restriction of $h$ to $\ancestors(q_i)$. $h_i$ is a realizable history for $q_i$. Let $v_i = h(q_i)$ and $m_h(q) = \max(v_1,\dots,v_k)$. The random sets $\vY{r}{h_1}{v_1}{q_1},\dots,\vY{r}{h_k}{v_k}{q_k}$ exists and we define, for all $i \in [k]$,
\begin{align*}
\vW{r}{h}{q_i,q} &= \reduce\left(\vY{r}{h_i}{v_i}{q_i},\frac{v_i}{m_h(q)}\right)
\\
\vZ{r}{h}{q} &= \union(q,\vW{r}{h}{q_1,q},\dots,\vW{r}{h}{q_k,q})
\end{align*}
Every value $v$ that can be given to $N(q)$ in a run of $\TraceMCCore^*$ compatible with $h$ verifies $v \geq N_{max}(q) = m_h(q)$. We define
$$
\vY{r}{h}{v}{q} = \reduce\left(\vZ{r}{h}{q},\frac{m_h(q)}{v}\right)
$$
\end{enumerate}

$\vW{r}{h}{q_i,q}$ and $\vZ{r}{h}{q}$ are the constructed like $\bar S^r(q_i,q)$ and $\hat S^r(q)$ in \estimateAndSample, lines~\ref{line:sbar} and~\ref{line:shat}, with $v_i$ replacing $N(q_i)$ and $\vY{r}{h_i}{v_i}{q_i}$ replacing $S^r(q_i)$. Similarly, $\vY{r}{h}{v}{q}$ is built as in line~\ref{line:s} with $v$ replacing $N(q)$ and $\vZ{r}{h}{q}$ replacing $\hat S^r(q)$. $m_h(q)$ is a constant that plays the role of the random variable $N_{max}(q)$.  $N_{max}(q)$ equals $m_h(q)$ in all runs of $\TraceMCCore^*$ compatible with $h$.

\begin{restatable}{lemma}{coupling}\label{lemma:coupling}
Let $H(q)$ be the random variable recording the history for $q$ in a run of $\TraceMCCore^*$. Let $h$ be a realizable history for $q$. Let $e(\hat S^1(q),\dots,\hat S^\nsnt(q))$ be an event that is function of $\hat S^1(q),\dots,\hat S^\nsnt(q)$ and let $e(\vZ{1}{h}{q},\dots,\vZ{\nsnt}{h}{q})$ be the same event where each $\hat S^r(q)$ is replaced by $\vZ{r}{h}{q}$. Then 
$$
\Pr[H(q) = h \text { and } e(\hat S^1(q),\dots,\hat S^\nsnt(q))] \leq \Pr[e(\vZ{1}{h}{q},\dots,\vZ{\nsnt}{h}{q})]
$$
Similarly, let $e(S^1(q),\dots,S^\nsnt(q))$ be an event that is function of $S^1(q),\dots,S^\nsnt(q)$ and let $e(\vY{1}{h}{v}{q},\dots,$ $\vY{\nsnt}{h}{v}{q})$ be the same event where each $S^r(q)$ is replaced by $\vY{r}{h}{v}{q}$. Then 
$$
\Pr[H(q) = h \text { and } N(q) = v \text { and } e(S^1(q),\dots,S^\nsnt(q))] \leq \Pr[e(\vY{1}{h}{v}{q},\dots,\vY{\nsnt}{h}{v}{q})]
$$
\end{restatable}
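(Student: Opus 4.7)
The plan is to prove both inequalities simultaneously by induction on the level of $q$ in the unrolled automaton, via an explicit coupling between one run of $\TraceMCCore^*$ and the $\mathfrak{S}$-process. The stronger statement I will establish inductively is: under the coupling, on the event $\{H(q) = h\}$ one has $\hat{S}^r(q) = \vZ{r}{h}{q}$ for every $r \in [\nsnt]$, and on the event $\{H(q) = h,\ N(q) = v\}$ one has $S^r(q) = \vY{r}{h}{v}{q}$ for every $r$. Both inequalities in the lemma then follow immediately: multiplying by indicator variables and taking expectations yields $\Pr[H(q) = h \land e(\hat S^1(q),\dots,\hat S^\nsnt(q))] = \Pr[H(q) = h \land e(\vZ{1}{h}{q},\dots,\vZ{\nsnt}{h}{q})]$, which is at most $\Pr[e(\vZ{1}{h}{q},\dots,\vZ{\nsnt}{h}{q})]$, and the analogous manipulation yields the second claim.

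The coupling is built by attaching, to every \reduce invocation in either process, an independent uniform $[0,1]$ coin per potential element, and by stipulating that the \reduce producing $\bar S^r(q_i,q)$ from $S^r(q_i)$ reuses the same coins as the \reduce producing $\vW{r}{h}{q_i,q}$ from $\vY{r}{h_i}{v_i}{q_i}$ whenever the algorithm actually realizes $H(q_i) = h_i$ and $N(q_i) = v_i$; the final \reduce producing $S^r(q)$ from $\hat{S}^r(q)$ is coupled analogously with the one producing $\vY{r}{h}{v}{q}$ from $\vZ{r}{h}{q}$ on $\{H(q) = h, N(q) = v\}$. Because $h_i$ and $v_i$ are deterministic functions of $h$, this prescription remains consistent when two predecessors of $q$ share an ancestor.

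The base case $q = q_I$ is immediate, since the only realizable history is $\emptyset$, the only value of $N(q_I)$ is $1$, and both processes deterministically assign $\{\eqcl{\lambda}\}$ as the sample set. For the inductive step at $q \neq q_I$ with $\preds(q) = \{q_1,\dots,q_k\}$, fix a realizable history $h$ for $q$ and condition on $\{H(q) = h\}$. For each predecessor $q_i$ one has $H(q_i) = h_i := h|_{\ancestors(q_i)}$ and $N(q_i) = v_i := h(q_i)$, so the inductive hypothesis yields $S^r(q_i) = \vY{r}{h_i}{v_i}{q_i}$ under the coupling. Since $N_{\max}(q) = \max_i v_i = m_h(q)$, the \reduce probabilities $v_i/N_{\max}(q) = v_i/m_h(q)$ match in both processes and the shared coins give $\bar S^r(q_i,q) = \vW{r}{h}{q_i,q}$. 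The \union step is a deterministic function of its inputs, the automaton, the ordering $\prec$, and the oracle $\memp{\treq}$, so $\hat S^r(q) = \vZ{r}{h}{q}$. Conditioning further on $N(q) = v$, the closing \reduce uses probability $m_h(q)/v$ with shared coins, yielding $S^r(q) = \vY{r}{h}{v}{q}$ and closing the induction.

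The main technical obstacle lies in the bookkeeping, as the $\mathfrak{S}$-process maintains in parallel one random family $(\vY{r}{h'}{v'}{q'})_r$ per realizable pair $(h',v')$ at every state $q'$. One must verify that (i) a run of $\TraceMCCore^*$ interacts with exactly one such family per state, namely the one indexed by the realized $(H(q'), N(q'))$, and (ii) when two predecessors $q_i, q_j$ of $q$ share an ancestor $q'$, the restrictions $h_i|_{\ancestors(q')}$ and $h_j|_{\ancestors(q')}$ coincide, so the coupling selects the same family for $q'$ from both sides. Both facts follow from the observation that histories restrict consistently along the DAG and that $h$ pins down all assigned values, so no deeper probabilistic issue arises beyond this consistency check.
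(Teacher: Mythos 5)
Your proof is correct and rests on the same underlying idea as the paper's: exhibit a single probability space on which the run of $\TraceMCCore^*$ and the $\mathfrak{S}$-process coexist so that, on the event $\{H(q)=h\}$ (resp.\ $\{H(q)=h,\,N(q)=v\}$), the realized sets $\hat S^r(q)$ (resp.\ $S^r(q)$) coincide pathwise with $\vZ{r}{h}{q}$ (resp.\ $\vY{r}{h}{v}{q}$), after which dropping the conjunct $H(q)=h$ (and $N(q)=v$) gives the inequality. The formalizations differ, though. The paper never speaks of shared coins: it rewrites $\TraceMCCore^*$ through a chain of algorithms $A_1,\dots,A_8$, copying the realized sets into history-indexed slots $\hat S^r_{H(q)}(q)$, $S^r_{H(q),N(q)}(q)$, then progressively computing \emph{all} slots $\hat S^r_h(q)$, $S^r_{h,v}(q)$ for every realizable $h$ and $v$, each step preserving the joint law of the relevant variables, until the residual algorithm is exactly the $\mathfrak{S}$-process. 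That route buys a purely syntactic justification that the marginal of the algorithm is untouched. Your route instead couples the $\reduce$ calls by reusing uniform per-element coins along the realized history, and proves the pathwise equality by induction on the level, using that $\union$ is deterministic and that $N_{\max}(q)=m_h(q)$ on $\{H(q)=h\}$; this is shorter, but it puts the burden on one point you only gesture at: the routing of coins (shared versus fresh) is decided adaptively, as a function of the already-realized history, so you must note explicitly that the decision is measurable with respect to the past and that in either branch the coins actually consumed are uniform and independent of that past — hence both marginals (the law of the $\TraceMCCore^*$ run and the law of the $\mathfrak{S}$-variables for the fixed $h$, $v$) are the prescribed ones. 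With that observation spelled out, and with your consistency check that restrictions of $h$ agree on common ancestors (so the coupling selects a single, well-defined $\mathfrak{S}$-variable per state of $\ancestors(q)\cup\{q\}$), the argument is complete and yields the same conclusion as the paper's, even giving equality rather than inequality before the event is dropped.
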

The first advantage of the $\mathfrak{S}$-process over $\TraceMCCore^*$ is that its $\nsnt$ subprocesses identified by the superscripts $r$ are totally disjoint and thus independent. The second advantage is that can find the exact probability that a trace is in a $\mathfrak{S}$-variable and that we can bound from above the probability that two distinct traces are in two $\mathfrak{S}$-variables. The proofs of these results are deferred to appendix.

\begin{restatable}{lemma}{probaFirstOrder}\label{lemma:proba_first_order}
For $q \in \calQ^\mathsf{u}$, every $r \in [\nsnt]$, every realizable history $h$ for $q$, every value $v$ such that $\vY{r}{h}{v}{q}$ is defined, and every $t \in \quot{\treq}{L(q)}$, we have 
$$
\Pr[t  \in \vY{r}{h}{v}{q}] = v^{-1} \quad \text{and} \quad \Pr[t \in \vZ{r}{h}{q}] = m_h(q)^{-1}
$$ 
\end{restatable}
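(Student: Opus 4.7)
The plan is to prove both equalities simultaneously by induction on the level $i$ of $q$ in the unrolled automaton $\calA^\mathsf{u}$. The base case $i=0$ is immediate: the only realizable history at $q_I$ is $\emptyset$, the only class in $\quot{\treq}{L(q_I)}$ is $\eqcl{\lambda}$, and by definition $\vY{r}{\emptyset}{1}{q_I} = \{\eqcl{\lambda}\}$, giving $\Pr[\eqcl{\lambda} \in \vY{r}{\emptyset}{1}{q_I}] = 1 = v^{-1}$. There is no $\vZ{r}{\emptyset}{q_I}$ to consider at this level.

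For the inductive step, fix $q \in \calQ^i$ with $i \geq 1$ and $\preds(q) = \{q_1,\dots,q_k\}$, a realizable history $h$ for $q$, and let $h_j$ be the restriction of $h$ to $\ancestors(q_j)$ with $v_j = h(q_j)$. Fix $t \in \quot{\treq}{L(q)}$. By Definition~\ref{definition:canonical_run}, $\run(t,q)$ terminates with a uniquely determined transition $(q_{j^*},s,q)$ and passes through a uniquely determined class $t^* \in \quot{\treq}{L(q_{j^*})}$ such that $\can(t,q) = \can(t^*,q_{j^*})\cdot s$. The first crucial observation is that the $\union$ procedure, together with the ordering $\prec$ and the predicate checked by $\memp{\treq}$, is exactly designed so that, given the set $\vW{r}{h}{q_{j^*},q}$, the canonical word $\can(t,q)$ is placed in $\vZ{r}{h}{q}$ iff $\can(t^*,q_{j^*}) \in \vW{r}{h}{q_{j^*},q}$. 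Indeed, by Lemmas~\ref{lemma:samples_are_canonical_words} and~\ref{lemma:samples_are_canonical_runs}, no other representant of $t$ can be produced for $q$, and the rejection test inside $\union$ for transitions $(q_j,s',q) \prec (q_{j^*},s,q)$ is a deterministic function of $\calA^\mathsf{u}$, $t$ and $\prec$ (not of the random samples), so it does not interfere with probabilities.

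Granted this reduction, the probability calculation is a short computation using that $\reduce$ keeps each element independently with the prescribed probability. By the inductive hypothesis applied to $q_{j^*}$ we have $\Pr[t^* \in \vY{r}{h_{j^*}}{v_{j^*}}{q_{j^*}}] = v_{j^*}^{-1}$, and since $\vW{r}{h}{q_{j^*},q} = \reduce(\vY{r}{h_{j^*}}{v_{j^*}}{q_{j^*}},\,v_{j^*}/m_h(q))$ is formed by independently reducing each element,
\[
\Pr[t \in \vZ{r}{h}{q}] \;=\; \Pr[t^* \in \vW{r}{h}{q_{j^*},q}] \;=\; \frac{1}{v_{j^*}} \cdot \frac{v_{j^*}}{m_h(q)} \;=\; \frac{1}{m_h(q)},
\]
proving the second equality. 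Applying $\reduce$ once more yields $\vY{r}{h}{v}{q} = \reduce(\vZ{r}{h}{q},\,m_h(q)/v)$, so $\Pr[t \in \vY{r}{h}{v}{q}] = m_h(q)^{-1}\cdot m_h(q)/v = v^{-1}$, establishing the first equality.

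The main obstacle is the reduction step in the inductive argument: one must argue that the event ``$t \in \vZ{r}{h}{q}$'' factors cleanly through a \emph{single} $\vW{r}{h}{q_j,q}$, rather than through several predecessor/transition pairs that a priori could add equivalent words to $\vZ{r}{h}{q}$. This is exactly the role played by the canonical-run machinery of Section~\ref{sec:canrun}: it singles out one specific $(q_{j^*},s)$ per trace $t$ and makes the rejection condition in $\union$ deterministic rather than random. Once this is pinned down, the rest is a transparent product of multiplicative factors coming from the $\reduce$'s, which is clean in the $\mathfrak{S}$-process because fixing the history $h$ decouples the randomness across the $\nsnt$ subprocesses.
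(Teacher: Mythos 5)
Your proof is correct and follows essentially the same route as the paper's: induction over the levels of $\calA^\mathsf{u}$, reducing the event $t \in \vZ{r}{h}{q}$ to membership of a unique trace $t^*$ in the single set $\vW{r}{h}{q_{j^*},q}$ determined by the canonical run (the paper states this uniqueness property of $\union$ tersely, while you justify it via the canonical-run lemmas), and then multiplying the $\reduce$ retention probabilities to get $m_h(q)^{-1}$ and $v^{-1}$.
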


\begin{restatable}{lemma}{probaSecondOrder}\label{lemma:proba_second_order}
For $k > 0$, let $q,q' \in \calQ^k$, $t \in \quot{\treq}{L(q)}$ and $t' \in \quot{\treq}{L(q')}$ such that $(t,q) \neq (t',q')$. Let $q^*$ be the divergence state of $\run(t,q)$ and $\run(t',q)$. Let $h$ and $h'$ be compatible histories for $q$ and $q'$, respectively, then 
$$
\Pr[t \in \vY{r}{h}{v}{q},\, t' \in \vY{r}{h'}{v'}{q'}] \leq \frac{h(q^*)}{v'v}
\quad \text{and} \quad \Pr[t \in \vZ{r}{h}{q},\, t' \in \vZ{r}{h'}{q'}] \leq \frac{h(q^*)}{m_h(q)m_{h'}(q')}
$$
\end{restatable}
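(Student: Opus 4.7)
The plan is to prove the two bounds jointly by a telescoping decomposition of each event along the shared canonical prefix that ends at the divergence state $q^*$. Since $(t,q) \neq (t',q')$, the two canonical runs are distinct and thus $q^*$ lies strictly before $q$ and $q'$; because $h$ and $h'$ are compatible and $q^* \in \ancestors(q)\cap\ancestors(q')$, we may set $v^* := h(q^*) = h'(q^*)$, and let $h^*$ denote their common restriction to $\ancestors(q^*)$. Let $t^* \in \quot{\treq}{L(q^*)}$ be the trace carried by this common prefix, i.e.\ such that $\run(t^*,q^*)$ is the common prefix of $\run(t,q)$ and $\run(t',q')$.

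First I would establish the $\vZ$-bound. By Lemma~\ref{lemma:samples_are_canonical_runs} applied along $\run(t,q)$, the event $\{t \in \vZ{r}{h}{q}\}$ is equivalent to the conjunction of (i) $t^* \in \vY{r}{h^*}{v^*}{q^*}$ and (ii) the event that the successive $\reduce$ operations along the canonical chain from $q^*$ down to $q$ keep the canonical descendant of $t^*$ alive at every level. Using Lemma~\ref{lemma:proba_first_order} together with the defining relations $\vW = \reduce(\vY,\cdot)$, $\vZ = \union(\vW_1,\ldots)$, and $\vY = \reduce(\vZ,\cdot)$, a short computation yields
$$\Pr[t^{(i)} \in \vY{r}{h^{(i)}}{v^{(i)}}{q^{(i)}} \mid t^{(i-1)} \in \vY{r}{h^{(i-1)}}{v^{(i-1)}}{q^{(i-1)}}] = \frac{v^{(i-1)}}{v^{(i)}},$$
and a telescoping product along the chain gives $\Pr[(\text{ii}) \mid (\text{i})] = v^*/m_h(q)$; the analogous decomposition for $t' \in \vZ{r}{h'}{q'}$ yields conditional probability $v^*/m_{h'}(q')$. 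The crucial observation is that after $q^*$ the two canonical runs take distinct outgoing transitions, so the two chains of $\reduce$ and $\union$ operations instantiate disjoint families of $\mathfrak{S}$-variables and draw on disjoint coin flips; hence the two chain-survival events are independent conditional on $t^* \in \vY{r}{h^*}{v^*}{q^*}$. Combining,
$$\Pr[t \in \vZ{r}{h}{q},\, t' \in \vZ{r}{h'}{q'}] \;=\; \frac{1}{v^*} \cdot \frac{v^*}{m_h(q)} \cdot \frac{v^*}{m_{h'}(q')} \;=\; \frac{h(q^*)}{m_h(q)\,m_{h'}(q')}.$$

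For the $\vY$-bound I would then apply the outermost $\reduce$ on each side. When $(q,h,v) \neq (q',h',v')$, the two calls $\reduce(\vZ{r}{h}{q},m_h(q)/v)$ and $\reduce(\vZ{r}{h'}{q'},m_{h'}(q')/v')$ use independent randomness, so $\Pr[t \in \vY{r}{h}{v}{q},\, t' \in \vY{r}{h'}{v'}{q'}]$ equals $\Pr[t \in \vZ,\,t' \in \vZ'] \cdot (m_h(q)/v)(m_{h'}(q')/v')$, which is $h(q^*)/(vv')$. When the triples coincide, the two $\vY$-sets coincide as well, but the hypothesis $(t,q) \neq (t',q')$ forces $t \neq t'$, so the two Bernoulli flips applied to $t$ and to $t'$ inside the single $\reduce$ call are independent, yielding the same bound.

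The main obstacle is the rigorous independence of the two post-$q^*$ chains. Because every intermediate $\vZ$ is built via $\union$ over samples coming from several predecessors, one must argue that the only predecessor contributing to placing the canonical descendant $t^{(i)}$ in $\vZ{r}{h^{(i)}}{q^{(i)}}$ is the canonical one (secured by Lemma~\ref{lemma:samples_are_canonical_words}), and that the $\mathfrak{S}$-process instantiates each $\vW$, $\vZ$, $\vY$ along the two descents as separate probabilistic objects drawing on disjoint randomness. Verifying this carefully, together with the boundary case $q^* = q_I$ where $v^* = 1$ and the two chains are fully independent from the start, is the technical heart of the argument.
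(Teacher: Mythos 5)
Your decomposition is, in substance, the paper's own proof with the induction unrolled: the paper peels off the last transition of each canonical run (using the uniqueness of the canonical predecessor enforced by \union{} and Lemma~\ref{lemma:samples_are_canonical_words}), multiplies out the survival probabilities of the successive \reduce{} coins, and closes the recursion at the divergence state via Lemma~\ref{lemma:proba_first_order}; your telescoping identity $\Pr[t^{(i)}\in\vY{r}{h^{(i)}}{v^{(i)}}{q^{(i)}}\mid t^{(i-1)}\in\vY{r}{h^{(i-1)}}{v^{(i-1)}}{q^{(i-1)}}]=v^{(i-1)}/v^{(i)}$ and your outermost-\reduce{} step for the $\vY$-bound are exactly the same computations, just organized globally instead of as an induction on the level. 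So the route is not genuinely different; the question is whether the one step you flag as the "technical heart" actually holds.

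It does not, in general, and this is a concrete gap. Your claim that after $q^*$ the two chains "instantiate disjoint families of $\mathfrak{S}$-variables and draw on disjoint coin flips" fails when $\run(t,q)$ and $\run(t',q')$ leave $q^*$ by two transitions with \emph{different labels into the same successor state} $p$ (this is perfectly possible; e.g.\ $q_{10}$ has both an $a$- and a $b$-transition into $q_F$ in Fig.~\ref{fig:example}). In that situation the trace carried at $q^*$ is the same $t^*$ on both chains, and --- since $h$ and $h'$ agree on $\ancestors(p)$ --- the first post-divergence event of \emph{both} chains is the survival of the very same element $\can(t^*,q^*)$ in the very same call $\vW{r}{g}{q^*,p}$ (the \union{} step then deterministically adds both extensions): one shared coin, not two independent ones. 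Redoing your telescoping with this shared coin gives a joint probability of $m_{g}(p)/\bigl(m_h(q)\,m_{h'}(q')\bigr)$, where $m_g(p)\geq g(q^*)=h(q^*)$ and can be strictly larger (whenever $p$ has another predecessor with a larger estimate), so your claimed equality $h(q^*)/\bigl(m_h(q)m_{h'}(q')\bigr)$ is false in this case and the stated bound does not follow from your argument. This degenerate sharing is precisely the delicate point any proof must isolate: the paper's induction explicitly singles out the coincidence case $(t_i,q_i)=(t'_j,q'_j)$ and handles it by a direct computation with Lemma~\ref{lemma:proba_first_order} rather than by a product over two chains (and even there the independence of the two \reduce{} calls is what carries the step, which is only automatic when the two target states differ, so the $q=q'$ sub-case requires explicit care). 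You correctly handle the analogous coincidence at the outermost $\vY$-level (same call, distinct elements $t\neq t'$), but the problematic coincidence one level down --- same call, \emph{same} element --- is exactly the one your outline assumes away; also note that your argument, if the independence held everywhere, would prove an equality strictly stronger than the stated inequality, which should itself be a warning that the shared-randomness case has to be confronted rather than asserted disjoint.
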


\subsection{All Estimates are Tight -- Proof of Lemma~\ref{lemma:proba_p(q)}}

For convenience let $\Delta(q) = (1 \pm \varepsilon)|\quot{\treq}{L(q)}|$. 
Here, we prove that the event $\bigcup_{q \in \calQ^{\mathsf{u}}} N(q) \not\in \Delta(q)$ occurs with probability $P$ at most $1/16$ in $\TraceMCCore^*$. Order the states of $\calQ^\mathsf{u}$ so that every node comes after its predecessors. $q_I$ is the first in that ordering. It is impossible that $N(q_I) \not\in \Delta(q_I)$ since $N(q_I) = 1 = |\quot{\treq}{L(q)}|$. If there are states $q$ such that $N(q) \not\in \Delta(q)$, then there better be one that comes first in the state ordering and it has to be different from $q_I$. Thus,
\begin{align*}
P := \Pr\left[\bigcup_{q \in \calQ^\mathsf{u}} N(q) \not\in \Delta(q)\right]
= \Pr\left[\bigcup_{q \in \calQ^{> 0}} N(q) \not\in \Delta(q) \text{ and } \forall q' \in \ancestors(q),\, N(q') \in \Delta(q')\right]
\\
\leq \sum_{q \in \calQ^{> 0}} \Pr\left[N(q) \not\in \Delta(q) \text{ and } \forall q' \in \ancestors(q),\, N¨(q') \in \Delta(q')\right]
\end{align*}

Let $\calH_q$ be the set of realizable histories for $q$ that verify $N(q') \in \Delta(q')$ for all ancestors $q'$ of $q$ and $H(q)$ be the random variable that records the history for $q$. 
\begin{align*}
P \leq \sum_{q \in \calQ^{> 0}} \sum_{h \in \calH_q} \Pr[H(q) = h \text{ and } N(q) \not\in \Delta(q)]
\end{align*}

Recall that $N(q) = \round(q,\max(N_{max}(q),\hat N(q)))$. A real value $v$ is \emph{acceptable for $q \in Q$} if it an integer between $1$ and $2^n$ or if it is of the form
$
(1+\varepsilon)\cdot \ell
$ or
$
(1-\varepsilon)\cdot \ell 
$  
for some integer $\ell$ between $1$ and $2^n$. The function $\round(q,v)$ takes in $q$ and a value $v$ and returns the lowest acceptable value that this greater than or equal to $v$.  Rounding cannot make a value leave $\Delta(q)$.

\begin{lemma}\label{lemma:rounding_of_good_is_still_good}
If $v \in \Delta(q)$ then $\round(q,v) \in \Delta(q)$.
\end{lemma}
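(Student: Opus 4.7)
The plan is to observe that the two endpoints of $\Delta(q)$ are themselves acceptable values, so rounding up a value inside $\Delta(q)$ cannot escape it. Concretely, let $N = |\quot{\treq}{L(q)}|$. Since $\TraceMCCore^*$ was only called with $L(\calA^\mathsf{u}) \neq \emptyset$, we have $N \geq 1$; moreover $N$ is a positive integer bounded above by the corresponding bound on the number of traces of length $n$ used in the definition of acceptable values (at most $2^n$ under the paper's convention). Both endpoints $(1-\varepsilon)N$ and $(1+\varepsilon)N$ of $\Delta(q)$ thus match the admissible forms $(1-\varepsilon)\ell$ and $(1+\varepsilon)\ell$ with $\ell = N$, so both are acceptable for $q$.

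Now suppose $v \in \Delta(q)$, i.e.\ $(1-\varepsilon)N \leq v \leq (1+\varepsilon)N$. By definition, $\round(q,v)$ is the smallest acceptable value that is at least $v$. Since $(1+\varepsilon)N$ is itself an acceptable value with $(1+\varepsilon)N \geq v$, minimality gives $\round(q,v) \leq (1+\varepsilon)N$. On the other hand, the definition of $\round$ ensures $\round(q,v) \geq v \geq (1-\varepsilon)N$. Combining the two bounds yields $\round(q,v) \in [(1-\varepsilon)N,(1+\varepsilon)N] = \Delta(q)$, as required.

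There is no real obstacle in this argument; the only thing to be careful about is to justify that the true count $N$ does lie in the range of integers for which $(1 \pm \varepsilon)\ell$ is declared acceptable, which is immediate from the hypothesis that the unrolled language is non-empty and from the choice of the upper bound in the definition of acceptable values.
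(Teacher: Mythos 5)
Your proof is correct and follows essentially the same route as the paper: both arguments rest on the observation that $|\quot{\treq}{L(q)}|$ is an integer between $1$ and $2^n$, so the endpoints $(1\pm\varepsilon)|\quot{\treq}{L(q)}|$ of $\Delta(q)$ are themselves acceptable values, and minimality of $\round$ then pins $\round(q,v)$ inside the interval. The only cosmetic difference is your justification that the count is at least $1$ (the paper simply uses that states of the unrolled automaton have non-empty $L(q)$ by construction), which does not affect correctness.
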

\begin{proof}
Since $|\quot{\treq}{L(q)}|$ is an integer between $1$ and $2^n$, the lower and upper limits of $\Delta(q)$ are acceptable values for $q$, hence the result.
\end{proof}

\begin{lemma}\label{lemma:few_acceptable_values_in_the_right_range}
Let $q \in \calQ^\mathsf{u}$. There are between $2$ and $2^{n+2}$ acceptable values for $q$ in $\Delta(q)$. It follows that $|\calH_q| \leq 2^{(n+2)|\calQ^\mathsf{u}|}$.
\end{lemma}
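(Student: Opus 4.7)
The plan is to treat the two claims of the lemma separately. The count of acceptable values in $\Delta(q)$ is handled by matching explicit witnesses against a crude enumeration of all acceptable values, and the bound on $|\calH_q|$ then follows from a straightforward product argument. I expect no serious obstacle; the lemma is essentially bookkeeping with no probabilistic content.

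For the lower bound of $2$, I would exhibit two acceptable values already inside $\Delta(q)$. Letting $\ell = |\quot{\treq}{L(q)}|$, which is a positive integer by the unrolled-automaton convention ($L(q) \neq \emptyset$) and satisfies $\ell \leq 2^n$ under the paper's convention on the range of acceptable values, the numbers $(1-\varepsilon)\ell$ and $(1+\varepsilon)\ell$ are both acceptable by the very definition and are exactly the two endpoints of $\Delta(q) = [(1-\varepsilon)\ell,(1+\varepsilon)\ell]$. They are distinct because $\varepsilon > 0$ and $\ell \geq 1$, so $\Delta(q)$ contains at least two acceptable values.

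For the upper bound of $2^{n+2}$, I would bound the \emph{total} number of acceptable values, ignoring the range $\Delta(q)$ entirely. The definition lists three families: integers in $\{1,\dots,2^n\}$, values of the form $(1+\varepsilon)\ell$ with $\ell$ in that same set, and values of the form $(1-\varepsilon)\ell$ with $\ell$ in that same set. Each family has at most $2^n$ elements, so in aggregate there are at most $3\cdot 2^n < 2^{n+2}$ acceptable values; a fortiori, at most $2^{n+2}$ of them lie in $\Delta(q)$.

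Finally, any $h \in \calH_q$ is a map $h : \ancestors(q) \to \reals$ that sends each $q'$ to an acceptable value in $\Delta(q')$. By the previous paragraph, each coordinate $h(q')$ admits at most $2^{n+2}$ choices. Since $|\ancestors(q)| \leq |\calQ^\mathsf{u}|$, the product bound yields $|\calH_q| \leq (2^{n+2})^{|\calQ^\mathsf{u}|} = 2^{(n+2)|\calQ^\mathsf{u}|}$, which is the claimed inequality.
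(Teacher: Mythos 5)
Your proof is correct and follows essentially the same route as the paper's: the two endpoints of $\Delta(q)$ (acceptable since $|\quot{\treq}{L(q)}|$ is an integer in $[1,2^n]$) give the lower bound, the crude count of at most $3\cdot 2^n < 2^{n+2}$ acceptable values gives the upper bound, and the per-ancestor product bound over at most $|\calQ^\mathsf{u}|$ ancestors yields $|\calH_q| \leq 2^{(n+2)|\calQ^\mathsf{u}|}$. No gaps beyond those already implicit in the paper's own argument.
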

\begin{proof}
There is at least one two acceptable values for $q$ in $\Delta(q)$ because the limits of $\Delta(q)$ are acceptable. The $3 \cdot 2^{n}$ upper bound follows from the definition as every $\ell \in [2^n]$ can give three acceptable values: $\ell$, $(1-\varepsilon)\cdot \ell$ and $(1+\varepsilon)\cdot\ell$. Since there are less than $|Q^{\sf u}|$ ancestors for $q$, the bound on $|\calH_q|$ follows.
\end{proof}

We claim that, when $h \in \calH_q$, 
\begin{align*}
[H(q) = h] \text{ and } [\hat N(q) \in \Delta(q)]
\quad &\Rightarrow \quad 
[H(q) = h] \text{ and } [\max(N_{max}(q),\hat N(q)) \in \Delta(q)]
\\
&\Rightarrow \quad 
[H(q) = h] \text{ and } [N(q) \in \Delta(q)] 
\end{align*}  
The second implication follows from Lemma~\ref{lemma:rounding_of_good_is_still_good}. For the first implication, if $\hat{N}(q) \geq N_{max}(q)$ then $\hat N(q) \in \Delta(q)$ clearly implies $N(q) \in \Delta(q)$. Otherwise, if $\hat{N}(q) < N_{max}(q)$, then $N(q) = N_{max}(q)$. Thus, $\hat N(q) \in \Delta(q)$ directly implies the low bound $N(q) \geq (1-\varepsilon)|\quot{\treq}{L(q)}|$. For the upper bound, recall that $N_{max}(q) = \max(N(q') \mid q' \in \preds(q))$ and observe that, since $h \in \calH_q$, we have $N(q') \in \Delta(q')$ for all $q' \in \preds(q)$ and thus $N_{max}(q) \leq (1+\varepsilon)\max(|\quot{\treq}{L(q')}| \mid q' \in \preds(q))$. Since $|\quot{\treq}{L(q')}| \leq |\quot{\treq}{L(q)}|$ holds for all $q' \in \preds(q)$, we conclude that $N(q) = N_{max}(q) \leq (1+\varepsilon)|\quot{\treq}{L(q)}|$. This ends the proof of the implication. We can now write
\begin{align*}
P \leq \sum_{q \in \calQ^{> 0}} \sum_{h \in \calH_q} \Pr[H(q) = h \text{ and } \hat N(q) \not\in \Delta(q)] 
\end{align*}
We have
$$
\hat N(q) = N_{max}(q)\cdot \mom(\beta,|\hat S^1(q)|,\dots,|\hat S^\nsnt(q)|) = N_{max}(q)\cdot\median(M_1,\dots,M_{\nt})
$$ 
with $M_j = \frac{1}{\ns}(|\hat{S}^{\ns j + 1}(q)| + \dots + |\hat{S}^{\ns(j + 1)}(q)|)$. We now move to the $\mathfrak{S}$-process. Let $\vM{h}{j}{q} = \frac{1}{\ns}(|\vZ{\ns j + 1}{h}{q}| + \dots + |\vZ{\ns(j + 1)}{h}{q}|)$. Using Lemma~\ref{lemma:coupling} we obtain
\begin{align*}
\Pr[H(q) = h \text{ and } \hat N(q) \not\in \Delta(q)] \leq 
\Pr[m_{h}(q)\cdot \median(\vM{h}{1}{q},\dots,\vM{h}{\nt}{q}) \not\in \Delta(q)]
\end{align*}
Now we can work only in the $\mathfrak{S}$-process and use that the random sets $\{|\vZ{r}{h}{q}|\}_{r \in [\nsnt]}$ are i.i.d. to invoke Lemma~\ref{lem:medians_of_means}. Let us assume the following for now:

\begin{lemma}\label{lemma:variance} Let $\cwdth$ be the width of the concurrent alphabet $(\Sigma,\indep)$. If $h \in \calH_q$ then
$\Ex[|\vZ{r}{h}{q}|] = m_h(q)^{-1}|\quot{\treq}{L(q)}|$ and 
$
\Var[|\vZ{r}{h}{q}|] \leq  2\cdot\cwdth\cdot n^{\cwdth+1} \cdot (1+\varepsilon)\cdot \Ex[|\vZ{r}{h}{q}|]^2
$.
\end{lemma}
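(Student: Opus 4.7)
The expectation follows immediately by writing $|\vZ{r}{h}{q}|$ as the sum of indicators $\sum_{t \in \quot{\treq}{L(q)}} \mathbbm{1}[t \in \vZ{r}{h}{q}]$ and applying Lemma~\ref{lemma:proba_first_order} termwise: each term contributes probability $m_h(q)^{-1}$, so by linearity $\Ex[|\vZ{r}{h}{q}|] = m_h(q)^{-1}|\quot{\treq}{L(q)}|$.

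For the variance, the plan is to expand $\Ex[|\vZ{r}{h}{q}|^2] = \sum_{t \in \quot{\treq}{L(q)}}\Pr[t \in \vZ{r}{h}{q}] + \sum_{t \neq t'}\Pr[t,\, t' \in \vZ{r}{h}{q}]$. The diagonal sum equals $\Ex[|\vZ{r}{h}{q}|]$. For the off-diagonal, I would partition the pairs by divergence state: for fixed $t$, every $t' \neq t$ in $\quot{\treq}{L(q)}$ belongs to a unique $\depclass(t,l,q)$ with $l$ ranging over the at-most-$n$ non-terminal states of $\run(t,q)$, and the corresponding divergence state $q^*_l$ is an ancestor of $q$ with $h(q^*_l) \leq (1+\varepsilon)|\quot{\treq}{L(q^*_l)}|$ because $h \in \calH_q$. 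Lemma~\ref{lemma:proba_second_order} gives $\Pr[t,\, t' \in \vZ{r}{h}{q}] \leq h(q^*_l)/m_h(q)^2$ and Lemma~\ref{lemma:combinatorial_lemma} gives $|\depclass(t,l,q)| \leq \cwdth n^\cwdth |\quot{\treq}{L(q)}|/|\quot{\treq}{L(q^*_l)}|$. The $|\quot{\treq}{L(q^*_l)}|$ factors cancel, so each $l$ contributes at most $\cwdth n^\cwdth (1+\varepsilon)|\quot{\treq}{L(q)}|/m_h(q)^2$ to the inner sum. Summing over the $\leq n$ choices of $l$ and over the $|\quot{\treq}{L(q)}|$ choices of $t$ yields $\sum_{t \neq t'}\Pr[t,\, t' \in \vZ{r}{h}{q}] \leq \cwdth n^{\cwdth+1}(1+\varepsilon)\,\Ex[|\vZ{r}{h}{q}|]^2$.

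It remains to absorb the diagonal term $\Ex[|\vZ{r}{h}{q}|]$ into the quadratic bound. I would show $\Ex[|\vZ{r}{h}{q}|] \geq (1+\varepsilon)^{-1}$ by invoking right-cancellation in the trace monoid: for any predecessor-transition $(q_i,a,q)$, the map $\eqcl{w} \mapsto \eqcl{w \cdot a}$ is well-defined and injective from $\quot{\treq}{L(q_i)}$ into $\quot{\treq}{L(q)}$, so $|\quot{\treq}{L(q_i)}| \leq |\quot{\treq}{L(q)}|$; combined with $h(q_i) \leq (1+\varepsilon)|\quot{\treq}{L(q_i)}|$ (which holds because $h \in \calH_q$), this gives $m_h(q) \leq (1+\varepsilon)|\quot{\treq}{L(q)}|$ and hence $\Ex[|\vZ{r}{h}{q}|] \geq (1+\varepsilon)^{-1}$. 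This in turn yields $\Ex[|\vZ{r}{h}{q}|] \leq (1+\varepsilon)\Ex[|\vZ{r}{h}{q}|]^2 \leq \cwdth n^{\cwdth+1}(1+\varepsilon)\Ex[|\vZ{r}{h}{q}|]^2$ (using $\cwdth n^{\cwdth+1} \geq 1$). Putting both bounds together gives $\Ex[|\vZ{r}{h}{q}|^2] \leq 2\cwdth n^{\cwdth+1}(1+\varepsilon)\Ex[|\vZ{r}{h}{q}|]^2$, and the claim follows from $\Var[|\vZ{r}{h}{q}|] \leq \Ex[|\vZ{r}{h}{q}|^2]$.

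The main obstacle I expect is arranging the double sum so that the $|\quot{\treq}{L(q^*_l)}|$ factors from Lemmas~\ref{lemma:proba_second_order} and~\ref{lemma:combinatorial_lemma} cancel cleanly, turning what would otherwise look like a state-dependent quantity into a uniform contribution of $\cwdth n^\cwdth(1+\varepsilon)|\quot{\treq}{L(q)}|/m_h(q)^2$ per divergence layer; once this cancellation is visible, the remaining steps are routine, with the cancellation-based lower bound on $\Ex[|\vZ{r}{h}{q}|]$ as the small but essential ingredient that subsumes the diagonal term.
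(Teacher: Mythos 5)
Your proposal is correct and follows essentially the same route as the paper's proof: compute the expectation from Lemma~\ref{lemma:proba_first_order}, bound the off-diagonal part of $\Ex[|\vZ{r}{h}{q}|^2]$ by partitioning into the classes $\depclass(t,l,q)$ and combining Lemmas~\ref{lemma:proba_second_order} and~\ref{lemma:combinatorial_lemma} so the $|\quot{\treq}{L(q^*)}|$ factors cancel, and absorb the diagonal term via $m_h(q) \leq (1+\varepsilon)|\quot{\treq}{L(q)}|$. The only (welcome) addition is that you justify $|\quot{\treq}{L(q_i)}| \leq |\quot{\treq}{L(q)}|$ explicitly through right-cancellativity of the trace monoid, a fact the paper uses without comment.
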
 
 
\noindent Let $\mathfrak{M} = \median(\vM{h}{1}{q},\dots,\vM{h}{\nt}{q})$ and $\mu = m_h(q)^{-1}|\quot{\treq}{L(q)}|$ then 
$$
\Pr[m_{h}(q)\cdot \median(\vM{h}{1}{q},\dots,\vM{h}{\nt}{q}) \not\in \Delta(q)] = \Pr[|\mathfrak{M} - \mu| \geq \varepsilon\mu]
$$
We combine Lemma~\ref{lem:medians_of_means} with the variance bound from Lemma~\ref{lemma:variance}.
$$
\Pr[|\mathfrak{M} - \mu| \geq \varepsilon\mu] \leq \exp\left(-\nt\left(1 - \frac{4\cdot\cwdth\cdot n^{\cwdth+1} \cdot (1+\varepsilon)\cdot \mu^2}{\varepsilon^2 \mu^2 \ns}\right)\right) = \exp\left(-\nt\left(1 - \frac{4\cdot\cwdth\cdot n^{\cwdth+1} \cdot (1+\varepsilon)}{\varepsilon^2 \ns}\right)\right)
$$
Plugging in the values for $\ns$ and $\nt$, we find that $\Pr[|\mathfrak{M} - \mu| \geq \varepsilon\mu] \leq 1/(16\cdot|\calQ^{\mathsf{u}}|\cdot 2^{(n+2)\cdot|\calQ^{\mathsf{u}}|})$. Hence
$$
P \leq \sum_{q \in \calQ^{> 0}} \sum_{h \in \calH_q} \frac{1}{16\cdot|\calQ^{\mathsf{u}}|\cdot 2^{(n+2)\cdot|\calQ^{\mathsf{u}}|}} \leq \frac{|\calH_q|}{16\cdot 2^{(n+2)\cdot|\calQ^{\mathsf{u}}|}} \leq \frac{1}{16}
$$

\noindent Thus, the proof of Lemma~\ref{lemma:proba_p(q)} will be complete with the proof of Lemma~\ref{lemma:variance}.

\begin{proof}[Proof of Lemma~\ref{lemma:variance}]
$\Ex[|\vZ{r}{h}{q}|] = m_h(q)^{-1}|\quot{\treq}{L(q)}|$ follows directly from Lemma~\ref{lemma:proba_first_order}.
\begin{align*}
\Var[|\vZ{r}{h}{q}|] \leq \Ex[|\vZ{r}{h}{q}|^2] = \Ex[|\vZ{r}{h}{q}|] + 
\sum_{t \in \quot{\treq}{L(q)}}\sum_{\substack{t' \neq t \\ t' \in \quot{\treq}{L(q)} }} \Pr[t \in \vZ{r}{h}{q} \text{ and } t' \in \vZ{r}{h}{q}]
\end{align*}
Fix $t \in \quot{\treq}{L(q)}$ and let $q^t_0,\dots,q^t_k = q$ be the sequence of nodes in $\run(t,q)$. We partition $(\quot{\treq}{L(q)})\setminus \{t\}$ into $\depclass(t,1,q) \cup \dots \cup \depclass(t,k+1,q)$. If $t' \in \depclass(t,i,q)$ then, by Lemma~\ref{lemma:proba_second_order}, we have that
$$
 \Pr[t \in \vZ{r}{h}{q} \text{ and } t' \in \vZ{r}{h}{q}] \leq \frac{h(q^t_i)}{m_h(q)^2}
$$
We use this in the variance computation with the bound on $|\depclass(t,i,q)|$ from Lemma~\ref{lemma:combinatorial_lemma}.
\begin{align*}
\sum_{t}\sum_{t' \neq t} \Pr[t \in \vZ{r}{h}{q} \text{ and } t' \in \vZ{r}{h}{q}] 
= \sum_{t}\sum_{i = 1}^{k+1} \sum_{t' \in \depclass(t,i,q)}  \Pr[t \in \vZ{r}{h}{q} \text{ and } t' \in \vZ{r}{h}{q}] 
\\
\leq \sum_{t}\sum_{i = 1}^{k+1} \sum_{t' \in \depclass(t,i,q)}  \frac{h(q^t_i)}{m_h(q)^2} = \sum_{t}\sum_{i = 1}^{k+1} |\depclass(t,i,q)|  \frac{h(q^t_i)}{m_h(q)^2} \leq \sum_{t}\sum_{i = 1}^{k+1} \cwdth\cdot n^\cwdth \frac{|\quot{\treq}{L(q)}|}{|\quot{\treq}{L(q_i)}|} \frac{h(q^t_i)} {m_h(q)^2}
\end{align*}
Since $h \in \calH_q$, $h(q^t_i)$ is in $\Delta(q^t_i)$, we have $h(q^t_i)\cdot |\quot{\treq}{L(q_i)}|^{-1} \geq 1 - \varepsilon$, so this sum is bounded by
\begin{align*}
\sum_{t}\cwdth\cdot n^{\cwdth+1} \cdot (1+\varepsilon)\cdot \frac{|\quot{\treq}{L(q)}|}{m_h(q)^2}  
= \cwdth\cdot n^{\cwdth+1} \cdot (1+\varepsilon)\cdot \frac{|\quot{\treq}{L(q)}|^2}{m_h(q)^2} = \cwdth\cdot n^{\cwdth+1} \cdot (1+\varepsilon)\cdot \Ex[|\vZ{r}{h}{q}|]^2 
\end{align*}
So we have shown
$
\Var[|\vZ{r}{h}{q}|] \leq \Ex[|\vZ{r}{h}{q}|] +   \cwdth\cdot n^{\cwdth+1} \cdot (1+\varepsilon)\cdot   \Ex[|\vZ{r}{h}{q}|]^2
$.  
To end the proof, we show that $\Ex[|\vZ{r}{h}{q}|] \leq \cwdth\cdot n^{\cwdth+1} \cdot (1+\varepsilon)\cdot   \Ex[|\vZ{r}{h}{q}|]^2$. First, $\cwdth \cdot  n^{\cwdth+1} \geq 1$ comes from $\cwdth \geq 1$. Second, because $h \in \calH_q$, we have $m_h(q) \leq (1+\varepsilon)\max(|\quot{\treq}{L(q')}| \mid q' \in \preds(q)) \leq (1 + \varepsilon)|\quot{\treq}{L(q)}|$, so $\Ex[|\vZ{r}{h}{q}|] = m_h(q)^{-1}|\quot{\treq}{L(q)}| \geq \frac{1}{1+\varepsilon}$ and thus $(1+\varepsilon)\cdot   \Ex[|\vZ{r}{h}{q}|]^2 \geq \Ex[|\vZ{r}{h}{q}|]$. Hence
$$
\Var[|\vZ{r}{h}{q}|] \leq 2\cdot\cwdth\cdot n^{\cwdth+1} \cdot (1+\varepsilon)\cdot   \Ex[|\vZ{r}{h}{q}|]^2
$$ 
\end{proof}

\subsection{Few Samples are Needed -- Proof of Lemma~\ref{lemma:proba_S(q)}}\label{subsection:sample_sets_never_get_large}

We show that all $|S^r(q)|$ stay below $\threshold$ with probability at least $\frac{7}{8}$ when running $\TraceMCCore^*$.
\begin{align*}
&\Pr\left[\bigcup_{r \in [\nsnt]}\bigcup_{q \in \calQ^\mathsf{u}} |S^r(q)| \geq \threshold \right]
\\
&\leq \Pr\left[\bigcup_{q \in \calQ^\mathsf{u}} N(q) \in \Delta(q)\right] + \Pr\left[\bigcap_{q \in \calQ^\mathsf{u}} N(q) \not\in \Delta(q) \text{ and } \bigcup_{r \in [\nsnt]}\bigcup_{q \in \calQ^\mathsf{u}} |S^r(q)| \geq \threshold \right]
\\
&\leq \frac{1}{16} + \sum_{r \in [\nsnt]} \sum_{q \in \calQ^\mathsf{u}} \Pr\left[N(q) \in \Delta(q) \text{ and } |S^r(q)| \geq \threshold \right] \tag{Lemma~\ref{lemma:proba_p(q)}}
\end{align*}
Next,
\begin{align*}
\Pr&\left[N(q) \in \Delta(q) \text{ and } |S^r(q)| \geq \threshold \right] 
\leq \Pr\left[\sum_{t \in \quot{\treq}{L(q)}} \mathbbm{1}(t \in S^r(q)) \cdot \mathbbm{1}(N(q) \in \Delta(q)) \geq \theta \right]
\\ &\leq \frac{1}{\theta}\cdot \Ex\left[\sum_{t \in \quot{\treq}{L(q)}} \mathbbm{1}(t \in S^r(q)) \cdot \mathbbm{1}(N(q) \in \Delta(q))\right] \qquad\text{(Markov's inequality)}
\\
&\leq \frac{1}{\theta}\sum_{t \in \quot{\treq}{L(q)}} \Ex\left[\mathbbm{1}(t \in S^r(q)) \cdot \mathbbm{1}(N(q) \in \Delta(q))\right] = \frac{1}{\theta}\sum_{t \in \quot{\treq}{L(q)}} \Pr\left[t \in S^r(q)\text{ and } N(q) \in \Delta(q)\right]
\end{align*}
We show the following in appendix.
\begin{restatable}{lemma}{whiteMagicLemma}\label{lemma:white_magic_lemma}
For every $t \in \quot{\treq}{L(q)}$ and $r \in [\nsnt]$ we have $\Ex\left[\mathbbm{1}(t \in S^r(q))\cdot N(q)\right] = 1$.
\end{restatable}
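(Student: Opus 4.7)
The plan is induction on the level $i$ for which $q \in \calQ^i$. The base case $q = q_I$ is immediate: $\quot{\treq}{L(q_I)} = \{\eqcl{\lambda}\}$, the algorithm deterministically sets $N(q_I) = 1$, and $\lambda \in S^r(q_I)$, so the expectation is $1$. For the inductive step I would peel $t$ along its canonical run (Definition~\ref{definition:canonical_run}): there is a last transition $(q', s, q)$ with $\can(t, q) = \can(t', q')\cdot s$ and $t' = \eqcl{\can(t',q')} \in \quot{\treq}{L(q')}$, and the induction hypothesis applies at $q'$.

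The key is to factor the randomness of $\estimateAndSample(q)$ into two conditioning steps, each peeling off one of the \reduce operations. First, let $\calF$ denote the $\sigma$-algebra generated by every random choice up to and including $\hat S^r(q)$, $N_{max}(q)$, and $N(q)$. Since $S^r(q) = \reduce(\hat S^r(q), N_{max}(q)/N(q))$ uses fresh independent Bernoulli trials, and since Lemma~\ref{lemma:samples_are_canonical_words} forces $\hat S^r(q) \cap t \subseteq \{\can(t,q)\}$, one gets
\[
\Ex\bigl[\mathbbm{1}(t \in S^r(q))\cdot N(q) \mid \calF\bigr] \;=\; \mathbbm{1}(\can(t,q) \in \hat S^r(q))\cdot N_{max}(q).
\]
Second, let $\calH$ denote the $\sigma$-algebra generated by the predecessor data $\{S^r(q_i), N(q_i)\}_{q_i \in \preds(q)}$, which already fixes $N_{max}(q)$. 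Using Lemma~\ref{lemma:samples_are_canonical_runs} together with the defining property of the canonical transition, the event $\{\can(t,q) \in \hat S^r(q)\}$ coincides with the event $\{\can(t', q') \in \bar S^r(q', q)\}$, where $\bar S^r(q', q) = \reduce(S^r(q'), N(q')/N_{max}(q))$ is produced by an independent Bernoulli trial. Hence
\[
\Ex\bigl[\mathbbm{1}(\can(t,q) \in \hat S^r(q))\cdot N_{max}(q) \mid \calH\bigr] \;=\; \mathbbm{1}(t' \in S^r(q'))\cdot N(q'),
\]
whose outer expectation is $1$ by the inductive hypothesis. Chaining the two displays closes the induction.

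The main obstacle is the combinatorial identity $\{\can(t,q) \in \hat S^r(q)\} = \{\can(t', q') \in \bar S^r(q', q)\}$ underlying the second step. The forward direction is Lemma~\ref{lemma:samples_are_canonical_runs}: every element of a sample set is built along its canonical run, so the only way to insert a word of $t$ into $\hat S^r(q)$ is via $(q', s, q)$ with the predecessor sample $\can(t', q')$. For the backward direction, one must verify that when \union processes the canonical transition, the oracle call $\memp{\treq}(\calA', q, \can(t,q))$ returns \emph{false}: by the defining property of the canonical transition, no $\prec$-earlier transition entering $q$ admits any predecessor word whose extension lies in $t$, so no such transition (the only ones remaining in $\calA'$) can generate a word $\treq$-equivalent to $\can(t,q)$. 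Once this equivalence is in place, the remainder is mechanical bookkeeping using the independence of the fresh randomness in each \reduce call.
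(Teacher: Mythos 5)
Your proposal is correct and follows essentially the same route as the paper's proof: induction along the canonical run, peeling off the two $\reduce$ steps so that $\Ex[\mathbbm{1}(t \in S^r(q))\cdot N(q)]$ collapses first to $\Ex[\mathbbm{1}(\can(t,q) \in \hat S^r(q))\cdot N_{max}(q)]$ and then to $\Ex[\mathbbm{1}(\can(t',q') \in S^r(q'))\cdot N(q')]$, which is $1$ by induction. The paper phrases the same cancellations via explicit sums over the possible values of $N(q)$, $N_{max}(q)$ and $N(q_i)$ rather than tower-property conditioning, and you additionally spell out the ``if and only if'' at the canonical transition (including why the $\memp{\treq}$ check returns \emph{false}) that the paper uses implicitly, so there is no gap.
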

We use Lemma~\ref{lemma:white_magic_lemma} plus the fact that the random quantity $\mathbbm{1}(N(q) \in \Delta(q)) \cdot N(q)^{-1}$ only takes value less than $((1-\varepsilon)\cdot |\quot{\treq}{L(q)}|)^{-1}$.
\begin{align*}
\Pr\left[t \in S^r(q)\text{ and } N(q) \in \Delta(q)\right] = \Ex\left[\mathbbm{1}(t \in S^r(q)) \cdot \mathbbm{1}(N(q) \in \Delta(q))\right]
\\
=  \Ex\left[\mathbbm{1}(t \in S^r(q)) \cdot \mathbbm{1}(N(q) \in \Delta(q)) \cdot N(q) \cdot N(q)^{-1}\right] 
\\
\leq  \Ex\left[\mathbbm{1}(t\in S^r(q)) \cdot N(q)\right]  \cdot \frac{1}{(1-\varepsilon)\cdot |\quot{\treq}{L(q)}|} 
\\
\leq  \frac{1}{(1-\varepsilon)\cdot |\quot{\treq}{L(q)}|}
\end{align*}
Using this bound and the value of $\theta$, we finish the proof of Lemma~\ref{lemma:proba_S(q)}.
\begin{align*}
\Pr\left[\bigcup_{r \in [\alpha]} \bigcup_{q \in \calQ^{\mathsf{u}}} |S^r(q)| \geq \theta\right] &\leq \frac{1}{16} + \frac{1}{\theta}\sum_{r \in [\alpha]}\sum_{q \in Q} \sum_{t \in \quot{\treq}{L(q)}} \frac{1}{(1-\varepsilon) \cdot |\quot{\treq}{L(q)}|}
\leq \frac{1}{16} + \frac{\alpha\cdot|\calQ^{\mathsf{u}}|}{(1-\varepsilon)\cdot \theta} \leq  \frac{1}{8}
\end{align*}


\section{Conclusion} 
\seclabel{conclusion}

We studied the problem of counting the number of trace equivalence classes
that intersect a  regular language, from an algorithmic and complexity-theoretic lens.
We show that the problem is \#P-complete even for DFAs, but admits a
fully polynomial randomized approximation scheme (FPRAS) even for the case of NFAs,
generalizing recent results on \#NFA~\cite{MCMNFA2024,MeeldCPODS2025,ACJR19,ACJR21}.
To the best of our knowledge, our work is the first to present a principled theoretical 
framework for designing counting and sampling modulo equivalences 
by decomposing the core problem into a few core sub-problems such 
as equivalence checking, membership modulo equivalence, 
and generation of normal forms. 
We believe these core insights will also generalize to other notions of equivalences and symmetries on combinatorial structures.
Building on our FPRAS, we also presented an almost-uniform sampling algorithm.

Several promising research directions remain. First, improving 
the runtime complexity of our FPRAS is an important challenge. 
Second, our current sampling approach requires $O(n)$ calls to 
the counting routine. Developing a more efficient sampler that 
requires fewer counter invocations while maintaining the 
distribution guarantees would be valuable. 
Finally, extending our techniques to more expressive language 
classes and notions of equivalences~\cite{FarzanMathur2024} would enable 
analysis of richer space of program behaviors. 

\bibliographystyle{ACM-Reference-Format}
\bibliography{references}

\clearpage

\appendix


\section{Proofs from \secref{prelim}}

The median-of-means estimator for random variables $X_1, \ldots, X_\alpha$
with identical mean is shown in \algoref{mom}.

\begin{algorithm}[H]
	\caption{	\algolabel{mom}
$\mom(\beta,X_1,\dots,X_\alpha)$}
	\begin{algorithmic}[1]
		\State $\gamma \gets \alpha/\beta$
		\State \textbf{for} $1 \leq i \leq \gamma$ \textbf{do} $Y_i \gets \sum_{j = 1}^\beta X_{j + (i-1)\cdot \gamma}$
		\State \Return $\median(Y_1,\dots,Y_\gamma)$
	\end{algorithmic}
\end{algorithm}

\MoM*
\begin{proof}
For all $i \in [\beta]$
$$
\Ex[Y_i] = \mu \quad \text{ and } \quad \Var[Y_i] = \frac{\sigma^2}{\beta}
$$
By Chebyshev inegatily:
$$
\Pr\left[|Y_i - \mu| \geq \epsilon\right] \leq \frac{\sigma^2}{\epsilon^{2}\beta}
$$
Let $W_i$ be the random variable taking value $1$ if $|Y_i - \mu| \geq \epsilon$ and $0$ otherwise. Let $W = \sum_{i \in [\gamma]} W_i$. The inequality above implies that $\Ex[W_i] \leq \sigma^2\epsilon^{-2}\beta^{-1}$ and $\Ex[W] = \gamma\cdot\Ex[W _1] \leq \gamma \sigma^2\epsilon^{-2}\beta^{-1}$. Since the $Y_i$'s are independent, so are the $W_i$'s, thus Hoeffding inequality gives 
\begin{align*}
\Pr\left[|Z - \mu| \geq \epsilon\right] \leq \Pr\left[W \geq \frac{\gamma}{2}\right] = \Pr\left[W - \Ex[W] \geq \frac{\gamma}{2} - \Ex[W]\right]
\\
\leq 
\Pr\left[W - \Ex[W] \geq \frac{\gamma}{2} - \frac{\gamma\sigma^2}{\epsilon^{2}\beta}\right]
\\
\leq 
\exp\left(-\gamma\left(1 - \frac{2\sigma^2}{\epsilon^2\beta}\right)\right)
\end{align*}
\end{proof}
\section{Proofs from \secref{sampling}}

\insertA*
\begin{proof}
$w$ may contain letters $a$, so we differentiate the new letter by calling it $a'$. Let $\pi$ be the permutation of $[l]$ and $j$ be the integer such that $\nf{w \cdot a'} = \ell_{\pi(1)}\dots \ell_{\pi(j)}\cdot a' \cdot \ell_{\pi(j+1)} \dots \ell_{\pi(l)} := w'_1 a w'_2$. Let $w_1 = \ell_{1}\dots \ell_{j}$ and $w_2 = \ell_{j+1} \dots \ell_l$. $\nf{w a'}$ is obtained from $\nf{w}\cdot a$ by a sequence of contiguous letter transpositions given by $\indep$. Applying the same transpositions, but without those transpositions acting on $a'$, turns $\nf{w} = w_1w_2$ into $w'_1 w'_2$. We then have $w_1 \preceq_{\sf lex} w'_1$ by $\lexord$ minimality of $\nf{w}$. On the other hand, since $\nf{w  a'} \lexord \nf{w} \cdot a' = w_1w_2\cdot a'$, we find that $w'_1 \preceq_{\sf lex} w_1$ by $\lexord$ minimality of $\nf{w a}$. Thus $w_1 = w'_1$ holds. It follows that the sequence of transpositions actually turns $w_1w_2$ into $w_1w'_2$ and thus $w_2 \treq  w'_2$. So, by $\lexord$ minimality of $\nf{w}$ we have $w_1w_2 \preceq_{\sf lex} w_1w'_2$ and therefore $w_2 \preceq_{\sf lex} w'_2$. But $w_2 \treq  w'_2$ also implies that $\nf{w a'} \treq w'_1  a'  w_2$ so we use that $\nf{w a'}$ is $\lexord$ minimal to derive $w'_2 \preceq_{\sf lex} w_2$ and therefore $w_2 = w'_2$.
\end{proof}

\uPrefixOne*
\begin{proof}
Let $\nf{w} = \ell_1\dots\ell_l$ and $0 \leq k \leq \ell$ such that $u' = \ell_1\dots \ell_k$. Let $i$ be the largest integer such that $\nf{w \cdot a} = \ell_1 \dots \ell_{i-1} a \ell_i\dots \ell_l$. Let $k+1 \leq j \leq l$ be the largest integer such that $\ell_j = c$. We have $\nf{w a} = \nf{\nf{w} \cdot a}$. Since $(a,c) \in \dep$, we have $i > j$ and thus $i > k+1$. Therefore, $u'$ and $u' \cdot \ell_{k+1}$ are prefixes of $\nf{w \cdot a}$. Since both are also prefixes of $\nf{w}$ and since $u'$ is the $u$-prefix of $\nf{w}$, we deduce that $u'$ is the $u$-prefix of $\nf{w a}$. We have that $v_1 = \ell_{k+1}\dots\ell_{j}$ and $v_2 = \ell_{j+1}\dots\ell_l$.
\end{proof}

\uPrefixTwo*
\begin{proof}
The lemma is immediate when $v = \lambda$, so we assume otherwise. Since $a$ is independent of every letter in $v$, we have $wa \treq u'va \treq u'av \treq \nf{u'a}\cdot v$. If $\nf{u' a}\neq u'a$ then $\nf{u' a} \lexord u'a$ and thus $\nf{u' a}\cdot v \lexord u'av$. Let $u' = \ell_1\dots\ell_k$. Using Lemma~\ref{lemma:insert_a}, consider the largest $i \leq k$ such that $\nf{u'a} = \ell_1\dots\ell_{i-1}a\ell_{i}\dots\ell_k$. 
$\nf{u'a} \lexord u'a$ implies $\ell_1\dots\ell_{i-1}a \lexord \ell_1\dots\ell_{i-1}\ell_i$. Let $u'_1 = \ell_1\dots\ell_{i-1}$ (potentially equals $\lambda$) and $u'_2 = \ell_i\dots\ell_k$ (different from $\lambda$).

By Lemma~\ref{lemma:insert_a} we either have $\nf{w a} = u' v_1 a v_2$ with $v = v_1v_2$ or $\nf{w a} = u'_3 a u'_4 v$ with $u' = u'_3 u'_4$. But every $u'v_1av_2$ starts with $\ell_1\dots\ell_{i-1}\ell_i$ so already know that $\nf{u'a}\cdot v \lexord u' v_1a v_2$. Therefore $\nf{w a}$ is of the form $u'_3 a u'_4 v$. We must then have $u'_3 au'_4 v\preceq_{\sf lex} u'_1 a u'_2 v$ and $u'_3 au'_4 v\treq u'_1 a u'_2 v$, and therefore $u'_3 a u'_4 \treq u'_1au'_2$. But $u'_1au'_2 = \nf{u'a}$ is minimal for $\lexord$, so $u'_3 a u'_4 = u'_1au'_2$. Therefore $\nf{wa} = \nf{u'a}\cdot v = u'_1 a u'_2 v$.

Let $b$ be the first letter of $v$. To finish the proof, it is sufficient to show that $u'_1au'_2b$ is not an DAG-prefix of $G(\trpo{u})$. We assume that $G(\trpo{u'_1au'_2}) = G(\trpo{u'a})$ is a DAG prefix   of $G(\trpo{u})$ otherwise we are done already. Since $G(\trpo{u'})$ is a DAG-prefix of $G(\trpo{u})$, there is $(a,i)$ in the border of $G(\trpo{u'})$. Now we show that $b$ is not a letter in the border of $G(\trpo{u'_1au'_2}) = G(\trpo{u'a})$. Since $u'b$ is a prefix of $w$ but the $u$-prefix of $w$ is $u'$, $b$ cannot be a letter of the border of $G(\trpo{u'})$. So if $b$ is a letter of the border of $G(\trpo{u'a})$ then there is $(b,j)$ adjacent to $(a,i)$ in $G(\trpo{u})$. But then $(a,b) \in \dep$, which contradicts the lemma's assumptions.
\end{proof}

\uPrefixThree*
\begin{proof}
The lemma is clear when $v = \lambda$, so we assume otherwise. We have that $\nf{v} = v$ for otherwise $\nf{w} \neq u'v$. If $b = a$ then $u'a = u'b$; since we know that $G(\trpo{u'b})$ is not a DAG-prefix of $G(\trpo{u})$ we indeed have that the $u$-prefix of $\nf{w a}$ is $u'$. For the rest of the proof we suppose $b \neq a$. Let $v = \ell_1\dots\ell_k$, with $b = \ell_1$. By Lemma~\ref{lemma:insert_a}, $\nf{av} = \nf{va} = \ell_1\dots \ell_{i-1}a\ell_i \dots \ell_k$ for some $i$. Since $(a,b) \in \indep$ for all letters $b$ in $v$, we have $i > 1$ if and only if $b \lexord a$. We consider the cases $b \lexord a$ and $a \lexord b$ separately. First we note that $\nf{wa}$ is either $u'\nf{av}$ or $\nf{u'a}v  = u'av$.

If $b \lexord a$ then $u'\nf{av} \lexord u'av$ and both $u'$ and $u'b$ are prefixes of $\nf{wa}$. We know that $G(\trpo{u'b})$ is not a DAG-prefix of $G(\trpo{u})$, so the $u$-prefix of $\nf{w a}$ is indeed $u'$ in that case. We then have $v_1 = \ell_1\dots\ell_{i-1}$ and $v_2= \ell_i\dots\ell_k$.

If $a \lexord b$ then $u'\nf{av} = u'av$, so $\nf{wa} = u'av$. It then is enough to show that, in this case, $G(\trpo{u'ab})$ is not a DAG-prefix of $G(\trpo{u})$. This is clear when $G(\trpo{u'a})$ is not a DAG-prefix of $G(\trpo{u})$. So we assume otherwise and it now suffices to show that $b$ is not a letter in the border of $G(\trpo{u'a})$. Since $u'b$ is a prefix of $w$ but the $u$-prefix of $w$ is $u'$, $b$ cannot be a letter of the border of $G(\trpo{u'})$. So if $b$ is a letter of the border of $G(\trpo{u'a})$ then there is $(b,j)$ adjacent to $(a,i)$ in $G(\trpo{u})$. But then $(a,b) \in \dep$, which contradicts the lemma's assumptions. 
\end{proof}

\prefixValidatorSoundness*
\begin{proof}
States in $\calQ_u$ are of the form $(u',b,L)$. There are $|\alphabet|+1$ choices for $b$ and $2^{|\alphabet|}$ choices for $L$. Since $u$ is fixed, the number of choices for $u'$ is bounded by the number of DAG-prefixes for $G(\trpo{u})$, which is at most $\cwdth \cdot |u|^{\cwdth}$. The construction of $\calA_u$ is mostly given in Definition~\ref{def:prefix-aut}. We maintain a set $\calQ'$ of states to process. We start with $\calQ' = \{q_{I,u}\} = \{(\lambda,\lambda,\emptyset)\}$ and $\calQ_u = \emptyset$ and $\calT_u = \emptyset$. While there exists a state $q = (u',b,L) \in \calQ'$, we generate for every $a \in \Sigma$ the state $q'$ such that $(q,a,q')$ is the transition defined in Definition~\ref{def:prefix-aut}, then we add $(q,a,q')$ to $\calT_u$, next we add $q'$ to $\calQ'$ unless it is already in $\calQ_u$, finally we add $q$ to $\calQ_u$. This construction takes time $|\calT_u| \cdot |\calQ_u|$ times the maximum time required to construct $q'$. In the worst case, constructing $q'$ means going through $L$ to check whether $(a,c) \in \dep$ for any $c \in L$ (this takes time $O(|\Sigma|)$), then computing $\nf{u'a}$ and its $u$-prefix, which (this takes time $O(|G(\trpo{u})|)$), then doing a length-$O(|u|)$ word comparison and concatenating two $O(|\Sigma|)$-size sets. Thus, the runtime for constructing $\calA_u$ is at most $O( |\calT_u| \cdot |\calQ_u| \cdot (|\Sigma| + |G(\trpo{u})|)) = O( |\calT_u| \cdot |\calQ_u| \cdot (|\Sigma| + |u|^2))$.

Next, to show that $L(\calA_u) = \{w \mid \exists v \in \Sigma^*,\, \nf{w} = uv\} = \{w \mid \text{the } u\text{-prefix of } \nf{w}  \text{ is } u\}$, we prove that, for every $w \in \Sigma^*$, the state $(u',b,L)$ reached by $w$ is such that
\begin{itemize}
\item[$i.$] $u'$ is the $u$-prefix of $\nf{w}$
\item[$ii.$] $b$ is the first letter of the $u$-residual of $\nf{w}$ (or $\lambda$ if the $u$-residual is $\lambda$)
\item[$iii.$] $L$ is the set of letters of the $u$-residual of $\nf{w}$
\end{itemize}
The proof is by induction on the length of $w$. The claim holds for the base case $w = \lambda$ since the initial state is $(\lambda,\lambda,\emptyset)$. For the inductive case, suppose the claim holds for all $w' \in \Sigma^{k-1}$ and consider $w \in \Sigma^k$. Let $w = w'a$ where $w' \in \Sigma^{k-1}$. By induction, $w'$ reaches the state $p' = (u'',b',L')$ such that $i.$, $ii.$ and $iii.$ are satisfied. There is a unique transition $(p',a,p)$ for some $p := (u',b,L)$ in the automaton so $w$ reaches $p$. We next show that $i.$, $ii.$ and $iii.$ hold for $p$. For that we consider the four different cases for defining $(p',a,p)$. Let $v$ be the $u$-residual of $w'$
\begin{itemize}
\item[•] If there is $c \in L'$ such that $(a,c) \in \indep$ then by Lemma~\ref{lemma:uPrefixOne} there is $v_1 \neq \lambda$ and $v_2$ such that $v =  v_1v_2$ and $\nf{w'a}  = u''v_1av_2$ and the $u$-prefix of $\nf{w'a}$ is $u''$. The automaton defines $(u',b,L)$ as $u' = u''$, $b = b'$ and $L = L' \cup \{a\}$. $i.$ is satisfied by Lemma~\ref{lemma:uPrefixOne}; $ii.$ is satisfied since $b' = \mathit{firstLetter}(v_1v_2) = \mathit{firstLetter}(v_1av_2)$: $iii.$ is satisfied  since $L' = \mathit{letters}(v_1v_2)$.
\item[•] If there is no $c \in L'$ such that $(a,c) \in \indep$ and $\nf{u''a} = u''a$ and $b' \neq \lambda$ and $b' \lexord a$ then by Lemma~\ref{lemma:uPrefixThree} there is $v_1 \neq \lambda$ and $v_2$ such that $v =  v_1v_2$ and $\nf{w'a}  = u''v_1av_2$ and the $u$-prefix of $\nf{w'a}$ is $u''$. This is
exactly like the previous case and the automaton defines $u'$, $b$ and $L$ in the same way as before, so the inductive claim holds here too.
\item[•] If there is no $c \in L'$ such that $(a,c) \in \indep$ and $\nf{u'''a} = u''a$ and $b' = \lambda$ or $a \lexord b'$ then by Lemma~\ref{lemma:uPrefixThree} we have $\nf{w'a} = u''av$ and the $u$-prefix of $\nf{w'a}$ is the $u$-prefix of $u''a$, which is either $u''$ or $u''a$. Let $u^*$ be the $u$-prefix of $u''a$ and $v^*$ be its $u$-residual; so $\nf{u''a} = u^*v^*$. Note that $v^*$ is either $a$ or $\lambda$. The automaton defines $(u',b,L)$ as $u'' = u^*$, $b = \mathit{firstLetter}(v^*b')$ and $L = L' \cup \mathit{letters}(v^*)$. $i$ is satisfied by Lemma~\ref{lemma:uPrefixThree}; $ii.$ is satisfied since the $u$-residual of $\nf{w'a}$ is $v^*v$ and since $\mathit{firstLetter}(v^*b') = \mathit{firstLetter}(v^*v)$; $iii.$ is satisfied since $L' = \mathit{letters}(v)$.
\item[•] If there is no $c \in L'$ such that $(a,c) \in \indep$ and $\nf{u''a} \neq u''a$ then by Lemma~\ref{lemma:uPrefixTwo}, there is $u''_1 \in \Sigma^*$ and $u''_2 \in \Sigma^* \setminus \{\lambda\}$ such that $u'' = u''_1u''_2$ and $\nf{w'a} = u''_1au''_2v$. Let $u^*$ be the $u$-prefix of $u''_1au''_2$ and $v^*$ be its $u$-residual; so $\nf{w'a} = u^*v^*$. The automaton defines $(u',b,L)$ as $u'' = u^*$, $b = \mathit{firstLetter}(v^*b')$ and $L = L' \cup \mathit{letters}(v^*)$. $i$ is satisfied by Lemma~\ref{lemma:uPrefixTwo}; $ii.$ is satisfied since the $u$-residual of $\nf{w'a}$ is $v^*v$ and since $\mathit{firstLetter}(v^*b') = \mathit{firstLetter}(v^*v)$; $iii.$ is satisfied since $L' = \mathit{letters}(v)$.
\end{itemize}
Thus, the claim hold for every $w$. Since the automaton accepts exactly the word reaching a state $(u,b,L)$ for any $b$ and $L$, it accepts exactly the words whose $u$-prefix is $u$.
\end{proof}

\mainReduction*
\begin{proof}Verifying the claimed runtime and the number of calls to the FPRAS is straightforward.
Let $C(u) = |\{t \in \quot{\treq}{L_n(\calA)} \mid \exists v,\, \nf{t} = u\cdot v\}|$ and $C = C(\lambda) = |\quot{\treq}{L_{n}(\aut)}|$. 
To analyze the distribution of the output of \textsc{TraceSample}, we work with a slightly modified algorithm whose output distribution has a larger total variation distance to $U$. Let \textsc{TraceSampleCore}$^*$ be the same algorithm as \textsc{TraceSampleCore} except that $\tilde{C}(w)$ is computed before starting the sampling procedure for \emph{all} possible $w$. Instead of running lines~\ref{line:prefix-aut} and~\ref{line:sample}, \textsc{TraceSampleCore}$^*$ retrieves $\tilde{C}(w)$ whenever it needs it. This is a very expensive preprocessing time-wise and space-wise but, since each pre-computed $\tilde{C}(w)$ is either not used or used only once, the output distribution of \textsc{TraceSampleCore}$^*$ is identical to that of \textsc{TraceSampleCore}. We analyze \textsc{TraceSample}$^*$: the algorithm identical to \textsc{TraceSample} except that it calls \textsc{TraceSampleCore}$^*$.

Let $E$ be the event that, in every call to \textsc{TraceSampleCore}$^*$, every $\tilde{C}(w)$ is $(1 \pm \varepsilon)|\quot{\treq}{L_{n-|w|}(\calA_w)}|$. There are $m$ calls \textsc{TraceSampleCore}$^*$, and at most $2^{n + 1}$ values for $w$, and each value has probability at most $\delta$ not to be in the correct range.
\begin{align*}
\Pr[E] \geq 1 - 2^{n-1}\cdot m \cdot \delta'  \geq 1-\delta
\end{align*}
Let $out$ be the output of \textsc{TraceSample} and $out'_i$ be the output of the $i^\text{th}$ call to \textsc{TraceSampleCore}$^*$ in \textsc{TraceSample}. We first show that, when $E$ occurs, all calls to \textsc{TraceSampleCore}$^*$ go through the rejection test, i.e., they execute Line~\ref{line:rejection}. Let $out'$ be the result of some call to \textsc{TraceSampleCore}$^*$.
\begin{claim}\label{claim:sampler_claim_1}
$\Pr[\phi < 1/(2\tilde{C}) \land E] = 0$ and $\Pr[out' \neq \bot | E] \geq \frac{1}{4} \cdot $.
\end{claim}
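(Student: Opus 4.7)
The plan is to analyze $\textsc{TraceSampleCore}^*$ conditioned on the event $E$, under which every precomputed $\tilde{C}(w)$ lies in $(1\pm\varepsilon')C(w)$ (with $C(w) = |\{t \in \quot{\treq}{L_n(\calA)} : \exists v,\, \nf{t} = w \cdot v\}|$) and $\tilde{C}$ lies in $(1\pm\varepsilon')C$ with $C = |\quot{\treq}{L_n(\calA)}|$.

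For the first statement, the strategy is a telescoping bound on $\phi$. If the algorithm samples $u_n = a_1\cdots a_n$, the identity
\[
\prod_{i=1}^n \frac{C(u_i)}{C(u_{i-1})} \;=\; \frac{C(u_n)}{C(\lambda)} \;=\; \frac{1}{C}
\]
collapses since $C(u_n)=1$ and $C(\lambda)=C$. Using $\sum_{w \in \extn(u_{i-1})} C(w) = C(u_{i-1})$, each sampling factor satisfies, under $E$,
\[
\frac{\tilde{C}(u_i)}{\sum_w \tilde{C}(w)} \;\geq\; \frac{1-\varepsilon'}{1+\varepsilon'} \cdot \frac{C(u_i)}{C(u_{i-1})}.
\]
Multiplying over $i$ and combining with $\tilde{C} \geq (1-\varepsilon')C$ yields
\[
\phi\,\tilde{C} \;\geq\; \frac{(1-\varepsilon')^{n+1}}{(1+\varepsilon')^n},
\]
which, for $\varepsilon' = 1/(16n)$, is comfortably above $1/2$ by a routine estimate (e.g., $(1-x)^n \geq 1-nx$ and $(1+x)^n \leq e^{nx}$). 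Hence $\phi \geq 1/(2\tilde{C})$ whenever $E$ holds, establishing $\Pr[\phi < 1/(2\tilde{C}) \wedge E] = 0$.

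For the second statement, the rejection step on Line~\ref{line:rejection} is always reached under $E$ (by part 1) and implements the classical rejection-sampling trick. For any specific candidate $u_n$ that the algorithm could sample, letting $\phi_{u_n}$ denote the value of $\phi$ along the run yielding $u_n$, we have
\[
\Pr[out' = u_n \mid E] \;=\; \phi_{u_n} \cdot \frac{1}{2\,\phi_{u_n}\,\tilde{C}} \;=\; \frac{1}{2\tilde{C}},
\]
which is the same for every $u_n$; note that this uniformity is exactly what will later allow $\textsc{TraceSample}$ to produce an \emph{exact} uniform sample whenever it does not return $\bot$. Summing over the $C$ normal forms of traces in $\quot{\treq}{L_n(\calA)}$:
\[
\Pr[out' \neq \bot \mid E] \;=\; \frac{C}{2\tilde{C}} \;\geq\; \frac{1}{2(1+\varepsilon')} \;\geq\; \frac{1}{4}.
\]

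The only delicate step is verifying $(1-\varepsilon')^{n+1}/(1+\varepsilon')^n \geq 1/2$; this motivates the specific choice $\varepsilon' = 1/(16n)$ made in the algorithm. Everything else is a direct bookkeeping argument on how FPRAS errors propagate through the telescoped sampling probability, together with the elegant cancellation of $\phi_{u_n}$ in the rejection factor.
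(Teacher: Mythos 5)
Your proof is correct and follows essentially the same route as the paper: part 1 is the same telescoping error-propagation argument with $\varepsilon' = 1/(16n)$, and part 2 is the same rejection-sampling argument (the paper lower-bounds the acceptance probability $\frac{1}{2kl}$ via the upper bound on $l$, getting a bound of $1/3$ where your exact cancellation gives roughly $\frac{1}{2(1+\varepsilon')}$; both exceed $1/4$). The only cosmetic gap is that $\tilde{C}$ and $\phi$ are random variables, so identities such as $\Pr[out' = u_n \mid E] = \phi_{u_n}\cdot\frac{1}{2\phi_{u_n}\tilde{C}}$ should formally be stated conditionally on the realized values of $\tilde{C}$ and the $\phi_j$'s and then summed over those (finitely many) values, exactly as the paper does — a routine fix that does not affect the argument.
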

\begin{proof}
Consider a possible value $t$ for $u_n$. Let $t_n = t$ and $t_i$ be the length $i$ prefix of $t$ (with $t_0 = \lambda$). Let $\phi_j = \tilde{C}(t_j)/\sum_{w \in ext(t_{j-1})} \tilde{C}(w)$. At the end of the $j^\text{th}$ iteration of the while loop, we have $\phi = \phi_1\dots \phi_j$. For every $j$, let $l_j$ be a value such that $\Pr[(u_n = t) \land (\phi_1 = l_1) \land \dots \land (\phi_n = l_n) \land (\tilde{C} = k) \land E] \neq 0$. Since $\tilde{C}(w) \in (1 \pm \varepsilon')C(w)$ holds for every $w$ we have $\sum_{w \in ext(t_{j-1})} \tilde{C}(w) \in (1 \pm \varepsilon') \sum_{w \in ext(t_{j-1})} C(w) = (1 \pm \varepsilon') C(t_{j-1})$. Therefore
$$
\frac{1 - \varepsilon'}{1 + \varepsilon'}\cdot \frac{C(t_i)}{C(t_{i-1})} 
\leq 
l_i 
\leq 
\frac{1 + \varepsilon'}{1 - \varepsilon'}\cdot \frac{C(t_i)}{C(t_{i-1})}
\quad \text{ and }\quad
\left(\frac{1 - \varepsilon'}{1 + \varepsilon'}\right)^n\cdot \frac{1}{C(t_0)} 
\leq 
l_1\dots l_n 
\leq 
\left(\frac{1 + \varepsilon'}{1 - \varepsilon'}\right)^n\cdot \frac{1}{C(t_0)} 
$$ 
Since $\varepsilon' = 1/(16n)$ we have $\left(\frac{1 + \varepsilon'}{1 - \varepsilon'}\right)^n \leq 5/4$ and $\left(\frac{1 - \varepsilon'}{1 + \varepsilon'}\right)^n \geq 3/4$. Using $C(t_0) = C(\lambda) = C$ and $k \in (1 \pm \varepsilon')C$ we find
$$
\frac{3(1- \varepsilon')}{4k} 
\leq
\frac{3}{4C} 
\leq 
l_1\dots l_n 
\leq 
\frac{5}{4C} 
\leq 
\frac{5(1 + \varepsilon')}{4k}
$$ 
For every $n \geq 1$, $3(1 - \varepsilon')/4k$ is at least $1/(2k)$. This holds for every possible $t$, $l_1,\dots,l_n$ and $k$, so $E$ implies $\phi \geq 1/(2\tilde{C})$. It follows that, if $out' = \bot$, then this is because the sample is rejected at Line~\ref{line:rejection}. Thus
\begin{align*}
\Pr[out' \neq \bot \land E] = \sum_{k,l} \Pr[out' \neq \bot \mid \tilde{C} = k \land \phi = l \land E]\cdot\Pr[\tilde{C} = k \land \phi = l \land E]
\\
= \sum_{k,l} \frac{1}{2kl}\cdot\Pr[\tilde{C} = k \land \phi = l \land E]
\geq \sum_{k,l} \frac{2}{5(1+\varepsilon')}\cdot\Pr[\tilde{C} = k \land \phi = l \land E]
\geq \frac{2\Pr[E]}{5(1+\varepsilon')} \geq \frac{\Pr[E]}{3}
\end{align*}
\end{proof}

Let $\Pi(t)$ be the distribution of $out$. The variable following the uniform distribution $U(t)$ is independent of $E$ and $\neg E$. We have to show that $\sum_t |\Pi(t) - U(t)| \leq 2\delta$.
\begin{align*}
\sum_t |\Pi(t) - U(t)| = \sum_{t} \left|(\Pr[out = t \mid \neg E] - U(t))\cdot\Pr[\neg E] 
+ (\Pr[out = t \mid E] - U(t))\cdot\Pr[E]\strut\right|
\\
\leq \sum_{t}\left|\Pr[out = t \mid \neg E] - U(t)\strut\right|\Pr[\neg E] + \sum_{t} \left|\Pr[out = t \mid E] - U(t)\strut\right|\cdot\Pr[E]
\\
\leq \Pr[\neg E] + \sum_{t} \left|\Pr[out = t \mid E] - U(t)\strut\right|\cdot\Pr[E]
\\
\leq \delta + \sum_{t} \left|\Pr[out = t \mid E] - U(t)\strut\right|
\end{align*}
We now show that $\sum_{t} \left|\Pr[out = t \mid E] - U(t)\strut\right| \leq \delta$. To that end, we first isolate the probability that \textsc{TraceSample} returns $\bot$ assuming $E$. By Claim~\ref{claim:sampler_claim_1}, 
\begin{align*}
\Pr[out = \bot \mid E] = \prod_{i = 1}^m \Pr[out'_i = \bot \mid E] \leq \left(\frac{3}{4}\right)^m \leq \delta
\end{align*}
Thus
\begin{align*}
\sum_{t} \left|\Pr[out = t \mid E] - U(t)\strut\right| = \Pr[out = \bot \mid E] + \sum_{t \neq \bot} \left|\Pr[out = t \mid E] - U(t)\strut\right| 
\\
\leq \delta + \sum_{t \neq \bot} \left|\Pr[out = t \mid E] - U(t)\strut\right|
\end{align*}
It remains to show that the right-hand side sum is $0$. Let $out = out'_i$ be the event that the output of \textsc{TraceSample}$^*$ is the output of the $i^\text{th}$ call to \textsc{TraceSampleCore}$^*$.
\begin{align*}
\sum_{t \neq \bot} \left|\Pr[out = t \mid E] - U(t)\strut\right|
\leq \sum_{t \neq \bot} \left|\sum_{i = 1}^m \Pr[out=out'_i \mid E]\cdot(\Pr[out'_i = t \mid E \land (out=out'_i)] - U(t))\right|
\\
\leq \sum_{i = 1}^m \Pr[out=out'_i \mid E] \cdot \sum_{t \neq \bot}\left|\Pr[out'_i = t \mid E \land (out=out'_i)] - U(t)\strut\right|
\end{align*}
Note that $\Pr[out'_i = t \mid E \land (out=out'_i)] = \Pr[out'_i = t \mid E \land (out'_i \neq \bot)]
$. Proving the following claim ends the proof.
\begin{claim}\label{claim:sampler_claim_2}
For $t \neq \bot$, $\Pr[out'_i = t \mid E \land (out'_i \neq \bot)] = \frac{1}{C} = U(t)$
\end{claim}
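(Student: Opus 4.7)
}
The plan is to condition on the outcome of the pre-computed FPRAS estimates in the $i$-th call to \textsc{TraceSampleCore}$^*$, and show that given this information the output is uniform. Let $\mathcal{F}$ denote a realization of all the values $\tilde{C}(w)$ (including $\tilde{C} = \tilde{C}(\lambda)$) computed at the start of the $i$-th call. Since \textsc{TraceSampleCore}$^*$ retrieves these pre-computed values during sampling, the while-loop becomes a simple prefix-extension process whose transition probabilities are fully determined by $\mathcal{F}$.

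First, I would observe that for any trace $t$ with $\nf{t} = t_1\cdots t_n$, the probability that the while loop constructs $u_n = t$ conditional on $\mathcal{F}$ is exactly
$$
\phi_t(\mathcal{F}) \;=\; \prod_{j=1}^{n} \frac{\tilde{C}(t_1\cdots t_j)}{\sum_{w \in \extn(t_1\cdots t_{j-1})} \tilde{C}(w)},
$$
and moreover the variable $\phi$ at the end of the loop is precisely $\phi_t(\mathcal{F})$ on this event. Then the rejection step on Line~\ref{line:rejection} accepts with probability $1/(2\phi_t(\mathcal{F})\tilde{C}(\mathcal{F}))$, so
$$
\Pr[out'_i = t \mid \mathcal{F}] \;=\; \phi_t(\mathcal{F}) \cdot \frac{1}{2\phi_t(\mathcal{F})\tilde{C}(\mathcal{F})} \;=\; \frac{1}{2\tilde{C}(\mathcal{F})}.
$$
The crucial point is that this quantity is \emph{independent of $t$}. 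Under the event $E$, Claim~\ref{claim:sampler_claim_1} guarantees $\phi_t(\mathcal{F}) \geq 1/(2\tilde{C}(\mathcal{F}))$ for every valid $t$, so the acceptance probability is a genuine probability in $[0,1]$ and no rejection arises from the ``$\phi \geq 1/(2\tilde{C})$'' guard.

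Finally, I would aggregate over realizations of $\mathcal{F}$ consistent with $E$:
$$
\Pr[out'_i = t \wedge E] \;=\; \sum_{\mathcal{F} \in E} \Pr[\mathcal{F}]\cdot \frac{1}{2\tilde{C}(\mathcal{F})} \;=:\; K,
$$
which does not depend on $t$. Summing over all $t \in \quot{\treq}{L_n(\calA)}$ gives $\Pr[out'_i \neq \bot \wedge E] = C\cdot K$, and conditional on $E$ and $out'_i \neq \bot$ the output is therefore uniform:
$$
\Pr[out'_i = t \mid E \wedge (out'_i \neq \bot)] \;=\; \frac{K}{C\cdot K} \;=\; \frac{1}{C} \;=\; U(t).
$$
The only subtle step is the conditioning on $\mathcal{F}$: I would formalize it by noting that the internal randomness of the while loop and of the rejection step is independent of the FPRAS randomness used to produce $\mathcal{F}$, so the factorization $\Pr[out'_i=t\mid \mathcal{F}] = \phi_t(\mathcal{F})\cdot 1/(2\phi_t(\mathcal{F})\tilde{C}(\mathcal{F}))$ is rigorous. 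This is the one place where a careful (but routine) argument is required; everything else is an algebraic cancellation ensured by the design of the rejection step.
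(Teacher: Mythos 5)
Your proposal is correct and takes essentially the same route as the paper's proof: both condition on the pre-computed FPRAS estimates (the paper on $\tilde{C}=k$ and the realized values $\phi_1,\dots,\phi_n$, you on the full realization $\mathcal{F}$), observe that the telescoping product of extension probabilities cancels against the acceptance probability $1/(2\phi\tilde{C})$ to leave $1/(2\tilde{C})$ independent of $t$ (with Claim~\ref{claim:sampler_claim_1} guaranteeing under $E$ that the rejection step is reached and well defined), and conclude by normalizing over the $C$ possible traces.
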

\begin{proof}
We drop the $i$ subscript. It is enough to show that, for $t \neq \bot$, $\Pr[out' = t \mid E]$ equals a constant $C$ independent of $t$. Indeed, we then have
\begin{align*}
\Pr[out' = t \mid E \land (out' \neq \bot)] = \frac{\Pr[out' = t \mid E]}{\Pr[out' \neq \bot \mid E]} 
\leq \frac{C}{\Pr[out' \neq \bot \mid E]} := D
\end{align*}
If $C$ is a constant independent of $1$ then so is $D$ and therefore, since $1 = \sum_{t \neq \bot} \Pr[out' = t \mid E \land (out' \neq \bot)] = \sum_{t \neq \bot} D$, we find that $D = 1/C$. Now, to prove $\Pr[out' = t \mid E] = C$, the first observation is that $\tilde{C}$ can only take finitely many values and that its set $S$ of possible values is independent of $t$. We reuse the notations $t_j$, $\phi_j$ from Claim~\ref{claim:sampler_claim_1}. At the end of the $j^\text{th}$ iteration of the while loop, we have $\phi = \phi_1\dots \phi_j$. Recall that in \textsc{TraceSampleCore}$^*$ the $\phi_j$ are computed before the sampling procedure starts. Let $S(k)$ be the finite set of all $n$-tuples $(l_1,\dots,l_n)$ such that $\Pr[(\phi_1 = l_1) \land \dots \land (\phi_n = l_n) \land (\tilde{C} = k) \land E]$. When $E$ occurs, every $\phi_j$ is different from $0$ and thus $\Pr[(u_n = t) \land (\phi_1 = l_1) \land \dots \land (\phi_n = l_n) \land (\tilde{C} = k) \land E] \neq 0$. For $k \in S$ and $(l_1,\dots,l_n) \in S(k)$ we have the following.
$$
\Pr[out' = t \mid (u_n = t) \land (\phi_1 = l_1) \land \dots \land (\phi_n = l_n) \land (\tilde{C} = k) \land E] = \frac{1}{2k\prod_{i = 1}^n l_i}
$$
and
$$
\Pr[u_i = t_i \mid (u_{i-1} = t_{i-1}) \land (\phi_1 = l_1) \land \dots \land (\phi_n = l_n) \land (\tilde{C} = k) \land E] = l_i
$$
Thus
\begin{align*}
\Pr[out' = t\mid E] 
= \sum_{k \in S} \Pr[\tilde{C} = k \mid E] \sum_{(l_1,\dots,l_n) \in S(k)} 
\frac{\Pr[(\phi_1 = l_1) \land \dots \land (\phi_n = l_n) \mid (\tilde{C} = k) \land E] }{2k\prod_{i = 1}^n l_i}
\\ 
\cdot \prod_{i = 1}^n \Pr[u_i = t_i \mid  (u_i-1 = t_i-1) \land (\phi_1 = l_1) \land \dots \land (\phi_n = l_n) \land  (\tilde{C} = k) \land E]
\\
= \sum_{k \in S} \Pr[\tilde{C} = k \mid E] \sum_{(l_1,\dots,l_n) \in S(k)} 
\frac{\Pr[(\phi_1 = l_1) \land \dots \land (\phi_n = l_n) \mid (\tilde{C} = k) \land E] }{2k}
= \sum_{k \in S} \frac{\Pr[\tilde{C} = k \mid E]}{2k} 
\end{align*}
This shows that $\Pr[out' = t \mid E]$ is a constant $C$ independent of $t$.
\end{proof} 
\end{proof}

\section{Proofs from \secref{fpras}}

\probaFirstOrder*
\begin{proof}
By induction on $q$. \\

\noindent
\underline{ Base case.}
If $q = q_I$ then $\vY{r}{\emptyset}{1}{q_I} = \{\eqcl{\lambda}\} = \quot{\treq}{L(q)}$ so, $\Pr[\eqcl{\lambda} \in \vY{r}{\emptyset}{1}{q_I}] = 1$ and we are done. \\

\noindent
\underline{ Inductive case.}
Now suppose $q\neq q_I$ and that the lemma holds for $q$'s predecessors.  We use the same notations as in the definitions of $\vY{r}{h}{v}{q}$. The $\union$ procedure ensures that, for every $t \in \quot{\treq}{L(q)}$, there is a unique predecessor $q_i$ of $q$ and a unique $t'\in \quot{\treq}{L(q_i)}$ such that $t \in \vZ{r}{h}{q}$ if only if $t' \in \reduce(\vY{r}{h_i}{h(q_i)}{q_i},\textstyle\frac{h(q_i)}{m_h(q)})$. Let $v_i =h(q_i)$ for convenience 
\begin{align*}
\Pr\left[t \in \vZ{r}{h}{q}\right] = \Pr\left[t' \in \reduce\left(\vY{r}{h_i}{v_i)}{q_i},\frac{v_i}{m_h(q)}\right)\right] 
= \frac{v_i}{m
_h(q)}\Pr\left[t' \in \vY{r}{h_i}{v_i}{q_i}\right]
\end{align*} 
Using the induction hypothesis, we find that $\Pr\left[t \in \vZ{r}{h}{q}\right] = m_h(q)^{-1}$. Finally,
\begin{align*}
\Pr\left[t \in \vY{r}{h}{v}{q}\right] = \Pr\left[t \in \reduce(\vZ{r}{h}{q},\frac{m_h(q)}{v})\right] &
= \frac{m_h(q)}{v} \Pr\left[t \in \vZ{r}{h}{q}\right] = v^{-1}
\end{align*} 
\end{proof}

\probaSecondOrder*
\begin{proof}
$$
\Pr[t \in \vY{r}{h}{v}{q},\, t' \in \vY{r}{h'}{v'}{q'}] =
\Pr\left[t \in \reduce\left(\vZ{r}{h}{q},\frac{m_h(q)}{v}\right),\, t' \in \reduce\left(\vZ{r}{h'}{q'},\frac{m_{h'}(q')}{v'}\right)\right] 
$$
The events of surviving of different elements surviving $\reduce$ procedures are independent. Thus
$$
\Pr[t \in \vY{r}{h}{v}{q},\, t' \in \vY{r}{h'}{v'}{q'}] =
\frac{m_h(q)m
_{h'}(q')}{v'v}\Pr[t \in \vZ{r}{h}{q}\text{ and } t' \in \vZ{r}{h'}{q'}] 
$$
Therefore, it suffices to show the bound on $\Pr[t \in \vZ{r}{h}{q}\text{ and } t' \in \vZ{r}{h'}{q'}]$ to prove the lemma. There exists a unique $q_i \in \preds(q)$ (resp. $q'_j \in \preds(q')$) a unique trace $t_i \in \quot{\treq}{L(q_i)}$ (resp. $t'_j \in \quot{\treq}{L(q'_j)}$ such that $t \in \vZ{r}{h}{q}$ (resp. $t' \in \vZ{r}{h'}{q'}$) if and only if $t_i \in \reduce(\vY{r}{h_i}{v_i}{q_i},\frac{v_i}{m_h(q)})$ (resp. $t'_j \in \reduce(\vY{r}{h'_j}{v'_j}{q'_j},\frac{v'_j}{m_{h'}(q')})$), where $h_i$ (resp. $h'_j$) is the restriction of $h$ (resp. $h'$) to the ancestors of $q_i$ (resp. $q'_j$) and $v_i = h(q_i)$ (resp. $v'_j = h'(q'_j)$).
\begin{align*}
\Pr[t \in \vZ{r}{h}{q} \text{ and } t' \in \vZ{r}{h'}{q'}] \\
= \Pr\left[t_i \in \reduce\left(\vY{r}{h_i}{v_i}{q_i},\frac{v_i}{m_h(q)}\right) \text{ and }
t'_j \in \reduce\left(\vY{r}{h'_j}{v'_j}{q'_j},\frac{v'_j}{m_{h'}(q')}\right)\right]
\end{align*}
Since $\reduce$ are independent,
\begin{align}\label{proba:p}
\Pr[t \in \vZ{r}{h}{q} \text{ and } t' \in \vZ{r}{h'}{q'}] = \frac{v_iv'_j}{m_h(q)m_{h'}(q')}\Pr[t_i \in \vY{r}{h_i}{v_i}{q_i} \text{ and }
t'_j \in \vY{r}{h'_j}{v'_j}{q'_j}]
\end{align}
Now we proceed by induction on $k$ to prove the upper bound on $\Pr[t \in \vZ{r}{h}{q} \text{ and } t' \in \vZ{r}{h'}{q'}]$.\\

\noindent
\underline{Base case.} If $k = 0$, then $q_i = q'_j = q_I$ and thus $v_i = v'_j = 1$, $h_i = h'_j =\emptyset$ and $h(q_I) = 1$. $q_I$ is the divergence state of $\run(t,q)$ and $\run(t',q')$ and~(\ref{proba:p}) reduces to $(m_h(q)m_{h'}(q'))^{-1} = h(q_I)(m_h(q)m_{h'}(q'))^{-1}$. So the bound holds in this case. For the inductive, we consider two cases.
\\

\noindent
\underline{Inductive case when $(t_i,q_i) = (t'_j,q'_j)$.} \\
Since $h$ and $h'$ are compatible, we have $h_i = h'_j$ and $v_i = v'_j$. So, using Lemma~\ref{lemma:proba_first_order},  ~(\ref{proba:p}) boils down to 
\begin{align*}
\frac{v_iv'_j}{m_h(q)m_{h'}(q')}\Pr[t'_j  \in \vY{r}{h'_j}{v'_j}{q'_j}] = \frac{v_i}{m_h(q)m_{h'}(q')}
\end{align*}
Since $(t,q) \neq (t',q')$ and $(t_i,q_i) = (t'_j,q'_j)$, the divergence state of $\run(t,q)$ and $\run(t',q')$ is $q^* = q_i$. So the bound holds in this case.
\\

\noindent
\underline{Inductive case when $(t_i,q_i) \neq (t'_j,q'_j)$.} 
\\Since $q_i$ and $q'_j$ both belong to $\calQ^{k-1}$ and $k > 1$, we use the induction hypothesis to bound~(\ref{proba:p}) from above by
\begin{align*}
\frac{v_iv'_j}{m_h(q)m_{h'}(q')} \cdot\frac{h(q^*)}{v_iv'_j} = \frac{h(q^*)}{m_h(q)m_{h'}(q')}
\end{align*}
where $q^*$ is the divergence state of $\run(t_i,q_i)$ and $\run(t'_j,q'_j)$. It remains to argue that $q^*$ is also the divergence state of $\run(t,q)$ and $\run(t',q')$, which is clear since $\run(t,q)$ extend $\run(t_i,q_i)$ and $\run(t',q')$ extends $\run(t'_j,q'_j)$. So the bound holds in this case.
\end{proof}

\coupling*

\begin{proof}
For $A$ an algorithm and $X_1,\dots,X_k$ random variables in $A$, with $\Omega_i$ the universe of $X_i$, we denote by
$$
\Pr_A\left[\bigcap_{l = 1}^k X_l = \omega_l\right]
$$
the probability that, after running $A$ (which we assume terminates), we have $X_i = \omega_i$ for every $i$ (with $\omega_i \in \Omega_i$). We rename $\TraceMCCore^*$ $A_1$. We make a sequence of modifications to $A_1$ obtaining algorithms $A_2,A_3,\dots$. For $X_1,\dots,X_k$ random variables in $A_i$ and $Y_1,\dots,Y_k$ are random variables in $A_j$ such that $X_i$ and $Y_i$ have the same universe $\Omega_i$, we write
$$
(X_1,\dots,X_k)_{A_i} \sim (Y_1,\dots,Y_k)_{A_j}
$$
to say that, for every $(\omega_1,\dots,\omega_k) \in \Omega_1 \times \dots \times \Omega_k$, we have
$$
\Pr_{A_i}\left[\bigcap_{l = 1}^k X_l = \omega_l\right] =  \Pr_{A_j}\left[\bigcap_{l = 1}^k Y_l = \omega_l\right]
$$
We may have variables that have the same name in $A_i$ and $A_j$, say $X_1,\dots,X_k$. It should be clear that $
(X_1,\dots,X_k)_{A_i} \sim (X_1,\dots,X_k)_{A_j}
$ means that the lefthand side variables are considered in $A_i$ and the righthand side variables are considered in $A_j$.

The algorithm modifications are all on done in $\estimateAndSample$. We start from the following.

\begin{algorithm}[H]\caption*{$\estimateAndSample(q)$ in $A_1$}
\begin{algorithmic}[1]
\State compute $\hat{S}^r(q)$ for all $r$ using $\{ S^r(q') \mid q' \in \preds(q)\}$
\State compute $N(q)$ using $\{\hat{S}^r(q)\}_r$
\State compute $S^r(q)$ for all $r$ using $N(q)$ and $\hat{S}^r(q)$
\end{algorithmic}
\end{algorithm}

For every state $q$, we keep the history for the ancestors of $q$. Formally, we maintain a variable $H(q)$ for every state $q$. Initially $H(q)$ is empty for all $q$. After $N(q)$ is computed, $H(q')$ is updated for all descendants $q'$ of $q$ as follow: $H(q') = H(q') \cup (p \mapsto N(q))$. Thus, when we reach $\estimateAndSample(q')$, $H(q')$ is the history for $q'$. Clearly, the variables $H(q)$ are unused for the computation of the other random variables in $A_1$.
 
\begin{algorithm}[H]\caption*{$\estimateAndSample(q)$ in $A_1$} 
\begin{algorithmic}[1]
\State compute $\hat{S}^r(q)$ for all $r$ using $\{ S^r(q') \mid q' \in \preds(q)\}$
\State compute $N(q)$ using $\{\hat{S}^r(q)\}_r$ \textcolor{red}{and update $H$}
\State compute $S^r(q)$ for all $r$ using $N(q)$ and $\hat{S}^r(q)$
\end{algorithmic}
\end{algorithm}

In $A_2$, for every realizable history $h$ for $q$, and every $r \in [\nsnt]$ and for every $v$ that is a possible candidate for $N(q)$, we have sets $\hat{S}^r_h(q)$ and $S^r_{h,v}(q)$. Initially these new sets are empty. When the $A_2$ computes the value for $N(q)$, $H(q)$ has already been set. Once $\hat{S}^r(q)$ is computed, we copy its content in $\hat{S}^r_{H(q)}(q)$ and, once $S^r(q)$ is computed, we copy its content in $S^r_{H(q),N(q)}(q)$. $A_1$ and $A_2$ construct the random variables $N(q)$, $S^r(q)$, $\hat S^r(q)$ and $H(q)$ in the same way so
$$
(H(q),N(q),(S^r(q),\hat S^r(q))_{r \in [\alpha]})_{A_1} \sim 
(H(q),N(q),(S^r(q),\hat S^r(q))_{r \in [\alpha]})_{A_2}
$$
We also have that 
$$
(H(q),N(q),(S^r(q),\hat S^r(q))_{r \in [\alpha]})_{A_2}
\sim
(H(q),N(q),(S^r_{H(q),N(q)}(q),\hat S^r_{H(q)}(q))_{r \in [\alpha]})_{A_2}
$$
\begin{algorithm}[H]\caption*{$\estimateAndSample(q)$ in $A_2$}
\begin{algorithmic}[1]
\State compute $\hat{S}^r(q)$ for all $r$ using $\{ S^r(q') \mid q' \in \preds(q)\}$
\State \textcolor{red}{copy $\hat{S}^r(q)$ to $\hat{S}^r_{H(q)}(q)$ for all $r$}
\State compute $N(q)$ using $\{\hat{S}^r(q)\}_r$ and update $H$
\State compute $S^r(q)$ for all $r$ using $N(q)$ and $\hat{S}^r(q)$
\State \textcolor{red}{copy $S^r(q)$ to $S^r_{H(q),N(q)}(q)$ for all $r$}
\end{algorithmic}
\end{algorithm}

We are going to dedfine $A_3,A_4,A_5,A_6$ and $A_7$ such that, for every $3 \leq i \leq 7$,
\begin{equation}\label{eq:same_distrib}
(H(q),N(q),(S^r_{H(q),N(q)}(q),\hat S^r_{H(q)}(q))_{r \in [\alpha]})_{A_{i-1}}
\sim
(H(q),N(q),(S^r_{H(q),N(q)}(q),\hat S^r_{H(q)}(q))_{r \in [\alpha]})_{A_i}
\end{equation}

For a fixed $q$, to compute $N(q)$, $\{\hat S^r(q)\}_r$, $A_2$ uses $\{p(q')\}_{q' \in \preds(q)}$ and $\{\{S^r(q')\}_r\}_{q' \in \preds(q)}$. But at that point, $S^r_{H(q'),N(q')}(q')$ is indentical to $S^r(q')$ so we can swap them. Similarly, to compute $\{S^r(q)\}_r$, the algorithm uses $N(q)$ and $\{\hat S^r(q)\}_r$. But at that point, $\hat S^r(q)$ and $\hat S^r_{H(q)}(q)$ are indentical.$A_3$ instead does the computation with the $S^r_{H(q'),N(q')}(q')$ and the $\hat S^r_{H(q)}(q)$
and then copies the result in $\hat S^r(q)$ and $S^r(q)$. Since the swapped sets are identical, Eq~(\ref{eq:same_distrib}) holds for $i = 3$.

\begin{algorithm}[H]
\caption*{$\estimateAndSample(q)$ in $A_3$}
\begin{algorithmic}[1]
\State compute $\hat{S}^r(q)$ for all $r$ using $\{\textcolor{red}{S^r_{H(q'),p(q')}(q')} \mid q' \in \preds(q)\}$
\State copy $\hat{S}^r(q)$ to $\hat{S}^r_{H(q)}(q)$ for all $r$
\State compute $N(q)$ using $\{\textcolor{red}{\hat{S}^r_{H(q)}(q)}\}_r$ and update $H$
\State compute $S^r(q)$ for all $r$ using $N(q)$ and \textcolor{red}{$\hat{S}_{H(q)}^r(q)$}
\State copy $S^r(q)$ to $S^r_{H(q),N(q)}(q)$ for all $r$
\end{algorithmic}
\end{algorithm}

\noindent In $A_4$ we swap $\hat{S}^r_{H(q)}(q)$ with $\hat S^r(q)$ and $S^r_{H(q),N(q)}(q)$ and $S^r(q)$. That is, first compute $\hat{S}^r_{H(q)}(q)$ (resp. $S^r_{H(q),N(q)}(q)$) and then copy its content to $\hat{S}^r(q)$ (resp. $S^r(q)$). Again, compared to $A_3$ we are just swapping the roles of sets that are anyway indentical, so Eq~\ref{eq:same_distrib} holds for $i = 4$.

\begin{algorithm}[H]\caption*{$\estimateAndSample(q)$ in $A_4$}
\begin{algorithmic}[1]
\State compute \textcolor{red}{$\hat{S}^r_{H(q)}(q)$} for all $r$ using $\{S^r_{H(q'),p(q')}(q') \mid q' \in \preds(q)\}$
\State  compute $N(q)$ using $\{\hat{S}^r_{H(q)}(q)\}_r$ and update $H$
\State  compute \textcolor{red}{$S^r_{H(q),N(q)}(q)$} for all $r$ using $\hat{S}^r_{H(q)}(q)$
\State  copy \textcolor{red}{$S^r_{H(q),N(q)}(q)$ to $S^r(q)$} for all $r$
\State  copy \textcolor{red}{$\hat{S}^r_{H(q)}(q)$ to $\hat{S}^r(q)$} for all $r$
\end{algorithmic}
\end{algorithm}

$A_4$ does not touch any sets $\hat{S}^r_{h}(q)$ or $S^r_{v,h}(q)$ when $h \neq H(q)$ and $v \neq N(q)$. $A_5$ computes $\hat{S}^r_{H(q)}(q)$ and $S^r_{N(q),H(q)}(q)$ like $A_4$ -- which ensures Eq~(\ref{eq:same_distrib}) holds for $i = 5$ -- but also computes all other $\hat{S}^r_{h}(q)$ and $S^r_{h,t}(q)$. Say $\preds(q) = (q_1,\dots,q_k)$ then $A_5$ sets $m_h(q) = \max(h(q_1),\dots,h(q_k)$ and $\hat{S}^r_{h}(q) = \union(q,\reduce(S^r_{h_1,h(q_1)}(q_1),h(q_1)/m_h(q)),\dots,\reduce(S^r_{h_k,h(q_k)}(q_k),h(q_k)/m_h(q)))$ and $S^r_{h,v}(q) = \reduce(\hat{S}^r_{h}(q),m_h(q)/v)$, where $h_i$ denote the restriction of $h$ to the ancestors of $q_i$.

\begin{algorithm}[H]\caption*{$\estimateAndSample(q)$ in $A_5$}
\begin{algorithmic}[1]
\State  compute $\hat{S}^r_{H(q)}(q)$ for all $r$ using $\{ S^r_{H(q'),p(q')}(q') \mid q' \in \preds(q)\}$
\State \textcolor{red}{compute $\hat{S}^r_{h}(q)$ for all $r$ and $h$ using $\{ \{S^r_{h,v}(q') \}_{r,h,v} \mid q' \in \preds(q)\}$}
\State  compute $N(q)$ using $\{\hat{S}^r_{H(q)}(q)\}_r$ and update $H$
\State compute $S^r_{H(q),N(q)}(q)$ for all $r$ using $\hat{S}^r_{H(q)}(q)$
\State \textcolor{red}{compute $S^r_{h,v}(q)$ for all $r$ and all $(h,v) \neq (H(q),N(q))$ using $\{\hat{S}^r_{h}(q)\}_h$}
\State copy $S^r_{H(q),N(q)}(q)$ to $S^r(q)$ for all $r$
\State copy $\hat{S}^r_{H(q)}(q)$ to $\hat{S}^r(q)$ for all $r$
\end{algorithmic}
\end{algorithm}

\noindent But then, in $A_5$, the variables $\hat{S}^r_{H(q)}(q)$ and $S^{r}_{H(q),N(q)}(q)$ are computed like any other $\hat S^r_h$ and $S^r_{h,v}$. So we define $A_6$ that directly computes all variables uniformly and have that Eq~(\ref{eq:same_distrib}) holds for $i = 6$

\begin{algorithm}[H]\caption*{$\estimateAndSample(q)$ in $A_6$}
\begin{algorithmic}[1]
\State compute $\hat{S}^r_{h}(q)$ for all $r$ \textcolor{red}{and all $h$} using $\{ S^r_{h',p(q')}(q')\}_{h',q' \in \preds(q)}$
\State compute $N(q)$ using $\{\hat{S}^r_{H(q)}(q)\}_r$ and update $H$
\State compute $S^r_{h,v}(q)$ for all $r$, \textcolor{red}{and all $h$ and $v$} using $\{\hat{S}^r_{h}(q)\}_h$
\State copy $S^r_{H(q),N(q)}(q)$ to $S^r(q)$ for all $r$
\State copy $\hat{S}^r_{H(q)}(q)$ to $\hat{S}^r(q)$ for all $r$
\end{algorithmic}
\end{algorithm}

\noindent In $A_6$, line 3 does not depend on line 2 so we can swap them, thus obtaining $A_7$. Clearly, Eq~(\ref{eq:same_distrib}) holds for $i = 7$.

\begin{algorithm}[H]\caption*{$\estimateAndSample(q)$ in $A_7$}
\begin{algorithmic}[1]
\State compute $\hat{S}^r_{h}(q)$ for all $r$ and all $h$ using $\{ S^r_{h',p(q')}(q')\}_{h',q' \in \preds(q)}$
\State compute $S^r_{h,v}(q)$ for all $r$, $h$ and $v$ using $\{\hat{S}^r_{h}(q)\}_h$
\State compute $N(q)$ using $\{\hat{S}^r_{H(q)}(q)\}_r$ and update $H$
\State copy $S^r_{H(q),N(q)}(q)$ to $S^r(q)$ for all $r$
\State copy $\hat{S}^r_{H(q)}(q)$ to $\hat{S}^r(q)$ for all $r$
\end{algorithmic}
\end{algorithm}

Finally we let $A_8$ be the same as $A_7$ without line 3, 4 and 5. Since lines 1 and 2 (for $q$) do not depend on the execution of lines 3,4,5 (for ancestors of $q$) we have that
$$
((S^r(q),\hat S^r(q))_{r \in [\alpha]})_{A_7}
\sim
((S^r_{H(q),N(q)}(q),\hat S^r_{H(q)}(q))_{r \in [\alpha]})_{A_8}
$$
\begin{algorithm}[H]\caption*{$\estimateAndSample(q)$ in $A_8$}
\begin{algorithmic}[1]
\State compute $\hat{S}^r_{h}(q)$ for all $r$ and all $h$ using $\{ S^r_{h',p(q')}(q')\}_{h',q' \in \preds(q)}$
\State compute $S^r_{h,v}(q)$ for all $r$, $h$ and $v$ using $\{\hat{S}^r_{h}(q)\}_h$
\end{algorithmic}
\end{algorithm}

We see that $A_8$ is exactly the random process, so
$$
\Pr_{A_8}[e(\hat S^1_h(q),\dots,\hat S^\nsnt_h(q))] = \Pr[e(\vZ{1}{h}{q},\dots,\vZ{\nsnt}{h}{q})]
$$
and
$$
\Pr_{A_8}[e(S^1_{h,v}(q),\dots,S^\nsnt_{h,v}(q))] = \Pr[e(\vY{1}{h}{v}{q},\dots,\vZ{\nsnt}{h}{v}{q})]
$$
Now, we have shown that 
$$
(H(q),N(q),(S^r(q),\hat S^r(q))_{r \in [\alpha]})_{A_1}
\sim
(H(q),N(q),(S^r_{H(q),N(q)}(q),\hat S^r_{H(q)}(q))_{r \in [\alpha]})_{A_7}
$$
It follows that 
$$
(H(q),N(q),(S^r(q))_{r \in [\alpha]})_{A_1}
\sim
(H(q),N(q),(S^r_{H(q),N(q)}(q))_{r \in [\alpha]})_{A_7}
$$
and 
$$
(H(q),(\hat S^r(q))_{r \in [\alpha]})_{A_1}
\sim
(H(q),(\hat S^r_{H(q)}(q))_{r \in [\alpha]})_{A_7}
$$
Therefore 
\begin{align*}
\Pr_{A_1}[H(q) = h \text{ and } N(q) = v \text{ and } e(\hat S^1(q),\dots,\hat S^\nsnt(q))] 
\\
= 
\Pr_{A_7}[H(q) = h \text{ and } N(q) = v \text{ and } e(S^1_{h,v}(q),\dots,S^\nsnt_{h,v}(q))] 
\\
\leq \Pr_{A_7}[e(S^1_{h,v}(q),\dots,S^\nsnt_{h,v}(q))] = \Pr_{A_8}[e(S^1_{h,v}(q),\dots,S^\nsnt_{h,v}(q))] 
\\
= \Pr[e(\vY{1}{h}{v}{q},\dots,\vY{\nsnt}{h}{v}{q})]
\end{align*}
and
\begin{align*}
\Pr_{A_1}[H(q) = h \text{ and } e(\hat S^1(q),\dots,\hat S^\nsnt(q))] = 
\Pr_{A_7}[H(q) = h \text{ and } e(\hat S^1_h(q),\dots,\hat S^\nsnt_h(q))] 
\\
\leq \Pr_{A_7}[e(\hat S^1_h(q),\dots,\hat S^\nsnt_h(q))] = \Pr_{A_8}[e(\hat S^1_h(q),\dots,\hat S^\nsnt_h(q))] 
\\
= \Pr[e(\vZ{1}{h}{q},\dots,\vZ{\nsnt}{h}{q})]
\end{align*}
\end{proof}

\whiteMagicLemma*
\begin{proof}
The proof is by induction on the depth of $q$. \\

\noindent \underline{Base case.} For $q = q_I$ we have that $\quot{\treq}{L(q_I)} = \{\lambda\}$, $N(q_I) = 1$ and $S^r(q_I) = \{\lambda\}$ so $\Pr[\lambda \in S^r(q)] = 1 = \Ex[\mathbbm{1}(\lambda \in S^r(q))] = \Ex[\mathbbm{1}(\lambda \in S^r(q))\cdot N(q)]$.\\

\noindent \underline{Inductive case.} Fix $q \neq q_I$. Suppose the lemma holds for the predecessors $q_1,\dots,q_k$ of $q$. Let $w = can(t,q)$. Let $V$ be all possible values for $N(q)$ and Let $V'$ be all possible values for $N_{max}(q)$. These sets are finite. Whenever a conditional probability or expectation $\Pr[A \mid B]$ or $\Ex[A \mid B]$ is undefined because $\Pr[A \mid B] = 0$, we let $\Pr[A \mid B] = 0$ and $\Ex[A \mid B] = 0$.
\begin{align*}
\Ex&[\mathbbm{1}(w \in S^r(q)) \cdot N(q)] = \sum_{v \in V} \Ex[\mathbbm{1}(w \in S^r(q)) \cdot N(q) \mid N(q) = v]\cdot\Pr[N(q) = v] 
\\
&= \sum_{v\in V} v\cdot \Ex[\mathbbm{1}(w \in S^r(q)) \mid N(q) = v]\cdot \Pr[N(q) = v]
\\
&= \sum_{v\in V} v\cdot \Pr[w \in S^r(q) \mid N(q) = v]\cdot\Pr[N(q) = v]
\\
&= \sum_{v\in V}v\cdot \Pr[w \in \reduce(\hat S^r(q), N_{max}(q)/N(q)) \mid N(q) = v]\cdot \Pr[N(q) = v]
\\
&= \sum_{v\in V}v\cdot \Pr[w \in \reduce(\hat S^r(q), N_{max}(q)/N(q)) \mid w \in \hat S^r(q),\,N(q) = v]\cdot\Pr[w \in \hat S^r(q), \,N(q) = v]
\\
&= \sum_{v\in V} \sum_{v' \in V'} v\cdot \Pr[w \in \reduce(\hat S^r(q), N_{max}(q)/N(q)) \mid w \in \hat S^r(q),\,N_{max}(q) = v',\,N(q) = v]
\\
&\qquad\cdot \Pr[w \in \hat S^r(q), \, N_{max}(q) = v',\, N(q) = v]
\\
&= \sum_{v\in V} \sum_{v' \in V'} v'\cdot\Pr[\hat S^r(q), \, N_{max}(q) = v'\, N(q) = v]
= \sum_{v' \in V'} v'\cdot \Pr[w \in \hat S^r(q), \, N_{max}(q) = v'] 
\\
&= \Ex[\mathbbm{1}(w \in \hat S^r(q)) \cdot N_{max}(q)]
\end{align*}
Let $q_j$ be the unique predecessor of $q_i$ such that $w = w'a$ and $(q_i,a,q) \in \calT^{\mathsf{u}}$ and $w' \in \quot{\treq}{L(q')}$ and $w \in \hat S^r(q)$ only if $w' \in \bar S^r(q_i,q)$. Let $V_i$ be all possible values for $N(q_i)$. This set is finite.
\begin{align*}
\Ex&[\mathbbm{1}(w \in \hat S^r(q)) \cdot N_{max}(q)]
= \sum_{v' \in V'} v'\cdot \Pr[w \in \hat S^r(q) \mid N_{max}(q) = v']\cdot \Pr[N_{max}(q) = v'] 
\\
&= \sum_{v' \in V'} v'\cdot\Pr[w' \in \bar S^r(q_i,q) \mid N_{max}(q) = v']\cdot \Pr[N_{max}(q) = v']
\\
&= \sum_{v' \in V'} v'\cdot \Pr[w' \in \reduce(S^r(q_i),N(q_i)/N_{max}(q)) \mid w' \in S^r(q_i),\, N_{max}(q) = v']
\\
&\qquad\cdot \Pr[w' \in S^r(q_i),\,N_{max}(q) = v']  
\\
&= \sum_{v' \in V'} \sum_{v_i \in V_i} v'\cdot \Pr[w' \in \reduce(S^r(q_i),N(q_i)/N_{max}(q)) \mid w' \in S^r(q_i),\,N(q_i) = v_i,\,N_{max}(q) = v']
\\
&\qquad \cdot \Pr[w' \in S^r(q_i),\,N(q_i) = v_i,\,N_{max}(q) = v']
\\
&= \sum_{v' \in V'} \sum_{v_i \in V_i} v_i \cdot \Pr[w' \in S^r(q_i),\,N(q_i) = v_i,\,N_{max}(q) = v']  
\\
&= \sum_{v_i \in V_i} v_i \cdot \Pr[w' \in S^r(q_i),\,N(q_i) = v_i,]
\\
&= \Ex[\mathbbm{1}(w' \in \hat S^r(q_i)) \cdot N(q_i)]
\end{align*}
By induction, $\Ex[\mathbbm{1}(w' \in \hat S^r(q_i)) \cdot N(q_i)] = 1$, so we are done.
\end{proof}
\end{document}